\def\BibTeX{{\rm B\kern-.05em{\sc i\kern-.025em b}\kern-.08em
    T\kern-.1667em\lower.7ex\hbox{E}\kern-.125emX}}
\newtheorem{theorem}{Theorem}[section]
\newtheorem{corollary}{Corollary}[theorem]
\newtheorem{proposition}{Proposition}[section]
\newtheorem{lemma}{Lemma}
\DeclareMathOperator*{\argmax}{arg\,max}
\DeclareMathOperator*{\argmin}{arg\,min}
\newcommand{\Cret}[0]{c_f}
\newcommand{\El}[0]{\Lambda}
\newcommand{\knn}[0]{k\mathrm{NN}}
\newcommand{\norm}[1]{\left\lVert#1\right\rVert}
\newcommand{\OMA}[0]{\mathrm{OMA}}
\newcommand{\Regret}[2]{{\mathrm{Regret}_{T, #1}} (#2)}
\newcommand{\convX}[0]{\mathrm{conv}(\mathcal{X})}
\renewcommand{\vec}[1]{\bm{\mathbf{#1}}}
\newcommand{\x}[0]{\vec{x}}
\newcommand{\y}[0]{\vec{y}}
\newcommand{\z}[0]{\vec{z}}
\newcommand{\DepRound}[0]{\textproc{DepRound}}
\newcommand{\Round}[0]{\textproc{CoupledRounding}}
\newcommand{\Simplify}[0]{\textproc{Simplify}}
\newcommand{\lru}[0]{\textproc{LRU}}
\newcommand{\lfu}[0]{\textproc{LFU}}
\newcommand{\simlru}[0]{\textproc{SIM-LRU}}
\newcommand{\rndlru}[0]{\textproc{RND-LRU}}
\newcommand{\clslru}[0]{\textproc{CLS-LRU}}
\newcommand{\qcache}[0]{\textproc{QCache}}
\newcommand{\knntext}{$k$NN}
\newcommand{\acai}{\textproc{A\c{C}AI}}
\newcommand{\roundInterval}{M}
\newcommand{\BigO}[1]{\ensuremath{\operatorname{\mathcal O}\left(#1\right)}}
\newcommand{\card}[1]{\left|#1\right|}
\DeclarePairedDelimiter\floor{\lfloor}{\rfloor}
\newcommand{\removeextra}{1}
\newcommand{\extended}{1}
\newcommand{\Old}[1]{}
\newcommand{\New}[1]{#1}
\newcommand{\nnew}[1]{\textcolor{black}{#1}}
\newcommand{\oold}[1]{\textcolor{blue}{}}
\newcolumntype{C}[1]{>{\raggedright\arraybackslash}p{#1}}
\newcommand{\new}[1]{\ifnum\removeextra=0#1\else#1\fi}
\newcommand{\old}[1]{}
\begin{document}
\title{Ascent Similarity Caching \\ with Approximate Indexes}

\author{\IEEEauthorblockN{Tareq Si Salem}
\IEEEauthorblockA{{Universit\'e C\^ote d'Azur, Inria}\\
\texttt{ tareq.si-salem@inria.fr}}
\and
\IEEEauthorblockN{Giovanni Neglia}
\IEEEauthorblockA{{Inria, Universit\'e C\^ote d'Azur}\\
\texttt{giovanni.neglia@inria.fr}}
\and

\IEEEauthorblockN{Damiano Carra}
\IEEEauthorblockA{{University of Verona}\\
\texttt{damiano.carra@univr.it}}
}

\maketitle
\thispagestyle{plain}
\pagestyle{plain}

\begin{abstract}
    Similarity search is a key operation in multimedia retrieval systems and recommender systems, and it will play an important role also for future machine learning and augmented reality applications.
    When these systems need to serve large objects with tight delay constraints, edge servers close to the end-user can operate as \emph{similarity caches} to speed up the retrieval.
    In this paper we present \acai, a new similarity caching policy {which} improves on the state of the art  by  using (i) an (approximate) index for the whole catalog to decide which objects to serve locally and which to retrieve from the remote server, and (ii) a mirror ascent algorithm to update the set of local objects with strong guarantees even when the request process does not exhibit any statistical regularity. 
\end{abstract}
\section{Introduction}
\label{sec:introduction}

Mobile devices can enable \oold{a} rich interaction with the environment people are in. Applications such as object recognition or, in general, augmented reality, require {to process} and {retrieve}  in real time \oold{a set of}   information related to the content visualized by the camera. The logic behind such applications is very complex: although mobile devices' computational power and memory constantly increase, they may not be sufficient to run these sophisticated logics, especially considering the associated energy consumption. On the other hand, sending the data to the cloud to be processed introduces additional delays that may be undesirable or simply intolerable~\cite{chen15glimpse}.
Edge Computing \cite{zhou2019edge,venugopal2018shadow} solves this dichotomy by providing distributed computational and memory resources close to the users. Mobile devices may pre-process locally the data and send the requests to the closest edge server, which runs the application logic and provides quickly the answers.

Augmented reality applications often require {to identify}  similar objects: for example\nnew{,} an image (or an opportune encoding of it) can be sent as a query, and the application logic finds similar objects to be returned to the user~\cite{drolia2017cachier,drolia2017precog,guo2018foggycache,venugopal2018shadow}. For instance, a recommendation system may suggest similar  products to a user browsing shop windows in a mall, or similar artists to a user enjoying \oold{a} street \oold{artwork} \nnew{art}. The search for similar objects is based on  a $k$-nearest neighbor (\knntext) search in an opportune metric space~\cite{bellet2015metric}.
The flexibility of \knntext{} search comes at the cost of (i) high computational complexity in case of high dimensional spaces, and (ii) large memory required to store the instances. The first issue has been solved in recent years with a set of techniques used to index the collection of objects that provide \emph{approximate answers} to \knntext{} searches, i.e., they trade accuracy for speed. Searches over large catalogs (billions of entries) in high dimensional spaces may be executed now in less than a millisecond~\cite{johnson2019billion}. Still, the issue \oold{about} \nnew{of} the memory required to store the objects \oold{remain} \nnew{persists}, especially in a distributed edge computing scenario, as edge servers have limited memory resources compared to the cloud.

\oold{The selection of which objects a specific edge server should maintain remains an open issue.}
\nnew{Due to memory constraints, edge servers can be forced to store a subset of the objects in the catalog, but object selection is not an easy task.}
 Requests coming from the users often exhibit spatial and temporal correlation---e.g., the same augmented reality application  will recover different information in different areas, and this information can change over time as the environment changes and users' interests evolve. This observation suggests that we may use the request pattern to drive the object selection. In other words, the edge server can be viewed as a \emph{cache} that contains the set of objects
needed to \oold{efficiently} \nnew{adequately} respond to local requests while avoiding forwarding them to the cloud.

In this paper, we study how to optimize the memory usage of the edge server for similarity searches. To this aim, we consider the \emph{costs} associated \oold{to} \nnew{with} the replies, which capture both the quality of the reply (that is how similar/dissimilar to the request the objects provided are), as well as system costs like the delay experienced, the load on the server or \oold{on} the network.
\oold{The aim of our study is} \nnew{Our study aims} to design an online algorithm to \emph{minimize such costs}. We provide the following contributions:
\begin{itemize}
  \item We formulate the problem of \knntext{} optimal caching taking into account both dissimilarity costs and system costs.
  \item We propose a new similarity caching policy, \acai{} (Ascent Similarity Caching with Approximate Indexes), that \Old{1)} \New{(i)} relies on fast, approximate similarity search indexes to decide which objects to serve from the local datastore and which ones from the remote repository and \Old{2)} \New{(ii)} uses an online mirror ascent algorithm to update the \oold{cache} \nnew{cached} content \oold{in order} to minimize the total service cost. \acai{} offers strong theoretical guarantees without any assumption on the traffic arrival pattern.
  \item We compare our solution with state-of-the-art algorithms for similarity caching and show that \acai{} consistently  improves over all of them under realistic traces.
\end{itemize}

\noindent
The remainder of the paper is organized as follows: we present similarity caches in Sec.~\ref{sec:bg_similarity} and other relevant background in Sec.~\ref{sec:bg_other}. We introduce \acai{} in Sec.~\ref{sec:acai} and our experimental results in Sec.~\ref{sec:experiments}. \ifnum\extended=0 The reviewer may consult an extended version available online~\cite{sisalem20techrep}. It contains additional results and more detailed proofs.\fi The notation used across the paper is summarized in Table~\ref{tab:notations_definitions}. The proofs and other technical details can be found in the Supplementary material. \New{This work is an extension of our previous work~\cite{salem2021accai}. In particular, we provide \nnew{{(i)}} the regret bound when the state of the system is only updated every $M$ requests, {(ii)} a new rounding mechanism, and {(iii)} \nnew{the} \nnew{complete} proofs.}

 \section{Similarity Caches}
\label{sec:bg_similarity}
Consider a remote server that stores a catalog of objects $\mathcal{N} \triangleq \{1,2, \dots, N\}$. A similarity search request $r$ aims at finding the $k$ objects $o_1, o_2, \dots, o_k \in \mathcal{N}$ that are most similar to $r$  given an application-specific definition of \emph{similarity}. To this purpose, similarity search systems rely on a function $c_d(r,o) \in \mathbb R_{\geq 0}$, \oold{that} \nnew{which} quantifies the dissimilarity of a request $r$ and an object $o$. We call such \nnew{a} function the \emph{dissimilarity cost}.

In practice, objects and requests are mapped to vectors in $\mathbb R^d$ (called \emph{embeddings}), so that the dissimilarity cost can be represented as (a function of) a selected distance between the corresponding embeddings.
For instance, in \nnew{the} case of images, the embeddings could be a set of
descriptors like SIFT \cite{lowe1999object}, \oold{or} ORB \cite{rublee2011orb}, or the set of activation values at an intermediate layer of a neural network~\cite{hinton2006reducing, lin2016learning}. 
Examples of commonly employed distances are the  $p$-norm, Mahalanobis, \nnew{and} \oold{or} cosine \oold{ similarity}  distances.






The server replies to each request $r$ with the $k$ most similar objects in the catalog $\mathcal N$. As the dissimilarity is captured by the distance in the specific metric space, these objects are also the $k$ closest objects (neighbors) in the catalog to the request~$r$ ($\knn(r,\mathcal N)$).\footnote{
More precisely, these are the $k$ objects whose embeddings are closer to the embedding of $r$. From now on we identify objects \oold{and} \nnew{with} their embeddings.
} The mapping translates the similarity search problem \oold{in} \nnew{into}  a \knntext{} problem~\cite{yianilos1993data,navarro2002searching}.

We can also associate a dissimilarity cost to the reply provided by \nnew{a} server (e.g., by summing the dissimilarity costs for all objects in $\knn(r,\mathcal N)$). This cost depends on the catalog $\mathcal N$ and we do not have control \oold{on} \nnew{over} it. In addition, there is a \emph{fetching cost} to retrieve those objects. The fetching cost captures, for instance, the extra load experienced by the server or the network to provide the objects to the user, the delay experienced by the user\nnew{,} or a mixture of those costs. 

In the Edge Computing scenario 
we consider, we can reduce the fetching cost by storing at the edge server a subset of the catalog $\mathcal N$, i.e., the edge server works as a \emph{cache}. When answering to a request, the cache could provide just some of the \nnew{$k$~closest} objects \oold{the server would provide}  \nnew{(those stored locally) and retrieve the others from the server}. The seminal papers~\cite{falchi2008metric, pandey2009nearest} proposed a different use of the cache:  the cache may reply to a request using a local subset of objects that are potentially \emph{farther} than the \oold{true} closest neighbors to reduce the fetching cost while increasing---hopefully only slightly---the dissimilarity cost. They named such cache a \emph{similarity cache}. The envisaged applications were content-based image retrieval~\cite{falchi2008metric} and contextual advertising~\cite{pandey2009nearest}. But, as recognized in \cite{garetto2020similarity}, the idea has been rediscovered \oold{a number of} \nnew{several} times under different names for different applications: semantic caches for object recognition~\cite{drolia2017cachier,drolia2017precog,guo2018foggycache,venugopal2018shadow}, soft caches for recommender systems \cite{Sermpezis18,costantini20}, approximate caches for fast machine learning inference \cite{crankshaw2017clipper}. 

A common assumption in the existing literature is that the cache can only store $h$ objects and the index needed to manage them has essentially negligible size. \new{We also maintain this assumption, which is justified in practice when objects have sizes of a few tens of kilobytes~(see the quantitative examples in Sec.~\ref{sec:bg_other})}. 

\noindent
{\bf Caching policies.}
The performance of the cache depends heavily on which objects the cache stores. Among the papers mentioned above, many (e.g., \cite{Sermpezis18, costantini20}) consider the offline object placement problem: a set of objects is selected \oold{on the basis of} \nnew{based on}  historical information about object popularity and prefetched in the cache. But object popularity can be difficult to estimate and can change over time, \oold{specially} \nnew{especially} at the level of small geographical areas (as in the case of areas served by an edge server)~\cite{paschos16}.
Other papers~\cite{crankshaw2017clipper, drolia2017cachier, drolia2017precog, guo2018foggycache, guo2018potluck, venugopal2018shadow} present more a high-level view of the different components of the application system, without specific contributions in terms of cache management policies (e.g., they apply minor changes to exact caching policies like \lru{} or \lfu). 
Some recent papers~\cite{garetto2020similarity, zhou20,sabnis21} propose online caching policies that try to minimize the total cost of the system (the sum of the dissimilarity cost and the fetching cost), also in a networked context~\cite{zhou20, GARETTO2021108570}, but their schemes apply only to the case $k=1$, which is of limited practical interest.

To the best of our knowledge, the only dynamic caching policies conceived to manage the retrieval of $k>1$ similar objects are  \simlru, \clslru, and \rndlru{} proposed in~\cite{pandey2009nearest} and \qcache{} proposed in \cite{FALCHI2012803}.
Next, we describe in detail these policies to highlight \acai's differences and novelty.

All these policies maintain an ordered list of key-value pairs where the key is a previous request and the value is the set of $k'$ closest objects to the request in the catalog (in general $k'\ge k$). The cache, whose size is $h$, maintains a set of $h/k'$ past requests. This approach allows to decompose the potentially expensive search for close objects in the cache (see Sec.~\ref{sec:bg_other}) in two separate less expensive searches on smaller sets. Upon \New{the} arrival of a request $r$, the cache identifies the $l$ closest requests to $r$ among the $h/k'$ in the cache. Then, it merges their corresponding values and looks for the $k$ closest objects to $r$ in this set including at most $l \times k'$ objects. 
\nnew{As the cache has no knowledge about the catalog at the server, it cannot compare the quality (i.e., the dissimilarity cost) of the local answer with the quality of the answer the server can provide. It relies then on heuristics (detailed below) to decide if the local answer is good enough. 
If this is the case, }
then an \emph{approximate hit} occurs and the answer is provided to the user, otherwise the request $r$ is forwarded to the server that needs to provide all $k$ closest objects.

The cache state is updated following \oold{a} \nnew{an} \lru-like approach: upon an approximate hit, all key-value pairs that contributed to the answer are moved to the front of the list; upon a miss, the new key-value pair provided by the server is stored at the front of the list, and the pair at the end of the list is evicted.

This operation is common to  \simlru, \clslru, \rndlru, and  \qcache. They differ in the choice of the parameters $k'$ and $l$ and in the way to decide between an approximate hit and a miss. 
\New{We emphasize that the parameters $k'$ and $l$ are only required by the LRU-like policies and do not play any role in \acai's workflow.}

\simlru{} considers $k'\ge k$ and $l=1$. Upon a request for $r$, \simlru{} selects the closest request in the cache and decides for an approximate hit (resp.~a miss) if their dissimilarity is smaller (resp. larger) than a given threshold $C_\theta \in \mathbb R_{\geq 0} $. 
Every stored key $r'$ \emph{covers} then a hypersphere in the request space with radius $C_\theta$.
\simlru{} has the  property that no two keys in the cache have a dissimilarity cost lower than $C_\theta$, but the corresponding hyperspheres may still intersect. 

\clslru{} \cite{pandey2009nearest} is a variant of \simlru, that can update the stored keys (the centers of the hyperspheres) and push away intersecting hyperspheres to cover the largest possible area of the request space. To this purpose, \clslru{} maintains the history of requests served at each hypersphere and, upon an approximate hit, moves the center to the object that minimizes the distance to every object within the hypersphere's history. When two hyperspheres overlap, this mechanism drives their centers apart, which in turn reduces the overlapping region. 

\rndlru~\cite{pandey2009nearest} is a random variant of \simlru{} that determines the request $r$ to be a miss with a probability that is increasing with the dissimilarity \oold{cost} between $r$ and the closest request in the cache.

Finally, \qcache{}~\cite{FALCHI2012803} considers $k'=k$ and $l>1$. The policy decides if the $k$ objects selected from the cache are an approximate hit if \Old{(1)}\New{(i)}~at least two of them would have been provided also by the server---a sufficient condition can be obtained from geometric considerations---or \Old{(2)} \New{(ii)}~the distribution of distances of the $k$ objects from the request looks similar to the distribution of objects around the corresponding request for other stored key-value pairs.

These policies share potential inefficiencies: (i) the sets of closest objects to previous queries are not necessarily disjoint (but \clslru{} tries to reduce their overlap) and then the cache may store less than $h$ distinct objects; (ii) the two-level search may miss some objects in the cache that are close to~$r$, but are indexed by requests that are not among the $l$ closest requests to $r$; (iii) the policy takes into account the dissimilarity costs at the caches but not at the server; (iv) objects are served in \Old{block}\New{bulk}, all from the cache or all from the server, without the flexibility of a per-object choice.
As we are going to see, \acai{} design prevents such inefficiencies by exploiting new advances in efficient approximate \knntext{} search algorithms, which allows us to abandon the key-value pair indexing and to estimate the dissimilarity costs at the server. Also \acai{} departs from the \lru-like cache updates, considering gradient update schemes inspired by online learning algorithms \cite{paschos2019learning}.

\section{Other Relevant Background}
\label{sec:bg_other}

\noindent
{\bf Indexes for approximate \knntext{} search.}
Indexes are used to efficiently search objects in a large catalog. In \nnew{the} case of \knntext{}, one of the approaches is to use tree-based data structures. Unfortunately, in high dimensional spaces, e.g., $\mathbb R^d$ with $d > 10$, the computational cost of such \nnew{a} search is comparable to a full scan of the collection \cite{weber1998quantitative}. \emph{Approximate Nearest Neighbor}  search techniques trade accuracy for speed and provide $k$ points close to the query, but not necessarily the closest, sometimes with a guaranteed bounded error. Prominent examples are the solutions based on locality\nnew{-}sensitive hashing  \cite{andoni2006near}, product quantization \cite{johnson2019billion,babenko2014inverted}, and graphs~\cite{malkov2018efficient}. Despite being approximate, these indexes are in practice very accurate, as \oold{showed} \nnew{shown} over different benchmarks in \cite{aumuller2017ann}. 

As we are going to describe, \acai{} employs two approximate indexes (both stored at the edge server): one for the content stored in the cache, and one for the whole catalog~$\mathcal N$ stored in the remote server. For the former, since cache content varies over time, we rely on a graph-based solution, such as HNSW \cite{malkov2018efficient}, that supports dynamic (re-)indexing with no speed loss.
On various benchmarks~\cite{aumuller2017ann}, HNSW \oold{results} \nnew{is} the fastest index, and it is able to answer a $100$NN query over a dataset with 1 million objects in a 128-dimensional space in less than 0.5 ms with a recall greater than 97\%.\Old{{Experiments on a 4-core Intel Core i7-4790, 32 GB RAM, 3.6 GHz.}} 
As for the memory footprint, a typical configuration of the HNSW index requires $O(d)$ bytes per object\oold{s}, where $d$ is the number of dimensions. For instance, in \nnew{the} case of $d=128$ dimensional vectors, the memory required to index 10 million objects is approximately 5 GB. 
As the server catalog changes less frequently (e.g., \New{contextual advertising applications~\cite{pandey2009nearest}, and image retrieval applications~\cite{FALCHI2012803}}), \acai{} can index it using approaches with a more compact object representation like FAISS \cite{johnson2019billion}. FAISS is slightly {slower} than HNSW and does not support fast re-indexing if the catalog changes, but it can manage a much larger set of objects. With a dataset of 1 billion objects, FAISS provides an answer in less than 0.7 ms per query, using a GPU~\cite{johnson2019billion}.\Old{{Experiments on 2x2.8GHz Intel Xeon E5-2680v2, 4 Maxwell Titan X GPUs, CUDA 8.0.}} \New{Practically, the global catalog index can be fully reconstructed whenever a given percentage of the catalog changes.}
\nnew{This operation can be done in parallel to the normal cache operation. Once the catalog index is modified, the cache can remove the objects that do not appear anymore in the catalog and allocate the corresponding space to other objects.}
\oold{When a new global catalog index is reconstructed, the state of \acai{} is modified accordingly by redistributing the total mass on the contents that disappeared on the new introduced contents.}
As for the memory footprint, for a typical configuration (IVFPQ), FAISS is able to represent an object with 30 bytes (independently of $d$): only 3 GB for a dataset with 100 million objects!

\new{ Summing up our numerical example, if each object has size 20~KB, an edge server with \acai{} storing locally 10~million objects from a catalog with 100~million objects, needs 200~GB for the objects and only 8~GB
for the two indexes. The larger the objects, the smaller the index\New{es'}  footprint: for example, when the server has a few Terabytes of disk space to store large multimedia objects, the indexes' size can be ignored.}

\vspace{1mm}
\noindent
{\bf Gradient descent approaches.}
Online caching policies based on gradient methods have been studied in the stochastic request setting for exact caching, with provable performance guarantees, \cite{ioannidis2010distributed,ioannidis2016adaptive}. More recently, the authors of~\cite{sabnis21} have proposed a gradient method to refine the allocation of objects stored by traditional similarity caching policies like \simlru. Similarly, the reference \cite{zhou20} considers a heuristic based on the gradient descent/ascent algorithm to allocate objects in a network of similarity caches. In both papers, the system provides a single similar \oold{content} \nnew{object} ($k=1$). 
{
A closely related recent work~\cite{sisalemmedcomnet21} considers the problem of allocating different inference models that can satisfy users' queries at different quality levels. The authors propose a policy based on mirror descent, and provide guarantees under a general request process,  but their policy does not scale to a large catalog size.  
}

We deviate from these works by considering $k>1$, large catalog size,  and the more general family of online mirror ascent algorithms (of which the usual gradient ascent method is a particular instance).  Also our policy provides strong performance guarantees under a general request process, where requests can even be selected by an adversary. Our analysis relies on results from online convex optimization~\cite{shalev2011online} and is similar in spirit \old{to what done in~\cite{paschos2019learning} for exact caching using the classic gradient method.}\new{to what \nnew{was} done for exact caching using the classic gradient method in~\cite{paschos2019learning} and mirror descent in~\cite{sisalem21icc}. }\New{Two recent papers~\cite{sisalem21arxiv,9517925} pursued this line of work taking into account update costs for a single exact cache.
}


 \section{\acai{} Design}
\label{sec:acai}
\New{\acai{} design is summarized in Fig.~\ref{fig:acaidesign}.}
\begin{figure*}[t]
    \centering
    \includegraphics[width =.9 \linewidth]{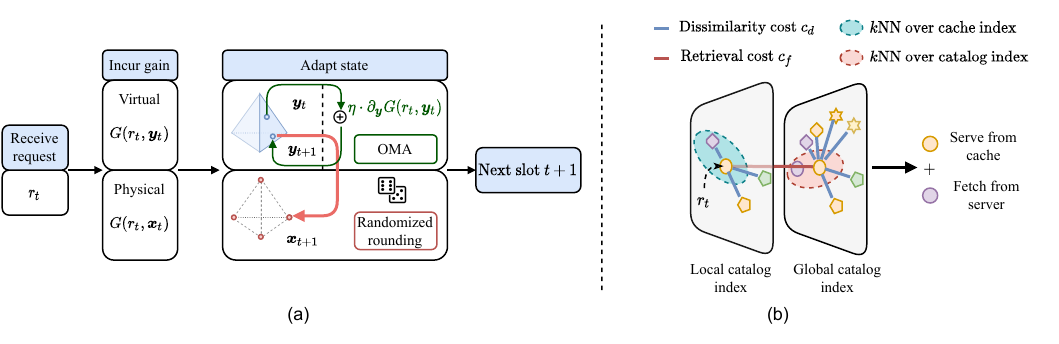}
    \caption{\New{Subfigure~(a) illustrates \acai's state adaptation. A time slot is initiated when a request $r_t$ is received. A  virtual (fictitious) gain $G(r_t, \y_t)$ and a physical gain $G(r_t, \x_t)$ are incurred. The virtual cache adapts its fractional state by calling Online Mirror Ascent to obtain a new state $\y_{t+1} \in \convX$ employing the subgradient of the virtual gain $\partial_{\y} G(r_t, \y_t)$, and the new state is randomly rounded to a valid cache state $\x_{t+1} \in \mathcal{X}$.
    Subfigure~(b) depicts how \acai{} employs the two indexes (local catalog index and global catalog index). \oold{An} Approximate \knntext{} queries are performed on each index, and the contents with the least overall costs are selected. } }
    \label{fig:acaidesign}
\end{figure*}

\subsection{Cost Assumptions}
Many of the similarity caching policies proposed in the literature (including \simlru, \clslru, \rndlru, and \qcache) have not been designed with a clear quantitative objective, but with the qualitative goal of significantly reducing the fetching cost without increasing too much the dissimilarity cost. Because of such vagueness, the corresponding papers do not make clear assumptions about the dissimilarity costs and the fetching costs.
On the contrary, \acai{} has been designed to minimize the total cost of the similarity search system and we make explicit the corresponding hypotheses.

\begin{table}[htbp]

  \caption{{Notation Summary } \label{tab:notations_definitions}  }
  \vspace{-1em}
  \begin{center}
    \begin{footnotesize}
      \begin{tabular}{|C{.055\textwidth}C{.38\textwidth}|}
        \hline
                                     & \textbf{Notational Conventions }                                                                                                        \\
        \hline

        $\mathbbm{1}_{\chi}$         & Indicator function set to 1 when condition $\chi$ is true                                                                               \\
        $[n]$                        & Set of integers $\{1,2, \dots, n\}$                                                                                                     \\
        $\mathrm{conv}(S)$           & Convex hull of a set $S$                                                                                                                \\
        \hline
                                     & \textbf{System Model}                                                                                                                   \\
        \hline
        $\mathcal{N}$                & Catalog of $N$ objects                                                                                                                  \\
        $\mathcal{U}$                & Augmented catalog of $2N$ objects                                                                                                       \\
        $x_i$                        & 0-1 indicator variable set to 1 when $i \in \mathcal{N}$ is cached, and $x_i = 1-x_{i-N}$ for $i \in \mathcal{U} \setminus \mathcal{N}$ \\
        $h$                          & Cache capacity                                                                                                                          \\
        $r \, / \,\mathcal{R}$       & Request / Request set                                                                                                                   \\
        $c_f$                        & Retrieval cost                                                                                                                          \\
        $c_d(r,o)$                   & Dissimilarity cost of serving object $o$ to request $r$                                                                                 \\
        $c(r,o)$                     & Overall cost of serving object $o$ to request $r$                                                                                       \\
        $\pi^r$                      & Permutation of the elements of $\mathcal{U}$, where $\pi^r_i$ gives the $i$-th closest object to $r$                                    \\
        $\alpha^r_i$                 & Cost difference between the $(i+1)$-th smallest cost and the $i$-th smallest cost when serving request $r$                              \\
        $K^r$                        & The order of the largest possible cost when $r$ is requested.                                                                           \\
        $\knn(r, S)$                 & Set of $k$ closest objects to $r$ in $S \subset\mathcal{N}$ according to $c(r,\, \cdot\,)$                                              \\
        $\x \, / \, \mathcal{X}$     & Cache state vector / Set of valid cache states                                                                                          \\
        $t \, / \,T$                 & Time slot / Time horizon                                                                                                                \\
        $C(r,\x)$                    & Total cost to serve request $r$ under cache allocation $\x$                                                                             \\
        $G(r, \x) $                  & Total caching gain to serve request $r$ under cache allocation $\x$                                                                     \\
        $\mathcal{C}_{\text{UC}, T}$ & Update cost of the system over time horizon $T$                                                                                         \\
        $G_T(\x)$                    & Time-averaged caching gain                                                                                                              \\
        \hline
                                     & \textbf{\acai{}}                                                                                                                        \\
        \hline
        $\Phi$                       & Mirror map                                                                                                                              \\
        $\mathcal{D}$                & Domain of the mirror map                                                                                                                \\
        $\y$                         & Fractional cache state                                                                                                                  \\
        $\eta$                       & Learning rate                                                                                                                           \\
        $\vec g_t$                   & Subgradient of $G(r_t,\y)$ at point $\y_t$                                                                                              \\
        $\prod_{S}^\Phi(\,\cdot\,)$  & Negative entropy Bregman projection onto the set $S$                                                                                    \\
        $M$                          & Freezing period                                                                                                                         \\
        $c_d^k$                      & Upper bound on the dissimilarity cost of the k-th closest object  for any request                                                       \\
        $\psi$                       & Static optimum discount factor                                                                                                          \\
        \hline
      \end{tabular}
    \end{footnotesize}
  \end{center}
\end{table}

Our main assumption is that all costs are additive.\footnote{
\nnew{In Sec.~\ref{sec:cache_state_service_gain}, we discuss to which extent this assumption can be removed.}
} The function $c_d(r,o)$ introduced in Sec.~\ref{sec:bg_similarity} quantifies the dissimilarity of the object $o$ and the request $r$. Let $\mathcal A$ be the set of objects in the answer to request $r$. It is natural to consider as dissimilarity cost of the answer $\sum_{o \in \mathcal A} c_d(r,o)$.

In addition, if fetching a single object from the server incurs a cost $c_f \in \mathbb R_{> 0}$, the fetching cost to retrieve $m$ objects is $m \times c_f$. This is an obvious choice when $c_f$ captures server or network cost. When $c_f$ captures the delay experienced by the user, then summing the costs is equivalent to consider the round trip time negligible in comparison to the transmission time, which is justified for large multimedia objects. It is easy to modify \acai{} to consider the alternative case when the fetching cost does not depend on how many objects are retrieved.
Finally, as common in other works~\cite{garetto2020similarity,sabnis21}, we assume that both the dissimilarity cost and the fetching cost can be directly compared (e.g., they can both be converted \oold{in} \nnew{into} dollars).
Under these assumptions, when, for example, the $k$ nearest neighbors in $\mathcal N$ to the query $r$ ($\knn(r,\mathcal N)$) are retrieved from the remote server, the total cost experienced by the system is
$\sum_{o \in \knn(r,\mathcal N)} c_d(r,o) + k c_f$.
\subsection{Cache Indexes}
\acai{} departs from the key-value indexes of most \nnew{of} the similarity caching policies. As discussed in Sec.~\ref{sec:bg_similarity}, such \New{an} approach was essentially motivated by the need to simplify \knntext{} searches by performing two searches on smaller datasets (the set of keys first, and then the union of the values for $l$~keys), and may lead to potential inefficiencies including sub-utilization of the available caching space.

The two-level search implemented by existing similarity caching policies can be seen as a na\"ive way to implement an approximate \knntext{} search on the set of objects stored locally (the \emph{local catalog} $\mathcal C$).
Thanks to the recent advances in approximate \knntext{} searches (Sec.~\ref{sec:bg_other}), we have now better approaches to search through large catalogs with limited memory and computation requirements. We assume then that the cache maintains two indexes supporting \knntext{} searches: one for the local catalog \new{(the objects stored locally)} and one for the remote catalog \new{(the  objects stored at the server)}. A discussion about which approximate index is more appropriate for each catalog is in Sec.~\ref{sec:bg_other}.

The local catalog index allows \acai{} to (i) fully exploit the available space (the cache stores at any time $h$ objects and can perform a \knntext{} search on all of them), (ii) potentially find closer objects in comparison to the non-optimized key-value search. Instead, the remote catalog index allows \acai{} to evaluate what objects the server would provide as \nnew{an} answer to the request, and then to correctly evaluate which objects should be served locally and which one should be served from the server/ as we are going to describe next.

\subsection{Request Serving}
\label{sec:request_serving}
Differently from existing policies, \acai{} has the possibility to compose the answer using both local objects and remote ones.
Upon a request $r$, \acai{} uses the two indexes to find the closest objects from the local catalog $\mathcal C$ and \oold{from} the remote catalog $\mathcal N$. We denote the set of objects identified by these indexes as $\knn(r,\mathcal C)$ and $\knn(r,\mathcal N)$\nnew{,} respectively. \acai{} composes the answer  $\mathcal A$ by combining the objects with the smallest costs in the two sets. For an object $o$ stored locally ($o \in \mathcal C$), the system only pays $c_d(r,o)$; for an object $o$ fetched from the remote server ($o \in \mathcal N\setminus \mathcal C$), the system pays $c_d(r,o)+c_f$. The total cost experienced is
\begin{align}
  \label{e:answer_cost}
  C(r,\mathcal A) & \triangleq \hspace{-1.5em}\sum_{ o \in \mathcal A \cap \knn(r, \mathcal C)} c_d(r,o)  + \sum_{ o \in \mathcal A \setminus \knn(r, \mathcal C)} \left( c_d(r,o) + c_f\right).
\end{align}

The answer $\mathcal A$ is determined by selecting $k$ objects that minimize the total cost, that is
\begin{align}
  \label{e:answer}
  \mathcal A = \argmin_{\substack{{\mathcal B \subset \left(\knn(r,\mathcal C) \cup \knn(r,\mathcal N)\right)} \\
  {|\mathcal B| = k }}} C(r,\mathcal B).
\end{align}

\subsection{Cache State and Service Cost/Gain}
\label{sec:cache_state_service_gain}
In order to succinctly present how \acai{} updates the local catalog and its theoretical guarantee, it is convenient to express the cost in~\eqref{e:answer_cost} as \new{a} function of the current cache state and replace the set notation with a vectorial one.

First, we define the \emph{augmented catalog} $\mathcal{U} \triangleq \mathcal{N} \cup \{N+1, N+2, \dots, 2N\}$ and define the new costs
\begin{align}
\label{e:3}
  c(r,i)= \begin{cases}
    c_d(r,i),             & \text{if } i \in \mathcal{N},                       \\
    c_d(r,i - N) + \Cret, & \text{if } i \in \mathcal{U} \setminus \mathcal{N}.
  \end{cases}
\end{align}
Essentially, $i$ and $i+N$ (for $i \in \{1, \dots, N\}$) correspond to the same object, with $i$ capturing the cost when the object is stored at the cache and $i+N$ capturing the cost when it is stored at the server.
From now on, when we talk about the closest objects to a request, we are considering $c(\cdot, \cdot)$ as the distance.

\New{Note that \acai{} can easily be modified to account for heterogeneous retrieval costs by modifying Eq.~\eqref{e:3} and replacing $c_f$ by an object\nnew{-}dependent retrieval cost $c_{f,i}$ for every object $ i \in \mathcal{N}$.  Moreover, we assume that the fetching cost and dissimilarity cost are added together linearly in the objective, however our model can capture the scenario where the cost is not necessarily additive in $c_d$ and $c_f$ by a redefinition of the second line in Eq.~\eqref{e:3}. In particular, the algorithm only requires the existence of \nnew{an arbitrary function $c(r,i)$} \oold{and the particular form of this function can be arbitrary.} \nnew{and the theoretical guarantees in Sec.~\ref{sub:guarantees} also hold under such modifications. We consider a simplified model to streamline the presentation.}
\oold{To streamline the presentation we considered this simplified model and the theoretical guarantees will also hold under such modifications. }}

It is also convenient to represent the state of the cache (the set of objects stored locally) as a vector $\x \in \{0,1\}^{2N}$, where, for $i \in \mathcal N$, $x_i=1$ (resp., $x_i=0$), if $i$ is stored  (resp., is not stored) in the cache, and we set $x_{i+N}=1-x_i$.\footnote{
  The vector $\x$ has \oold{clearly} redundant components, but such redundancy leads to more compact expressions in what follows.
} The set of valid cache configurations is given by:
\begin{align}
  \mathcal{X} \triangleq \left\{ \x \in \{0,1\}^{2 N}: \sum_{i \in \mathcal{N}} x_i = h,\, x_{j+N}= 1-x_j, \forall j \in \mathcal N \right\}.
\end{align}

For every request $r \in \mathcal{R}$ we define the sequence $\pi^r$ as the permutation of the elements of $\mathcal{U}$, where $\pi^r_i$ \oold{gives} is the $i$-th closest object to $r$ in $\mathcal{U}$ according to the costs $c(r,o), \forall o \in \mathcal{U}$.
The answer $\mathcal A$ provided by \acai{} (Eq.~\eqref{e:answer}) coincides with the first $k$ elements of $\pi^r$ for which the corresponding index in $\x$ is equal to $1$. The total cost to serve $r$ can then be expressed directly as a function of the cache state $\x$:
\begin{align}
  C(r, \x) & = \sum^{2N}_{i=1} c(r,\pi^r_{i}) x_{\pi^r_i} \mathbbm{1}_{\left\{\sum^{i-1}_{j=1} x_{\pi^r_j} < k\right\}} \label{eq:Ca_cost_function}, \forall \x \in \mathcal{X},
\end{align}
where $\mathbbm{1}_{\{\chi\}} = 1$ when the condition $\chi$ is true, and $\mathbbm{1}_{\{\chi\}} = 0$ otherwise.

Instead of working with the cost $C(r,\x)$, we can equivalently consider the \emph{caching gain} defined as the cost reduction due to the presence of the cache (as in \cite{shanmugam2013femtocaching,ioannidis2016adaptive,neglia2019swiss}):
\begin{equation}
  \label{e:gain}
  G(r,\x) \triangleq C(r, (\underbrace{0,0, \dots, 0}_{N}, \underbrace{1, 1, \dots, 1}_{N})) - C(r,\x),
\end{equation}
where the first term corresponds to the cost when the cache is empty (and then requests are entirely satisfied by the server).
The theoretical guarantees of \acai{} are simpler to express in terms of the caching gain (Sec.~\ref{sub:guarantees}). \New{Observe that the caching gain is zero for any cache state when the retrieval cost is null ($c_f = 0$), \nnew{e.g., the cache and the server are co-located. In this case, the cache would not provide any advantage.}
\oold{and this corresponds to the setting where the remote server and the cache are co-located and this scenario overrides the need of a cache.}  }

The caching gain has the following compact expression \ifnum\extended=1 (Supplementary material,~Sec.~I, Lemma~3\else \cite[Lemma~5]{sisalem20techrep}\fi):
\begin{align}
  G( r,\x) =\sum^{K^r - 1
  }_{i=1} \alpha^r_i \min\left\{k - \sigma^r_i, \sum^i_{j=1} x_{\pi^r_j} - \sigma^r_i \right\},
  \label{eq:gain_compact}
\end{align}
where
\begin{align}
  \sigma^r_i \triangleq  \sum^i_{j=1} \mathbbm{1}_{\left\{\pi^r_j  \in \mathcal{U} \setminus \mathcal{N}\right\}}, &   & \forall (i,r) \in \mathcal{U} \times \mathcal{R},
  \label{eq:definition_of_sigma}
\end{align}
\old{denotes the number of objects that are among the top $i$ less costly objects to serve $r$, even when served from the server}
$K^r$ is the value of the minimum index $i \in \mathcal{U}$ such that $\sigma^r_i = k$, and $\alpha_i^r \triangleq c(r,\pi^r_{i+1}) - c(r,\pi^r_{i}) \ge 0$.
\old{
The function $\min\left\{k- \sigma^r_i , \sum^i_{j=1} y_{\pi^r_j}- \sigma^r_i\right\}$ is concave $\forall \y \in \mathbb{R}^{2N}$. We conclude that the gain function $\mathcal{G}_{r}$ is a concave function over the domain $\convX$, where $\convX$ is the convex hull of the set of valid cache configurations $\mathcal{X}$.
}

\new{Let $\convX$ denote the convex hull of the set of valid cache configurations $\mathcal{X}$. We observe that $G(r,\y)$ is a concave function of variable $\y \in \convX$. Indeed, from Eq.~\eqref{eq:gain_compact}, $G( r,\y)$ is a linear combination, with positive coefficients, of concave functions (the minimum of affine functions in $\y$).}

 \subsection{Cache Updates}
We denote by $r_t \in \mathcal{R}$ the $t$-th request. The cache is allowed to change its state $\x_t \in \mathcal{X}$ to $\x_{t+1} \in \mathcal{X}$ in a reactive manner, after receiving the request $r_t$ and incurring the gain $G({r_t} ,\x_t)$. \acai{} updates its state $\x_t$ \oold{with the goal of greedily maximizing} \nnew{ to greedily maximize}  the gain.
\begin{algorithm}[!t]  
\begin{footnotesize}
 \hspace*{1.4em} \textbf{Input:} $\eta \in \mathbb{R}_+$, \textsc{RoundingScheme}
	\begin{algorithmic}[1]
	\Procedure {OnlineMirrorAscent}{}
	\State $ \y_1 \gets \underset{\y \in \convX \cap \mathcal{D}}{\arg\min} \,\Phi(\y); \x_1 \gets \DepRound(\y_{1})$
	\For{$t \gets 1,2,\dots,T$}\Comment{{Incur a gain $G({r_t} ,\y_t)$, and compute a subgradient  $ \vec{g}_t $ of $ G$ at point $\y_t$ (Supplementary material\nnew{,}~Sec. V\nnew{,} \old{Eq.~(52)}\new{\ifnum \extended=1Eq.~(52)\else\cite[Eq.~(63)]{sisalem20techrep}\fi}}})
		\State
		$\hat{\y}_t \gets \nabla \Phi (\y_t)$\Comment{{Map primal point to dual point}}
		\State
		$\hat{\z}_{t+1}   \gets \hat{\y}_t + \eta \vec{g}_t$\Comment{{Take gradient step in the dual}}
		\State
		$\z_{t+1}   \gets \left(\nabla \Phi\right)^{-1}(\hat{\z}_{t+1})$\Comment{{Map dual point to a primal point}}
		\State
		$\y_{t+1}\gets \prod_{\convX \cap \mathcal{D}}^\Phi(\z_{t+1})$\Comment{{Proj. new point onto feasible region}}
		
	\Statex\Comment{ Select a rounding scheme}
		 \If {\textsc{RoundingScheme} = \DepRound}
		 \If{ $M \, | \, t $} \Comment{\new{Round the fractional state} every $M$ requests}
		 \State $\x_{t+1} \gets \DepRound(\y_{t+1})$ 
		 
		 \EndIf
		 \ElsIf{\textsc{RoundingScheme} = \Round}
		 \State $\x_{t+1} \gets \Round(\x_t, \y_t, \y_{t+1})$
		 \EndIf
		\EndFor
		\EndProcedure

	\end{algorithmic}
		\end{footnotesize}
	\caption{Online Mirror Ascent ($\OMA$)}
	\label{algo:online_mirror_ascent}
\end{algorithm}
The update of the state $\x_t$ is driven from a continuous fractional state $\y_t \in \convX$, where $y_{t,i}$ can be interpreted as the probability to store object $i$ in the cache. 
At each request~$r_t$, \acai{} increases the components of $\y_t$ corresponding to the objects that are used to answer to $r_t$, and decreases the other components. This could be achieved by a classic gradient method, e.g.,~$\y_{t+1} = \y_t + \eta \mathbf g_t $, where $\mathbf g_t$ is a subgradient of $G(r_t, \y_t)$ and $\eta \in \mathbb{R}_+$ is the learning rate (or stepsize), but in \acai{} we consider a more general online mirror ascent update $\OMA$~\cite[Ch.~4]{bubeck2015convexbook} that is described in Algorithm~\ref{algo:online_mirror_ascent}.\footnote{
    Properly speaking $\OMA$, only refers to the update of $\y_t$ and does not include the randomized rounding schemes in lines 8--14.
} $\OMA$ is parameterized by the function $\Phi(\,\cdot\,)$, that is called the \emph{mirror map} (see~Supplementary material\nnew{,} Sec.~VI). If the mirror map is the squared Euclidean norm, $\OMA$ coincides with the usual gradient ascent method, but other mirror maps can be selected.
In particular, our experiments in Sec.~\ref{sec:experiments} show that the negative entropy map $\Phi(\y)=\sum_{i \in \mathcal{N}} y_i \log y_i$ \New{with domain $\mathcal{D} = \mathbb{R}^N_{>0}$} achieves better performance.

\subsection{Rounding the Cache Auxiliary State}
\label{a:sampling}
At every time slot $t \in [T]$, \acai{} can use the randomized rounding scheme \DepRound \cite{byrka2014improved} to generate a cache allocation $\x_{t+1} \in \mathcal{X}$ from $\y_{t+1}  \in \convX$, while still satisfying the capacity constraint at any time slot~$t$. The cache can fetch from the server the objects that are in $\x_{t+1}$ but not in $\x_{t}$.

As cache \oold{movements/}updates  introduce extra costs \oold{to} \nnew{for} the network operator, \DepRound{} could potentially cause extra update costs that grow linearly in time. To mitigate incurring large update costs, we may avoid updating the cache state at every time slot $t \in [T]$ by 
\emph{freezing} the cache physical state for $\roundInterval \in [T]$ time steps. \New{ In particular, we assume the update cost of the system to be proportional to the number of fetched files, which can be upper bounded by the $l_1$ norm of the state update: $\sum_{i \in \mathcal{N}} \max\{0, x_{t+1,i}-x_{t,i}\} \leq \norm{\x_{t+1} - \x_{t}}_1$. Hence, if we denote by $\mathcal{C}_{\mathrm{UC},T}$ the total update cost of the system over the time horizon $T$, then  we have
\begin{align}
    \label{eq:update_cost}
    \mathcal{C}_{\mathrm{UC},T} = \mathcal{O} \left(\sum^{T-1}_{t=1}\norm{\x_{t+1} - \x_t}_1\right).
\end{align}
When the cache state is refreshed after a call to the rounding scheme \DepRound{}, the incurred update cost is in the order of $\BigO{2h}$, and $\frac{\mathcal{C}_{\mathrm{UC},T}}{T} = \BigO{ \frac{2 h}{M}}$. Moreover, when $M = \Theta \left(T^\beta\right)$ for $\beta \in (0, 1)$ it holds $ \frac{\mathcal{C}_{\mathrm{UC},T}}{T} = \BigO{T^{-\beta}}, $ and for any $\epsilon >0$ and $T$ large enough  
\begin{align} &\frac{\mathcal{C}_{\mathrm{UC},T}}{T}\leq \epsilon.
\end{align}
The average update cost of the system is then negligible for large $T$.} 
The parameter $\roundInterval$ reduces cache updates at the expense of reducing the cache reactivity \New{(see Theorem~\ref{theorem:main})}. 

In some applications, it is possible to  slightly violate the capacity constraint with small deviations, as long as this is satisfied on average~\cite{lorido2014review,elastic1,carra2020elastic}. For example, there could be a monetary value associated to the storage reserved by the cache, and a total budget available over a target time horizon $T$. In this setting, the cache may violate momentarily the capacity constraint, as far as the total payment does not exceed the budget.

\Round{} (Algorithm~\ref{algo:coupling_scheme}) is a rounding approach which works under this relaxed capacity constraint and does not require \nnew{freezing the cache state for $M$ time slots}. 
\oold{state freezing with the parameter $M$.}
At time slot $t\in[T]$, the cache decides which files to cache through $N$ coin tosses, where the file $i \in \mathcal{N}$ is cached with probability $y_{t,i}$, and the random state obtained is $\x_t$. By definition, the expected value of the integral state is $\mathbb E[\x_t] = \y_t$. The probability that the cache exceeds its \nnew{target }storage capacity by $\delta h$ is given by the Chernoff bound \cite{MitzenmacherProbabilityandComputing} as: 
\begin{align}
    \mathbb P \left(\norm{\x_t}_1 > (1+\delta) h\right)<e^{\frac{-\delta^{2} h }{2}}, \delta \in (0,1], 
\end{align}
where   the $l_1$ norm is restricted to the first $N$ components of the vector, i.e., $\norm{\x}_1 \triangleq \sum_{i \in \mathcal{N}} |x_i|$. 
In the regime of large cache sizes $h \gg 1$, we observe from the Chernoff bound that the cache stores less than $(1+\delta)$ of its \nnew{target} capacity $h$ with high probability. 

\Old{We assume the update cost of the system to be proportional to the number of fetched files, which can be upper bounded by the $l_1$ norm of the state update: $\sum_{i \in \mathcal{N}} \max\{0, x_{t+1,i}-x_{t,i}\} \leq \norm{\x_{t+1} - \x_{t}}_1$. Hence, if we denote by $\mathcal{C}_{\mathrm{UC},T}$ the total update cost of the system over the time horizon $T$, we have
\begin{align}
    \mathcal{C}_{\mathrm{UC},T} = \mathcal{O} \left(\sum^{T-1}_{t=1}\norm{\x_{t+1} - \x_t}_1\right).
\end{align}
}
Theorem~\ref{theorem:coupling_scheme} (proof in Supplementary material\nnew{,} Sec.~VIII-A) shows that the expected movement of \Round{} is equal to the movement of the fractional auxiliary states $\{\y_t\}^T_{t=1}$. 
\begin{theorem}
\label{theorem:coupling_scheme}
If the input to Algorithm~\ref{algo:coupling_scheme} is sampled from a random variable $\x_t \in \{0,1\}^N$ with $\mathbb E[\x_t] =\y_t$, then we obtain as output an integral cache configuration $\x_{t+1} \in \{0,1\}^N$ satisfying $\mathbb E[\x_{t+1}] =  \y_{t+1}$ and $\mathbb E  \left[\norm{\x_{t+1} -\x_{t}}_{1}\right] =  {\norm{\y_{t+1} - \y_{t}}_{1}}$.
\end{theorem}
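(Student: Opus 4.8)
The plan is to analyze Algorithm~\ref{algo:coupling_scheme} one object at a time and recognize that, for each $i \in \mathcal{N}$, it realizes a \emph{maximal coupling} of the two Bernoulli laws with means $y_{t,i}$ and $y_{t+1,i}$, conditioned on the already-sampled bit $x_{t,i}$. Concretely, I would first verify that the update only ever moves the $i$-th bit ``toward'' $y_{t+1,i}$: when $y_{t+1,i}\ge y_{t,i}$ it can only turn a $0$ into a $1$ (and leaves a $1$ untouched), doing so with conditional probability $(y_{t+1,i}-y_{t,i})/(1-y_{t,i})$; symmetrically, when $y_{t+1,i}< y_{t,i}$ it can only turn a $1$ into a $0$, with conditional probability $(y_{t,i}-y_{t+1,i})/y_{t,i}$. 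The degenerate cases $y_{t,i}\in\{0,1\}$ are handled separately, since then $x_{t,i}$ is deterministic and equals the appropriate endpoint. Note that $\y_t,\y_{t+1}\in\convX$ guarantees $0\le y_{t,i},y_{t+1,i}\le1$, so all these conditional probabilities lie in $[0,1]$ — this is the only place the feasibility of the fractional states is used.

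Granting this description, both claims follow by short computations applied coordinatewise and then aggregated by linearity. For the marginal: condition on $x_{t,i}$, use the hypothesis $\mathbb{E}[x_{t,i}]=y_{t,i}$, and apply the law of total probability. In the increasing case, $\mathbb{P}(x_{t+1,i}=1)=\mathbb{P}(x_{t,i}=1)+\mathbb{P}(x_{t,i}=0)\,\frac{y_{t+1,i}-y_{t,i}}{1-y_{t,i}}=y_{t,i}+(1-y_{t,i})\,\frac{y_{t+1,i}-y_{t,i}}{1-y_{t,i}}=y_{t+1,i}$, and the decreasing case is symmetric; hence $\mathbb{E}[\x_{t+1}]=\y_{t+1}$, which in particular means the output is a valid $\{0,1\}^N$ vector with the prescribed mean. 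For the movement: since $x_{t,i},x_{t+1,i}\in\{0,1\}$ we have $|x_{t+1,i}-x_{t,i}|=\mathbbm{1}_{\{x_{t+1,i}\neq x_{t,i}\}}$, and by construction a flip occurs only from the ``minority'' side (the $0$ side when $y$ increases, the $1$ side when it decreases) and with exactly the conditional probability above, so $\mathbb{P}(x_{t+1,i}\neq x_{t,i})=|y_{t+1,i}-y_{t,i}|$. Taking expectations, summing over $i\in\mathcal{N}$, and using linearity yields $\mathbb{E}[\norm{\x_{t+1}-\x_t}_1]=\sum_{i\in\mathcal{N}}|y_{t+1,i}-y_{t,i}|=\norm{\y_{t+1}-\y_t}_1$, where $\norm{\cdot}_1$ is understood over the first $N$ components, matching the algorithm's domain.

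The main (indeed only non-routine) obstacle is the first step: pinning down precisely what Algorithm~\ref{algo:coupling_scheme} does on each coordinate in each branch and checking it coincides with the maximal coupling, including the boundary values $y_{t,i}\in\{0,1\}$ and the case $y_{t+1,i}=y_{t,i}$ (no flip). A secondary point worth stating explicitly is that the $N$ coordinates are processed independently given $\x_t$, so the per-coordinate identities for the mean and for $\mathbb{P}(x_{t+1,i}\neq x_{t,i})$ aggregate without cross terms and no joint-distribution bookkeeping is needed beyond linearity of expectation.
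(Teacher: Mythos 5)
Your proposal is correct and follows essentially the same route as the paper's proof: condition on $x_{t,i}$ in each branch, apply the law of total probability coordinatewise to get $\mathbb{P}(x_{t+1,i}=1)=y_{t,i}+\delta_i=y_{t+1,i}$, observe that a flip occurs only from one side with probability $|\delta_i|$, and sum by linearity of expectation. The "maximal coupling" framing and the explicit handling of the boundary cases $y_{t,i}\in\{0,1\}$ are a slightly more careful packaging of the same computation, not a different argument.
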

Moreover, the movement of the fractional states  
is negligible for large $T$:
\begin{theorem}
Algorithm \ref{algo:online_mirror_ascent}, configured with the negative entropy mirror map and learning rate $\eta = \BigO{\frac{1}{\sqrt{T}}}$, selects fractional cache states satisfying
\begin{align}
    \sum^{T-1}_{t=1} \norm{\y_{t+1} - \y_{t}}_1= \mathcal{O}(\sqrt{T}).
\end{align}
\label{theorem:omd_movement_cost}
\end{theorem}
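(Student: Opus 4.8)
The plan is to prove a per-step bound $\norm{\y_{t+1}-\y_t}_1 = \BigO{\eta}$ with a constant that does not depend on $T$, and then sum over $t$ and substitute $\eta = \Theta(1/\sqrt{T})$. First I would make the $\OMA$ update explicit for the negative entropy mirror map $\Phi(\y)=\sum_{i\in\mathcal{N}}y_i\log y_i$. Since $\nabla\Phi(\y)_i = 1+\log y_i$ and hence $(\nabla\Phi)^{-1}(\hat\z)_i = e^{\hat z_i - 1}$, lines~4--6 of Algorithm~\ref{algo:online_mirror_ascent} amount, before the projection, to the multiplicative update $z_{t+1,i}=y_{t,i}\,e^{\eta g_{t,i}}$ for $i\in\mathcal N$ (with $z_{t+1,i+N}=1-z_{t+1,i}$). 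I would also record that the subgradient is uniformly bounded: reading off $\vec g_t$ from the structure of $G$ in Eq.~\eqref{eq:gain_compact}, every component is a partial sum of the telescoping series $\sum_{i=1}^{K^{r_t}-1}\alpha_i^{r_t}=c(r_t,\pi^{r_t}_{K^{r_t}})-c(r_t,\pi^{r_t}_{1})\le c_d^k+\Cret$, so $\norm{\vec g_t}_\infty\le L$ with $L:=c_d^k+\Cret$ a constant independent of $T$. A secondary bookkeeping fact I would note is that every feasible point has $\sum_{i\in\mathcal N}y_i=h$ and $y_i\ge 0$, so in particular $\norm{\y_t}_1=h$.

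The core of the argument is a three-term chain of inequalities on the negative-entropy Bregman divergence $D_\Phi$. Because $\y_{t+1}=\prod_{\convX\cap\mathcal D}^{\Phi}(\z_{t+1})$ is the Bregman projection of $\z_{t+1}$ onto the convex set $\convX\cap\mathcal D$ and $\y_t$ lies in that set, the generalized Pythagorean inequality for Bregman projections gives $D_\Phi(\y_t,\z_{t+1})\ge D_\Phi(\y_t,\y_{t+1})+D_\Phi(\y_{t+1},\z_{t+1})\ge D_\Phi(\y_t,\y_{t+1})$. Next, on the affine slice $\{\sum_{i\in\mathcal N}y_i=h,\ y_i\ge 0\}$ the divergence $D_\Phi$ is the mass-$h$ relative entropy, and rescaling to the probability simplex and applying Pinsker's inequality yields $D_\Phi(\y_t,\y_{t+1})\ge \tfrac{1}{2h}\norm{\y_t-\y_{t+1}}_1^2$. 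Finally I would evaluate $D_\Phi(\y_t,\z_{t+1})$ directly from the multiplicative form above, obtaining $D_\Phi(\y_t,\z_{t+1})=\sum_{i\in\mathcal N}y_{t,i}\bigl(e^{\eta g_{t,i}}-1-\eta g_{t,i}\bigr)$, and bounding each summand with $e^x-1-x\le \tfrac{x^2}{2}e^{|x|}$, $|\eta g_{t,i}|\le \eta L$, and $\sum_{i\in\mathcal N}y_{t,i}=h$ to get $D_\Phi(\y_t,\z_{t+1})\le \tfrac{h\eta^2 L^2}{2}e^{\eta L}$. Chaining the three estimates gives $\norm{\y_{t+1}-\y_t}_1\le hL\,\eta\, e^{\eta L/2}$ for every $t$.

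Summing over $t=1,\dots,T-1$ then gives $\sum_{t=1}^{T-1}\norm{\y_{t+1}-\y_t}_1\le (T-1)\,hL\,\eta\, e^{\eta L/2}$, and with $\eta=\Theta(1/\sqrt{T})$ the right-hand side is $\Theta(\sqrt{T}\,hL\,e^{\Theta(1/\sqrt T)})=\BigO{\sqrt{T}}$, which is the claim (if the norm in the statement ranges over all $2N$ coordinates, the bound only doubles via $|y_{t+1,i+N}-y_{t,i+N}|=|y_{t,i}-y_{t+1,i}|$). The main obstacle is precisely the projection step: entropic Bregman projections are not $\ell_1$-nonexpansive, so one must not compare $\z_{t+1}$ and $\y_{t+1}$ directly but route through the generalized Pythagorean inequality, which is what lets the projected iterate be controlled by the (multiplicative, hence easy to bound) pre-projection point. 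The remaining technical care is (i) verifying strong convexity of $\Phi$ with respect to $\norm{\cdot}_1$ on the slice where $\Phi$ is actually strictly convex --- it is flat in the redundant $\{i+N\}$ coordinates, which the constraints $y_{j+N}=1-y_j$ pin down --- and (ii) checking that the iterates stay in the relative interior of $\convX\cap\mathcal D$ so that $\nabla\Phi$ and $(\nabla\Phi)^{-1}$ and the Bregman machinery are well defined.
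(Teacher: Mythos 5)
Your proposal is correct and follows essentially the same route as the paper's proof: both arguments combine the $\tfrac{1}{h}$-strong convexity of the negative entropy with respect to $\norm{\,\cdot\,}_1$ (which you derive via Pinsker's inequality and the paper cites from \cite[Ex.~2.5]{shalev2011online}), the generalized Pythagorean inequality for the Bregman projection, and a per-step bound $D_\Phi(\y_t,\z_{t+1})=\BigO{h\eta^2L^2}$ with $L=c_d^k+c_f$, before summing over $t$ and substituting $\eta=\Theta(1/\sqrt{T})$. The only local difference is how the pre-projection divergence is controlled: the paper uses the generic mirror-map argument (strong convexity of $\Phi$, the dual update rule, H\"older, and $ax-bx^2\le a^2/4b$), whereas you evaluate $D_\Phi(\y_t,\z_{t+1})=\sum_i y_{t,i}\left(e^{\eta g_{t,i}}-1-\eta g_{t,i}\right)$ in closed form for the entropic update and bound it elementarily, which yields the same constant up to the asymptotically negligible factor $e^{\eta L}$.
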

The proof is in Supplementary material\nnew{,} Sec.~VIII-B.

Combining the two theorems and \eqref{eq:update_cost}, \New{ we also conclude that  the  expected average update cost of the system $\mathbb E \left[ \frac{\mathcal{C}_{\mathrm{UC},T}}{T} \right]$ is  negligible for large $T$.}
\Old{
for any $\epsilon > 0$ and $T$ large enough:
\begin{align}
      \mathbb E \left [ \frac{\mathcal{C}_{\mathrm{UC},T}}{T}\right ] \leq \epsilon.
\end{align}
The expected average update cost of the system is then negligible for large $T$. }

\begin{algorithm}[t]
\begin{footnotesize}
 \hspace*{1.4em} \textbf{Input:} $\x_{t}, \y_{t} , \y_{t+1}$ \Comment{$\x_t$ satisfies $\mathbb E [\x_t] = \y_t$}
	\begin{algorithmic}[1]
	\Procedure {CoupledRounding}{}
	
	\State $\delta \gets \y_{t+1} - \y_{t}$ \Comment{Compute the change in distribution}
		\For{$ i \in \mathcal{N}$}
		\If {$\left(x_{t,i} = 1 \right)\land \left(\delta_i < 0\right)$} 
		 \State $ x_{t+1,i} \gets 0$ w.p. $- \frac{\delta_i}{y_{t,i}}$, and $x_{t+1,i} \gets 1$ w.p. $\frac{y_{t,i}+\delta_i}{y_{t,i}}$
		\ElsIf{$\left(x_{t,i} = 0\right) \land \left(\delta_i > 0\right)$}
		 \State $ x_{t+1,i} \gets 0$ w.p. $\frac{1 - y_{t,i} - \delta_i}{1-y_{t,i}}$, and $x_{t+1,i} \gets 1$ w.p. $\frac{\delta_i}{1-y_{t,i}}$
		\Else
		\State $x_{t+1,i} \gets x_{t,i}$ \Comment{Keep the same state}
		\EndIf 
		\State $x_{t+1,i+N} \gets  1 - x_{t+1,i} $ \Comment{Update the augmented states}
		\EndFor
		\State \Return $\x_{t+1}$ \Comment{Return the next physical state satisfying~$\mathbb{E}[\x_{t+1}] = \y_{t+1}$}
		\EndProcedure

	\end{algorithmic}
	\end{footnotesize}
	\caption{Coupled Rounding}
	\label{algo:coupling_scheme}
\end{algorithm}


\subsection{Time Complexity}
\acai{} uses $\OMA$ in Algorithm~\ref{algo:online_mirror_ascent} coupled with a rounding procedure $\DepRound$ or $\Round$. The rounding step may take $\BigO{N}$ operations (amortized every $M$ requests when \DepRound{} is used). In practice, \acai{} quickly sets irrelevant objects in the fractional allocation vector $\y_t$ very close to 0. Therefore, we can  keep track only of objects  with a fractional value above a threshold $\epsilon >0$, and the size of this subset is practically of the order of $h$. 

Similarly,  subgradient computation \nnew{(see \new{\ifnum\extended=1Supplementary material, Sec.~V,~Eq.~(52)\else\cite[Eq.~(63)]{sisalem20techrep}\fi})} may require  $\BigO{N}$ operations per each component and then have  $\BigO{N^2}$ complexity, but in practice, as the vector $\y_t$ is sparse, 
calculations 
require only a constant number of operations and complexity reduces to $\BigO{N}$.

Finally, we use the \Old{negative entropy} negative entropy Bregman projection in~\cite{sisalem21icc}
(line 6 of Algorithm~\ref{algo:online_mirror_ascent}) that has  $\BigO{N  + h \log(h) }$ time complexity. The $\BigO{N  + h \log(h)}$ is due to a partial sorting operation while the actual projection takes $\BigO{h}$. Again, most of the components of $\y_t$ are equal to $0$, so that, in practice, we need to sort much \oold{less} \nnew{fewer} points.


\subsection{Theoretical Guarantees}
\label{sub:guarantees}
The best static cache allocation in hindsight
is the cache state $\x_*$ that maximizes the time-averaged caching gain in Eq.~\eqref{e:gain} over the time horizon $T$, i.e., 
\begin{align}
   \x_*  \in  \underset{\vec x \in \mathcal{X}}{\argmax}\left( G_T (\x) \triangleq\frac{1}{T} \sum^T_{t=1} G(r_t, \x)\right).
   \label{eq:static_gain}
\end{align}
We observe that solving~\eqref{eq:static_gain} is NP-hard in general even for $k=1$ under a stationary request process~\cite{garetto2020similarity}.
Nevertheless,  \acai{} operates in the \emph{online setting} and provides guarantees in terms of the $\psi$-regret~\cite{krause2014submodular}. In this scenario, {the} regret is defined as \nnew{a} gain loss in  comparison to the best static cache allocation $\x_*$ in \eqref{eq:static_gain}.
The $\psi$-regret discounts the \new{best static} gain by a factor $\psi \in (0, 1]$. Formally,
\begin{align}
& \text{$\psi$-}\Regret{\mathcal{X}}{\OMA_{\Phi}} =  \nonumber \\ 
& \underset{\{ r_1, r_2,\dots, r_T\} \in \mathcal{R}^T}{\sup} \left\{ \psi\sum^T_{t=1} G( {r}_t,\x_*) -\mathbb{E} \left[\sum^T_{t=1} G( {r}_t,\x_t)\right] \right\}
    \label{eq:regret},
\end{align}
where the expectation is over the randomized choices of \DepRound.
\new{Note that the supremum in~\eqref{eq:regret} is over all possible request sequences. This definition corresponds to the so\nnew{-}called \emph{adversarial analysis}, imagining that an adversary selects requests in $\mathcal{R}$ to  jeopardize cache performance. \Old{This modeling approach is commonly used to characterize 
system performance under highly volatile external parameters (e.g., the sequence of requests $r_t$) and has been recently successfully applied to caching problems~\cite{paschos2019learning,sisalem21icc,bhattacharjee20}.} \New{ The definition of the regret in Eq.~\eqref{eq:regret}, which compares the gain of the policy to a static offline solution, is classic. Several bandit settings, e.g., simple multi-armed bandits~\cite{radlinski2008learning, kleinberg2008multi, audibert2009exploration}, contextual bandits~\cite{chu2011contextual, agarwal2014taming, dudik2011efficient}, and, of course, their applications to caching problems under the full-information setting~\cite{paschos2019learning, sisalem21icc, 9517925,bhattacharjee20,li2021online}, adopt this definition. In all these cases, the dynamic, adaptive algorithm is compared to a static policy that has full hindsight of the entire trace of actions. Moreover, as is customary \oold{in the context of online problems with  which}
\nnew{when} the offline problem is NP-hard \cite{chen2018online}, the regret is not w.r.t.~the optimal caching gain, but \nnew{w.r.t.}~the gain obtained by an offline approximation algorithm. }
}
\new{\oold{Obviously,} Regret bounds in the adversarial setting provide strong robustness guarantees in practical scenarios.} 
\acai{} has the following regret guarantee:
\begin{theorem}
\label{theorem:main}
Algorithm \ref{algo:online_mirror_ascent} configured with the negative entropy mirror map, learning rate \Old{$\eta=\frac{1}{c_d^k + c_f}\sqrt{\frac{2 \log\left(\frac{N}{h}\right)}{T}}$}\New{$\eta = \frac{1}{(c_d^k + c_f)} \sqrt{\frac{2 \log\left(\frac{N}{h}\right)}{T + (M-1) (M+T)}}$}, and  rounding scheme  \textnormal{\Round{}} or  \textnormal{\DepRound{}} with \nnew{a} freezing period \Old{$ M = 1$} \New{$M=\Theta\left({T^\beta}\right)$ for $\beta \in [0,1)$}, has a sublinear \mbox{$(1-1/e)$-regret} in the number of requests, i.e.,
\Old{\begin{align*}
    \text{$\left(1-1/e\right)$-}& \Regret{\mathcal{X}}{\OMA_{\Phi}} \\
    & \leq \left(1-\frac{1}{e}\right) (c_d^k + c_f) h \sqrt{2 \log{\left(\frac{N}{h}\right)} T},
\end{align*}}
\New{
\begin{align*}
    &\text{$\left(1-1/e\right)$-} \Regret{\mathcal{X}}{\OMA_{\Phi}} \\
    & \leq \left(1-\frac{1}{e}\right) (c_d^k + c_f) h \sqrt{2 \log{\left(\frac{N}{h}\right)}  ( (M-1) (T+M) + T )},
\end{align*}}

where the constant $c_d^k$ is an upper bound on the dissimilarity cost of the k-th closest object  for any request in $\mathcal{R}$.
\label{theorem:knn_total_cost}
\end{theorem}
\old{
A sketch of the proof is available in Supplementary material Sec.~IX (the complete proof is in~\cite{sisalem20techrep}). In the proof we characterize the constant $A$.}
\begin{proof} (sketch)
We first prove that the expected gain of the randomly sampled allocations $\x_t$ is a $(1{-}1/e)$-approximation of the fractional gain. Then, we use online learning results~\cite{bubeck2015convexbook} to bound the regret of OMA schemes operating on a convex decision space against concave gain functions picked by an adversary. The two results are combined to obtain an upper bound on the $(1{-}1/e)$-regret. The full proof is available in Supplementary material\nnew{,} Sec.~IX. 
\end{proof}

\old{
A consequence of Theorem~\ref{theorem:knn_total_cost} is that $\forall \epsilon > 0 $, for $T$ sufficiently large:
\begin{align}
    \mathbb{E} \left[\frac{1}{T}\sum^T_{t=1} G( {r}_t,\x_t)\right] \geq \left(1-\frac{1}{e} -\epsilon \right) \frac{1}{T}\sum^T_{t=1} G( {r}_t,\x_*).
\end{align}
Then \acai{} performs on average as well as a $(1-1/e)$-approximation of the optimal configuration $\x_*$ (NP-hard to calculate). 
}
\new{
\New{The $(1-1/e)$-regret of \acai{} under \Round{} scheme \oold{with its corresponding freezing period $M=1$} has order-optimal regret $\BigO{\sqrt{T}}$~\cite{hazan2016introduction}. \oold{whereas} \nnew{Under} the rounding scheme \DepRound{} with \oold{a corresponding} freezing period $M =\Theta\left({T^\beta}\right)$, the reduced reactivity of \acai{} is reflected by the additional $T^\frac{\beta}{2}$ factor in the order of the regret.}
\Old{A consequence of Theorem~\ref{theorem:knn_total_cost} is that  } Nonetheless, the expected time-average $(1-1/e)$-regret of \acai{} can get arbitrarily close to zero for \nnew{a} large time horizon. Hence, \acai{} performs on average as well as a $(1-1/e)$-approximation of the optimal configuration $\x_*$.   This observation also suggests that our algorithm can be used as an iterative method to solve the NP-hard static allocation problem with the best approximation bound 
achievable for \oold{this kind} \nnew{these kinds} of problem{s}~\cite{nemhauser1978best}. 
}
\begin{corollary}
\label{corollary:offline}
\new{(offline solution) 
Let $\bar{\y}$ be the average fractional allocation $\bar{\y} =\frac{1}{\tilde T} \sum^{\tilde T}_{i=1} \y_i$ of \textnormal{\acai{}}, and $\bar{\x}$ the random state sampled from $\bar{\y}$ through \textnormal{\Round} or \textnormal{\DepRound{}}. If Algorithm~\ref{algo:online_mirror_ascent} is configured with the negative entropy mirror map, and, at each iteration $t \in [\tilde T]$, operates with subgradients of the time-averaged caching gain~\eqref{eq:static_gain},
then $\forall \epsilon > 0$ and over a sufficiently large number of iterations $\tilde T$, $\bar{\x}$ satisfies}
\begin{align}
\nonumber
       \mathbb{E} \left[ G_T(\bar{\x})\right]\geq \left(1-\frac{1}{e} - \epsilon\right) G_T({\x_*}).
\end{align}
where $\x_* =  \underset{\x \in \mathcal{X}}{\arg\max}~G_T(\bar{\x})$.
\end{corollary}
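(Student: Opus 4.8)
The plan is to reduce the statement to three facts already in hand: the online‑mirror‑ascent regret bound that underlies Theorem~\ref{theorem:knn_total_cost}, the concavity of the caching gain noted right after Eq.~\eqref{eq:gain_compact}, and the $(1-1/e)$ rounding guarantee that is used inside the proof of Theorem~\ref{theorem:knn_total_cost}. Essentially this is an ``online‑to‑batch'' conversion: running $\OMA$ on the fixed objective $G_T$ produces iterates whose average is near‑optimal among fractional states, and the rounding then costs a factor $1-1/e$.

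First I would observe that running $\OMA$ for $\tilde T$ iterations while feeding it subgradients $\vec{g}_t \in \partial G_T(\y_t)$ of the \emph{fixed} concave function $G_T$ from~\eqref{eq:static_gain} is just the adversarial setting analysed for Theorem~\ref{theorem:knn_total_cost}, with the degenerate adversary that replays $G_T$ at every step. Hence the very same balancing of the Bregman‑divergence term against the squared dual‑norm term (negative entropy map, $\eta=\Theta(1/\sqrt{\tilde T})$) yields the \emph{fractional} regret bound $\sum_{t=1}^{\tilde T}\bigl(\max_{\y\in\convX\cap\mathcal D}G_T(\y)-G_T(\y_t)\bigr)\le B\sqrt{\tilde T}$, where $B=\BigO{(c_d^k+c_f)\,h\sqrt{\log(N/h)}}$ is exactly the constant appearing in Theorem~\ref{theorem:knn_total_cost} before the $\psi$ discount. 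Since $G_T$ is continuous on $\convX$, the set $\convX\cap\mathcal D$ is dense in $\convX$, and $\x_*\in\mathcal X\subseteq\convX$, the comparator $\max_{\y\in\convX\cap\mathcal D}G_T(\y)$ is lower bounded by $G_T(\x_*)$.

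Next I would push this through Jensen's inequality. Because each $G(r_t,\cdot)$, and therefore $G_T=\frac1T\sum_t G(r_t,\cdot)$, is concave on $\convX$, the averaged iterate $\bar{\y}=\frac1{\tilde T}\sum_{i=1}^{\tilde T}\y_i$ satisfies $G_T(\bar{\y})\ge\frac1{\tilde T}\sum_{t=1}^{\tilde T}G_T(\y_t)\ge G_T(\x_*)-B/\sqrt{\tilde T}$. Finally I would invoke the rounding guarantee from the proof of Theorem~\ref{theorem:knn_total_cost}, which is pointwise in the fractional argument: for every request $r$, the state $\bar{\x}$ drawn from $\bar{\y}$ by \DepRound{} or \Round{} obeys $\mathbb E[G(r,\bar{\x})]\ge(1-1/e)\,G(r,\bar{\y})$. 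Averaging over $t\in[T]$ and chaining the two displays gives $\mathbb E[G_T(\bar{\x})]\ge(1-1/e)G_T(\bar{\y})\ge(1-1/e)\bigl(G_T(\x_*)-B/\sqrt{\tilde T}\bigr)$. If $G_T(\x_*)=0$ the claim is trivial since the caching gain is nonnegative; otherwise it suffices to take $\tilde T\ge\bigl((1-1/e)B/(\epsilon\,G_T(\x_*))\bigr)^2$, which bounds the additive slack by $\epsilon\,G_T(\x_*)$ and gives $\mathbb E[G_T(\bar{\x})]\ge(1-1/e-\epsilon)G_T(\x_*)$.

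The only genuinely delicate point I anticipate is the first step: verifying that the adversarial regret analysis behind Theorem~\ref{theorem:knn_total_cost} specialises verbatim to a repeated fixed objective, and that replacing $\convX$ by $\convX\cap\mathcal D$ — forced because the negative entropy mirror map has the positive orthant as its domain, so the integral vertex $\x_*$ lies on the boundary of $\mathcal D$ — does not shrink the comparator value. Both are handled by the continuity/closure argument sketched above. Everything else is a routine composition of Jensen's inequality with results already established.
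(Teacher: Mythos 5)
Your proposal is correct and follows essentially the same route as the paper: an online-to-batch conversion of the $\OMA$ regret bound applied to the repeated fixed objective $G_T$, Jensen's inequality on the concave $G_T$ to pass to the averaged iterate $\bar{\y}$, the pointwise $(1-1/e)$ rounding guarantee (which the paper obtains via the auxiliary function $\El$ and Proposition~\ref{proposition:gain_upper_lower_bounds}), and the observation that the fractional optimum dominates $G_T(\x_*)$. Your explicit handling of the $\convX\cap\mathcal{D}$ comparator and the $G_T(\x_*)=0$ edge case are minor technicalities the paper leaves implicit, but they do not change the argument.
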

The proof can be found in Supplementary material, Sec.~X.

\section{Experiments}
\label{sec:experiments}

We start evaluating \acai{} in a simple scenario with a synthetic request process, for which we can compute the  optimal fractional static cache allocation. We then consider real-world catalogs and traces and  compare our solution  with state\nnew{-}of\nnew{-}the\nnew{-}art online policies proposed for \knntext{} caching, i.e.,  \simlru{} \cite{pandey2009nearest}, \clslru{} \cite{pandey2009nearest}, and \qcache{} \cite{FALCHI2012803} described in Sec.~\ref{sec:bg_similarity}.

\subsection{Simple Scenario}
As in~\cite{sabnis2021grades}, we consider a synthetic catalog 
 of $N=900$ objects positioned on a $30\times30$ grid. 
The request process is generated according to the Independent Reference Model~\cite{coffman1973operating}. The objects' popularity is represented by a Gaussian distribution. In particular, an object $o \in \mathcal{N}$  with \oold{an} $l_1$ \oold{(norm-1)} distance $d_o$ from the center of the grid (15, 15)  is requested  at any time slot $t$ with probability $p_o \propto e^{- \frac{d^2_o}{2 \times 6^2}}$. The synthetic catalog is depicted in Fig.~\ref{fig:synthetic_catalog}.

We consider the dissimilarity cost to be the $l_1$ distance. We take different values for the retrieval cost $c_f \in \{1, 2,3,4\}$ and the number of neighbors  $k \in \{1,2,3,4,5\}$. We take the cache capacity to be $h = 15$. We configure $\acai$ with \DepRound{} rounding scheme.

\begin{figure}[!t]
    \centering
    \includegraphics[width=.5\linewidth]{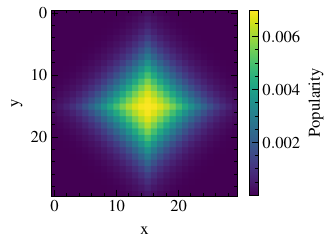}\vspace{-.5em}
    \caption{Synthetic catalog of objects located on a $30\times30$ grid. The heatmap depicts the popularity distribution of objects in the grid. 
    }
    \label{fig:synthetic_catalog}
\end{figure}

We use CVXPY~\cite{diamond2016cvxpy} to find the optimal fractional static cache allocation, and \acai{} to compute its approximation according to  Corollary~\ref{corollary:offline}. In particular, \acai{} runs for $T = 10\,000$ iterations with a diminishing learning rate $\eta_t = \frac{2.0}{c_f} (1 + \cos(\frac{\pi t}{T}))$.


\begin{figure}[!t]
    \centering
    \subcaptionbox{Optimal fractional static cache allocations}{\includegraphics[width=.95\linewidth]{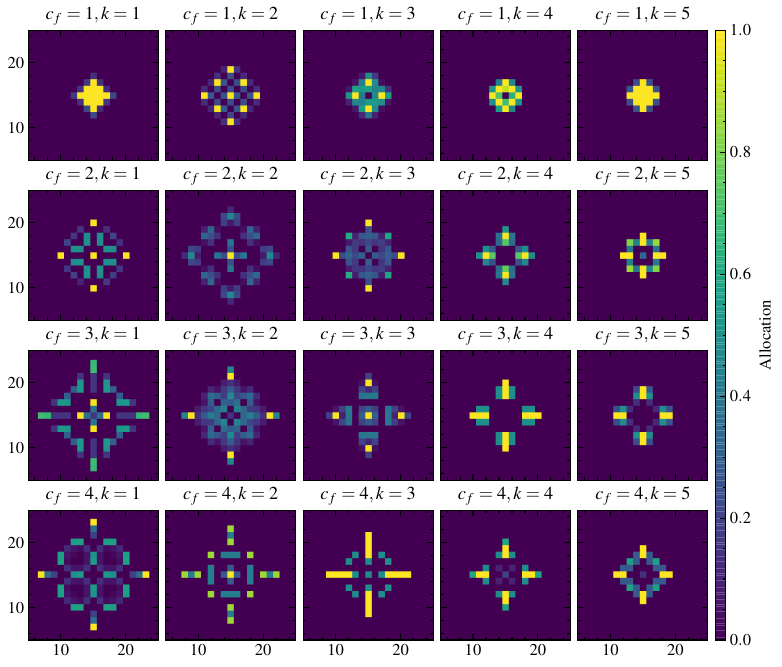}}
    \subcaptionbox{\acai's fractional static cache allocations}{\includegraphics[width=.95\linewidth]{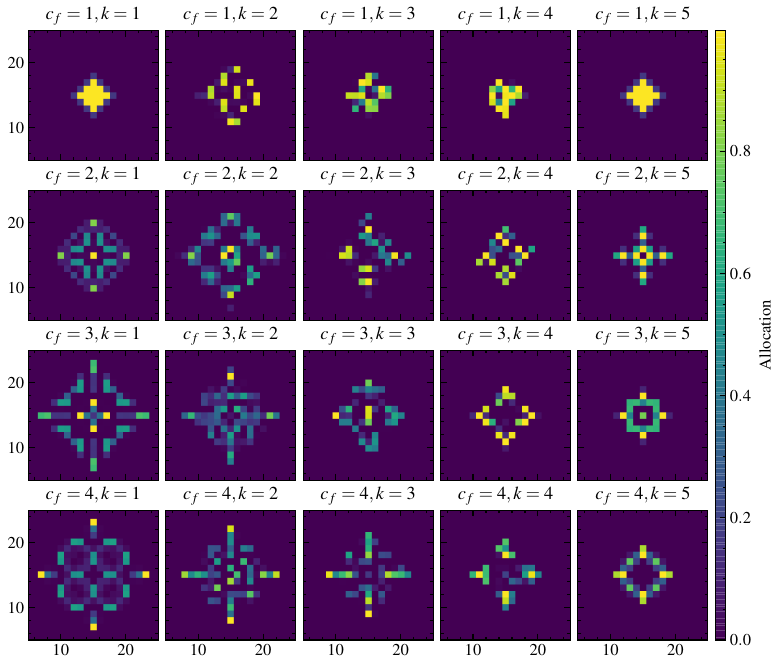}}
    \caption{The optimal fractional static cache allocations, and \acai's fractional static cache allocations under different values of \nnew{the} retrieval costs $c_f \in \{1, 2,3,4\}$ and \nnew{of the} number of neighbors $k \in \{1,2,3,4,5\}$.  }
    \label{fig:allocations}
\end{figure}

\vspace{2mm}
\noindent \textbf{Results.} 
Figure~\ref{fig:allocations} shows the optimal fractional static cache allocations and the \acai's fractional static cache allocations (see Corollary~\ref{corollary:offline}) under different retrieval costs and the number of neighbors~$k$. We observe that the optimal fractional static cache allocations in Fig.~\ref{fig:allocations}~(a) are symmetric, while \acai's fractional static cache allocations in  Fig.~\ref{fig:allocations}~(b) partially lose this symmetry  for values of $k \in \{2,3,4\}$ primarily due to the different ways a \knntext{} query can be satisfied over the physical catalog for such values. In fact, 
there are multiple objects in the catalog with the same distance from a request $r$, and \acai{} only selects a single permutation $\pi^r$ for a request~$r$. We observe that, when the retrieval cost is higher, the allocations are more spread to cover a larger part of the popular region. \oold{Moreover, for larger values of the number~$k$ of objects to be served,  {more mass is added to the neighborhood of}  the cached objects, but when the number~$k$ of objects to be served changes from $k = 1$ to $k=2$ it is not clearly observed. }
\begin{figure}[!t]
    \centering
    \includegraphics[width=.6\linewidth]{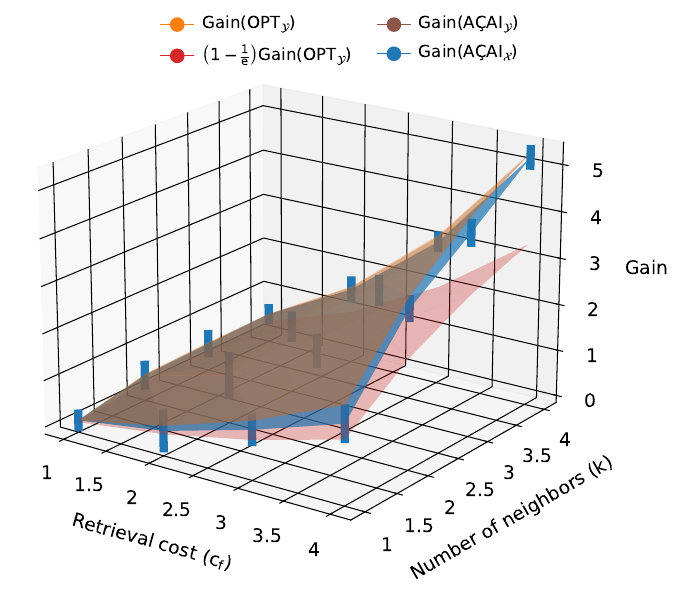}
    \caption{The gain of the optimal fractional static cache allocations \nnew{($\mathrm{Gain} (\mathrm{OPT}_{\mathcal Y})$)}, its $(1-\frac{1}{e})$-approximation, and the gain obtained by \acai's fractional static \oold{approximated} cache allocations \nnew{($\mathrm{Gain} (\mathrm{A\c{C}AI}_{\mathcal Y})$)} and static integral \oold{approximated} cache allocations \nnew{($\mathrm{Gain} (\mathrm{A\c{C}AI}_{\mathcal X})$)} under different values of \nnew{the} retrieval cost $c_f \in \{1, 2,3,4\}$ and \nnew{of the} number of neighbors $k \in \{1,2,3,4,5\}$. For \acai{}'s static integral \oold{approximated} cache allocations, we report 
    95\% confidence intervals computed over 50 different runs. The gain of the fractional static \oold{approximated} cache allocations obtained by \acai{} overlaps with \nnew{the} gain of the optimal fractional static {cache} allocations.}
    \label{fig:static_vs_fractional_vs_integral}
\end{figure}

In Fig.~\ref{fig:static_vs_fractional_vs_integral}, we compare the gain obtained by the optimal fractional static {cache} \nnew{allocations}, its $(1-\frac{1}{e})$-approximation, and the gain obtained by \nnew{\acai's} fractional static \oold{approximated} \nnew{{cache}} allocation{s}  and the static integral \oold{approximated} \nnew{{cache}} allocation{s} \oold{of \acai~(obtained through \DepRound)} \nnew{(see Corollary~\ref{corollary:offline})}. 
While Fig.~\ref{fig:allocations} shows that \acai{}'s fractional static cache allocations may differ from the optimal ones, their costs are practically indistinguishable (the two corresponding surfaces overlap in Fig.~\ref{fig:static_vs_fractional_vs_integral}). 
We also observe  that rounding comes at a cost, as there is a clear gap between the gain of \acai{}'s fractional \nnew{{cache}} allocations  and of \acai{}'s integral ones. Still, the average gain of the integral \nnew{{cache}} allocations remains close to the fractional optimum and well above the $(1-\frac{1}{e})$-approximation. \acai{} then performs much better than what \nnew{is} guaranteed by Corollary~\ref{corollary:offline} (a potential gain reduction by a factor $1-\frac{1}{e}$).

\subsection{Real-world Datasets}

\noindent \textbf{SIFT1M trace.} SIFT1M is a classic benchmark \oold{data-set} \nnew{dataset} to evaluate approximate \knntext{} algorithms~\cite{jegou2010product}. It contains 1 million objects embedded as points in a 128-dimensional space. SIFT1M does not provide a request trace so we generated a synthetic one according to the Independent Reference Model~\cite{coffman1973operating} (similar to what \nnew{is} done in other papers like~\cite{falchi2008metric, guo2018potluck}). Request $r_t$ is for object $i$ with a  probability $\lambda_i$ independently \oold {from} \nnew {of} previous requests.
We spatially correlated \oold{requests} \nnew{objects' popularities} by letting $\lambda_i$ depend on the position of the embeddings in the space. In particular, we considered the barycenter of the whole dataset and set $\lambda_i$ proportional to $d_i^{-\beta}$, where $d_i$ is the distance of~$i$ from the barycenter. The parameter $\beta$ was chosen such that the tail of the ranked objects\nnew{'} popularity distribution is similar to a Zipf with parameter $0.9$, as observed in some image retrieval systems~\cite{FALCHI2012803}. 
We generated a trace with $10^5$ requests. 
The number of distinct objects requested in the trace is approximately $2 \times 10^4$.

\vspace{2mm}
\noindent \textbf{Amazon trace.} The authors of \cite{mcauley2015image} crawled the Amazon web-store and collected a dataset to model relationships among products and provide user recommendations. They took as input the visual features of product images obtained from a machine learning model pre-trained on 1.2 million images from ImageNet. \New{The visual features are augmented with the relationships between the items, and these relationships are collected based on the cosine similarity of the sets of users who purchased or viewed the items. The objects' dissimilarity is modeled as a distance $d(\cdot,\cdot)$, such that $\mathbb P(\text{item }i \text{ is related to item } j)$ increases monotonically with $d(\x_i ,\x_j)$, where $\x_i$ and $\x_j$ are the visual features of the items $i$ and $j$. The authors of~\cite{mcauley2015image} show that the relationship `users who viewed $i$ also viewed $j$' can successfully be used to provide accurate recommendations.}
The authors of \cite{sabnis21} built a request trace from the timestamped user reviews for objects in the category Baby embedded in a $100$-dimensional space. Two products $o$ and $o'$ are considered similar if they have been viewed by the same users.  We use the request trace from~\cite{sabnis21} and in particular the interval $[2 \times 10^5,3 \times 10^5]$.\footnote{\new{We discard the initial part of the trace because it contains requests only for a small set of objects
(likely the set of products to crawl was progressively extended during the measurement campaign in~\cite{mcauley2015image}).
} } The number of distinct objects requested in this trace is approximately $2 \times 10^4$.

\subsection{Settings and Performance Metrics}

For \acai, unless otherwise said, we choose the negative entropy $\Phi(\y) = \sum_{i \in \mathcal{N}} y_i \log (y_i)$ as mirror map (see Fig.~\ref{fig:negentropy_vs_euclidean} and the corresponding discussion for other choices)  and the rounding scheme \DepRound{} with $M=1$.
The learning rate is set to the best value found exploring the range $[10^{-6}, 10^{-4}]$.


As for the state-of-the-art caching policies, \simlru{} and \clslru{} have two parameters, $C_\theta$ and $k'$, that we set in each experiment to the best values we found exploring the ranges $[c_f, 2c_f]$ for $C_\theta$ and $[1, h]$ for $k'$. For \qcache{} we consider $l=h/k$: the cache can then perform the \knntext{} search over all local objects.

We also consider a simple similarity caching policy that stores previous requests and the corresponding set of $k$ closest objects as key-value pairs, and manages the set of keys according to \lru. The cache then serves locally the request if it coincides with one of the previous requests in its memory, it forwards it to the server, otherwise. The ordered list of keys is updated as in \lru. We refer to this policy simply as \lru.
We  compare the policies in terms of their normalized average caching gain \oold{per-request} \nnew{per request}, where the normalization factor corresponds to the caching gain of a cache with \nnew{a} size equal to the whole catalog. In such \nnew{a} case, the cache could store the entire catalog locally and would achieve the same dissimilarity cost of the server without paying any fetching cost. The maximum possible caching gain is then $k c_f$. The normalized average gain of a policy $\mathcal{P}$ with cache states $\{\x_{t}\}^T_{t=1}$ over $T$ requests can then be defined as:
\begin{align}
   \mathrm{NAG} (\mathcal{P}) = \frac{1}{k c_f T} \sum^T_{t=1} G(r_t,\x_t).
\end{align}

\subsection{Results}

We consider a dissimilarity cost proportional to the squared Euclidean distance. This is the usual metric considered for SIFT1M benchmark and also the one  considered to learn the embeddings for the Amazon trace in~\cite{mcauley2015image}.

The numerical value of the fetching cost depends on its interpretation (delay experienced by the user, load on the server or on the network) as well as on the application, because it needs to be converted into the same unit of the approximation cost. In our evaluation, we let it depend on the topological characteristics of the dataset in order to be able to compare the results for the two different traces. Unless otherwise said, we set $c_f$ equal to the average distance of the 50-th closest neighbor \new{in the catalog $\mathcal N$}.

\begin{figure}[!t]
\captionsetup[subfigure]{aboveskip=0pt,belowskip=-1pt}
    \centering
      \subcaptionbox{SIFT1M  trace}{
     \includegraphics[width=.46\linewidth]{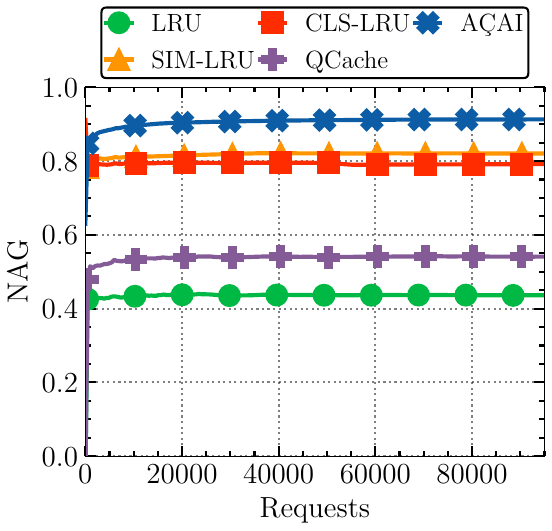}
    }
    \subcaptionbox{Amazon trace}{
    \includegraphics[width=.46\linewidth]{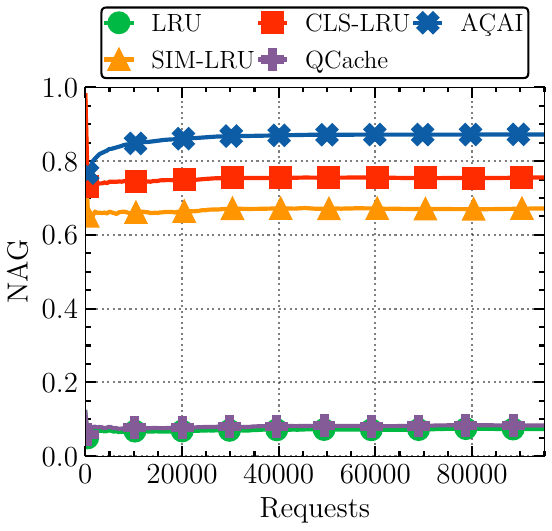}
   }
    \caption{Caching gain for the different policies. The cache size is $h=1000$ and $k = 10$.}
    \label{fig:dynamic_behaviour}
\end{figure}

Figure~\ref{fig:dynamic_behaviour} shows how the normalized average gain changes  as requests arrive and the different caching policies update the local set of objects (starting from an empty configuration). The cache size is $h=1000$ and the cache provides $k=10$ similar objects for each request. All policies reach an almost stationary gain after at most a few thousand requests. Unsurprisingly, the na\"ive \lru{} has the lowest gain (it can only satisfy locally requests that match exactly a previous request) and similarity caching policies perform better. \acai{} has a significant improvement in comparison to the second best policy (\simlru{} for SIFT1M and \clslru{} for Amazon). 

\begin{figure}[!t]
\captionsetup[subfigure]{aboveskip=-1pt,belowskip=-1pt}
    \centering
        \subcaptionbox{SIFT1M  trace}{
         \includegraphics[width=.46\linewidth]{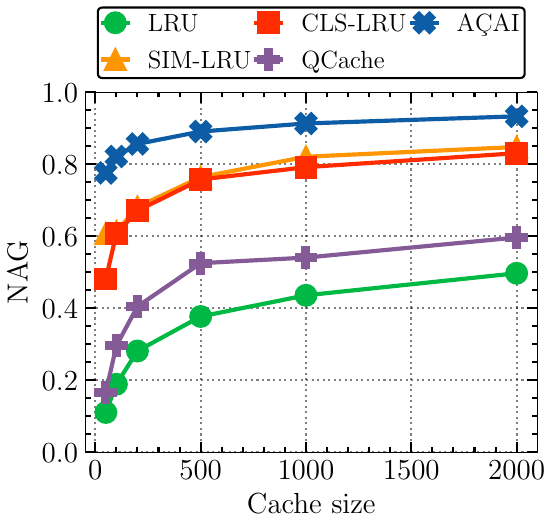}
        }
    \subcaptionbox{Amazon trace}{
     \includegraphics[width=.46\linewidth]{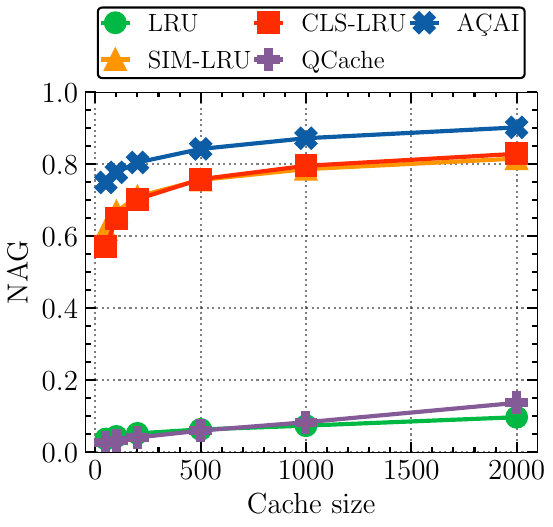}
    }

    \caption{Caching gain for the different policies, for different cache sizes $h \in \{50, 100, 200,500, 1000, 2000\}$ and $k =10$.  }
    \label{fig:gain_vs_cache_size}
\end{figure}

\begin{figure}[!t]
\captionsetup[subfigure]{aboveskip=-1pt,belowskip=-1pt}
    \centering
    \subcaptionbox{SIFT1M trace}{
    \includegraphics[width=.46\linewidth]{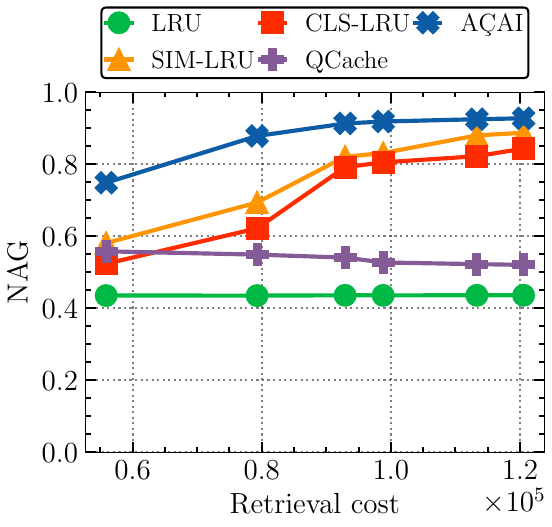}
    }
    \subcaptionbox{Amazon trace}{
    \includegraphics[width=.46\linewidth]{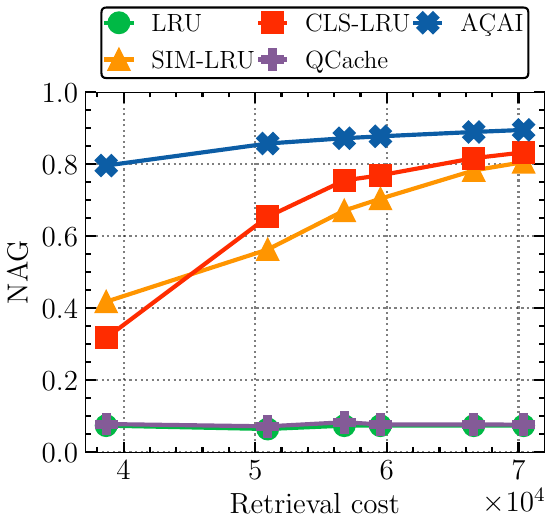}
    }
    \caption{Caching gain for the different policies and different retrieval cost\nnew{s}. The retrieval cost $c_f$ is taken as the average distance to the $i$-th neighbor, $i \in \{2, 10, 50, 100, 500, 1000\}$. The cache size is $h= 1000$ and $k = 10$. }
    \label{fig:gain_vs_fetching_cost}
\end{figure}

\begin{figure}[!t]
\captionsetup[subfigure]{aboveskip=-1pt,belowskip=-1pt}
    \centering
        \subcaptionbox{SIFT1M  trace}{
    \includegraphics[width=.46\linewidth]{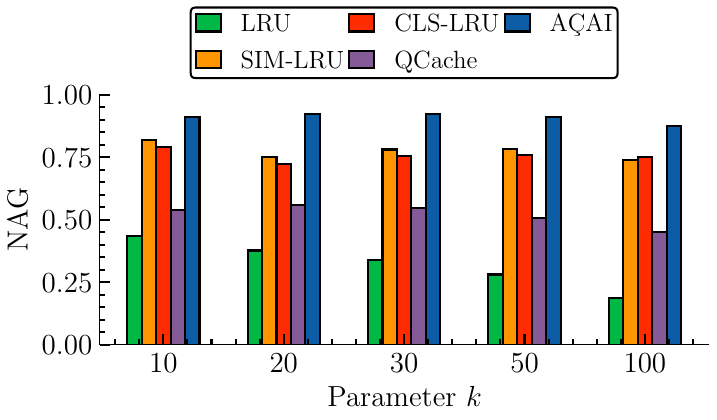}
    }
    \subcaptionbox{Amazon trace}{
     \includegraphics[width=.46\linewidth]{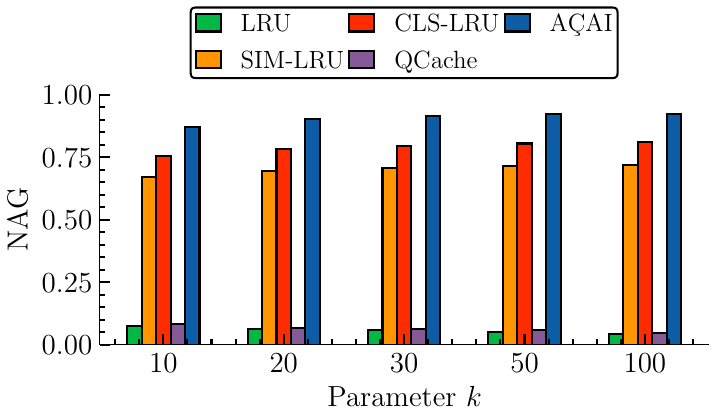}
    }
  \caption{Caching gain for the different policies. The cache size is $h = 1000$, and $k \in \{10,20,30,50,100\}$. }
    \label{fig:gain_vs_k}
\end{figure}

This advantage of \acai{} is constantly confirmed for different cache sizes (Fig.~\ref{fig:gain_vs_cache_size}), different values of the fetching cost~$c_f$ (Fig.~\ref{fig:gain_vs_fetching_cost}), and different values of $k$ (Fig.~\ref{fig:gain_vs_k}). The relative improvement of \acai{}, in comparison to the second best policy, is larger for small values of the cache size ($+30\%$ for SIFT1M and $+25\%$  for Amazon when $h=50$), and small values of the fetching cost ($+35\%$ for SIFT1M and $+100\%$ for $c_f$ equal to the average distance from the second closest object). Note how these are the settings where caching choices are more difficult (and indeed all policies have lower gains): when  cache storage can accommodate only a few objects, it is critical to carefully select which ones to store; when the server is close, the costs of serving requests from the cache or \oold{from} the server are similar and it is difficult to correctly decide how to satisfy the request. 
\nnew{The performance of} caching policies \oold{performance} \oold{are} \nnew{is} in general less dependent on the number $k$ of similar objects to retrieve and \acai{} achieves about $10\%$ improvement for $k$ between $10$ and $100$ when $h=1000$ (Fig.~\ref{fig:gain_vs_k}).

\begin{figure}[!t]

    \centering
         \subcaptionbox{\acai, $h = 50$}{
       \includegraphics[width=.46\linewidth]{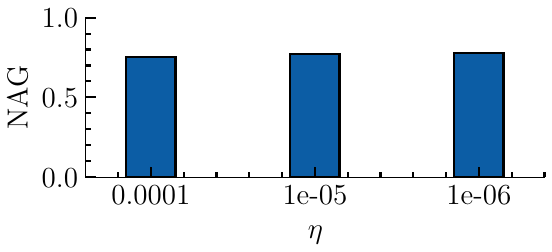}
    } \subcaptionbox{\acai, $h = 10^3$}{
       \includegraphics[width=.45\linewidth]{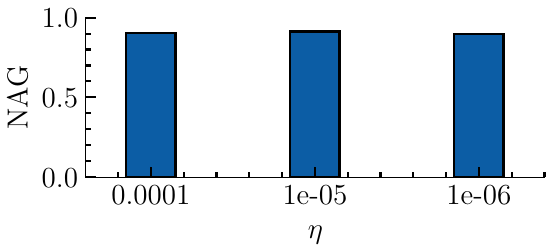}
    }

      \subcaptionbox{\simlru, $h = 50$}{
    \includegraphics[width=.46\linewidth]{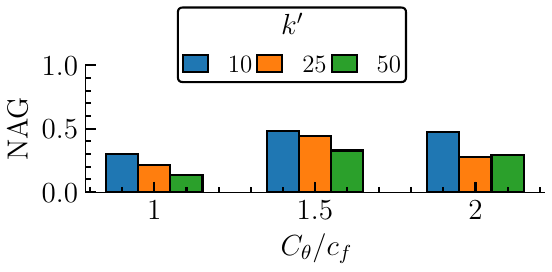}
    
    }
           \subcaptionbox{\simlru, $h = 10^3$}{
     \includegraphics[width=.46\linewidth]{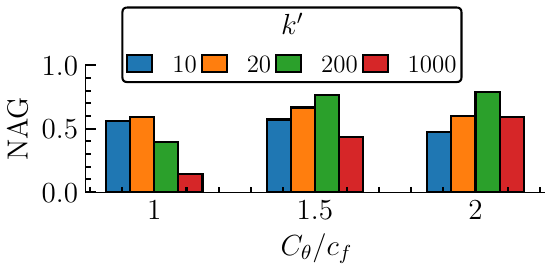}
    }
    
    \subcaptionbox{\clslru, $h = 50$}{
    \includegraphics[width=.46\linewidth]{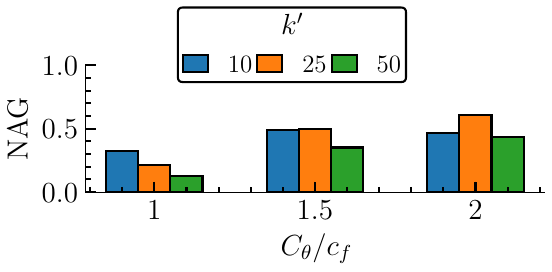}
    }
        \subcaptionbox{\clslru, $h = 10^3$}{
     \includegraphics[width=.46\linewidth]{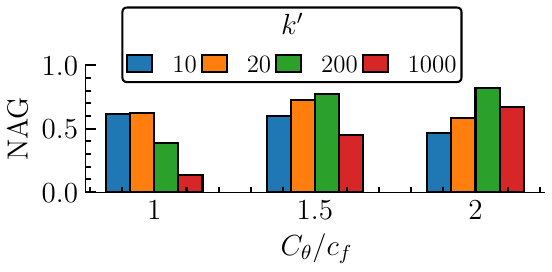}
    }

    \caption{Caching gain for \acai{} for different values of $\eta$ (top). Caching gain for \simlru (middle) and \clslru{} (bottom) for different values of the parameters $(k', C_\theta)$. SIFT1M  trace. }
    \label{fig:parameter_configuration}
\end{figure}

\vspace{2mm}
\noindent
{\bf Sensitivity analysis.}
We now evaluate the robustness of \acai{} to the configuration of its single parameter (the learning rate~$\eta$). Figure~\ref{fig:parameter_configuration} shows indeed that, for learning rates that are two orders of magnitude apart, we can achieve almost the same normalized average gain both for $h=50$  and for $h=1000$.\footnote{
    \new{Under a stationary request process, a smaller learning rate would lead to converge slower but to a solution closer to the optimal one. Under a non-stationary process, a higher learning rate may allow faster adaptivity. In this trace, the two effects almost compensate, but see also Fig.~\ref{fig:negentropy_vs_euclidean}.}
} 

In contrast, the performance of the second best policies (\simlru{} and \clslru) are more sensitive to the choice of their two configuration parameters $k'$ and $C_\theta$. For example, the optimal configuration of \simlru{} is $k'=10$ and $C_\theta = 1.5 \times c_f$ for a small cache ($h=50$) but $k'=200$ and $C_\theta = 2 \times c_f$ for a large one ($h=1000$). Moreover, in both cases a misconfiguration of these parameters would lead to significant performance degradation.

\begin{figure}[!t]
    \centering
  \setlength\abovecaptionskip{-0.3\baselineskip}
    \includegraphics[width=.6\linewidth]{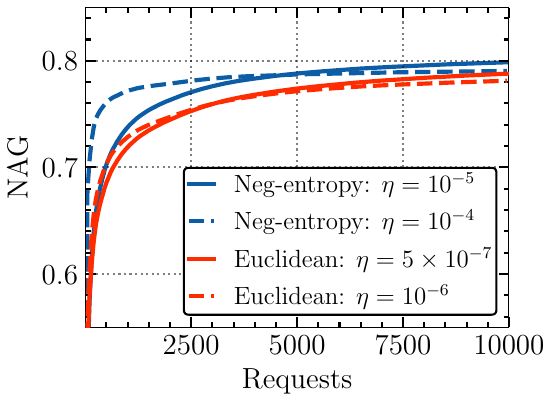}
    \caption{Caching gain for \acai{} configured with negative entropy and Euclidean maps (SIFT1M trace). The cache size is \new{$h=100$} and $ k= 10$.}
    \label{fig:negentropy_vs_euclidean}
\end{figure}

\vspace{2mm}
\noindent
{\bf Choice of the mirror map.}
If the mirror map is selected equal to the squared Euclidean norm, the $\OMA$ update coincides with a standard gradient update. Figure~\ref{fig:negentropy_vs_euclidean} shows the superiority of the negative entropy map: \new{it allows \nnew{one} to achieve a higher gain than the Euclidean norm map or the same gain but in a shorter time}. To the best of our knowledge, ours is the first paper that shows the advantage of using non-Euclidean mirror maps for \New{similarity} caching problems. 
\nnew{We observe how our finding is in apparent contrast with what observed  for exact caches in~\cite{sisalem21icc}, i.e., that the Euclidean mirror map should always be preferred when requests are not batched. The  difference can be explained as follows: under exact caching only the requested object can satisfy the request and then the gradient has a single non-null component, but, under similarity caching, multiple objects in the vicinity of a request can contribute to reduce the cost of serving it and the gradient is then less sparse. It is known that denser gradients may lead to prefer the negative entropy map \cite[Sec.~4.3]{bubeck2015convexbook} 
and our results in \ifnum\extended=1 Supplementary material\nnew{,} Sec.~VII-A\else \cite[Appendix D]{sisalem20techrep}\fi~provide a theoretical justification for our specific problem. 
}

\vspace{2mm}
\noindent
{\bf Dissecting \acai{} performance.}
In comparison to state-of-the-art similarity  caching  policies, \acai{} introduces two key ingredients: (i) the use of fast, approximate indexes to decide what to serve from the local catalog and what from the remote one, and (ii) the $\OMA$ algorithm to update the cache state. It is useful to understand how much each ingredient contributes to \acai{} improvement with respect to the other policies.  

To this aim, we integrated the same indexes in the other policies allowing them to serve requests as \acai{} does, combining  both  local  objects  and  remote ones \oold{on the basis of} \nnew{based on} their costs (see Sec.~\ref{sec:request_serving}), while leaving their cache updating mechanism unchanged. We then compute, in the same setting of Fig.~\ref{fig:gain_vs_k}, how much the gain of the second best policy (\simlru{} for SIFT1M and \clslru{} for Amazon) increases because of \acai{} request service mechanism. This is the part of \acai{} improvement attributed to the use of the two indexes, the rest is attributed to the cache update mechanism through $\OMA$. We observe from Fig.~\ref{fig:caching_gain_contribution} that most of \acai{} gain improvement over the second best caching policy is due to the use of approximate indexes, but $\OMA$ updates are still responsible for $15$--$20\%$ of \acai{} performance improvement  under SIFT1M trace and \oold{for} $20$--$35\%$ for the Amazon trace. 

\begin{figure}[!t]
\captionsetup[subfigure]{aboveskip=-1pt,belowskip=-1pt}
    \centering
        \subcaptionbox{SIFT1M  trace}{
    \includegraphics[width=.46\linewidth]{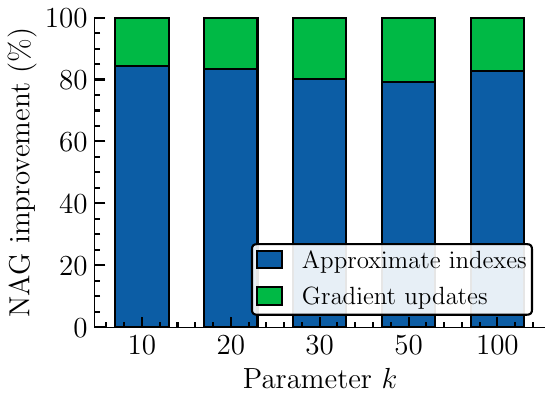}
    }
    \subcaptionbox{Amazon trace}{
    \includegraphics[width=.46\linewidth]{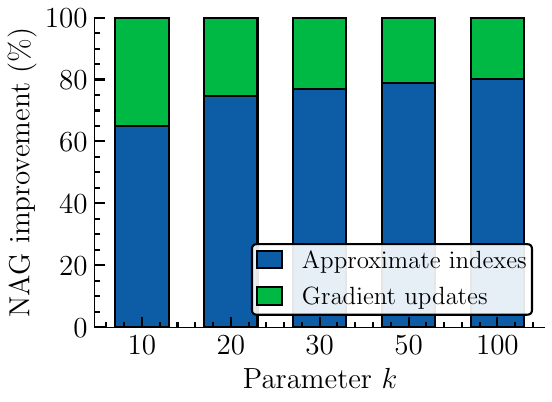}
    }

  \caption{\acai{} caching gain improvement in comparison to the second best state-of-the-art similarity caching policy: contribution of approximate indexes and gradient updates. The cache size is $h = 1000$, and $k \in \{10,20,30,50,100\}$. }
    \label{fig:caching_gain_contribution}
\end{figure}

\vspace{2mm}
\noindent
{\bf Update cost.}
In this part, we evaluate the update cost of the different rounding schemes. We set the cache size $h= 1000$ and $k = 10$. We run \acai{} over the {Amazon trace} with a learning rate $\eta = 10^{-5}$.

Figure~\ref{fig:amazon_update_cost} (a) gives the time-averaged number of files fetched and  Figure~\ref{fig:amazon_update_cost} (b) gives the caching gain of the different rounding schemes. We observe \nnew{that, }by increasing the cache state freezing parameter $M$, the system fetches \oold{less} \nnew{fewer} files per iteration at the expense of losing reactivity \nnew{and then incurring a smaller gain} at the start. The coupled rounding scheme achieves the best performance as it fetches \oold{less} \nnew{fewer} files without losing reactivity. 

\begin{figure}[t]
\centering
  \subcaptionbox{Time-averaged fetched files}{
    \includegraphics[width=.46\linewidth]{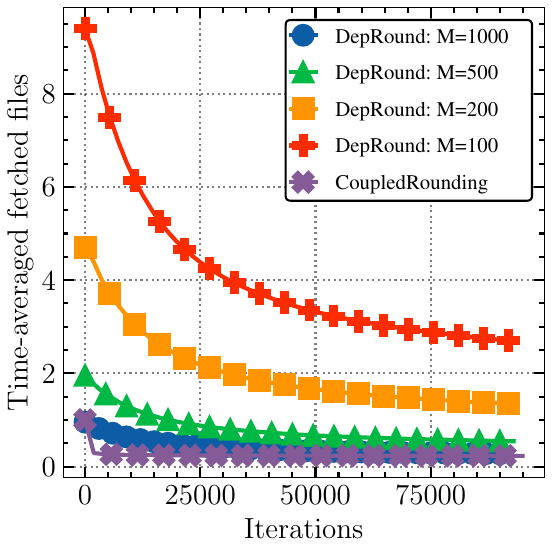}
  } 
  \subcaptionbox{Caching gain}{
 
    \includegraphics[width=.46\linewidth]{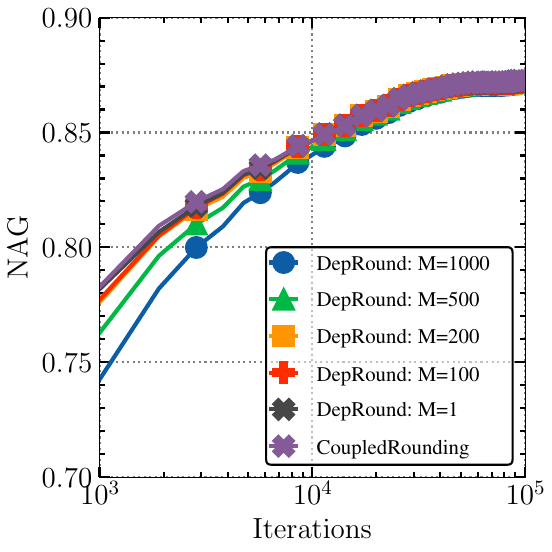}
  }
  \caption{Number of fetched files (time average) and caching gain of \acai{} under different rounding schemes. \acai{} is run with the learning rate $\eta = 10^{-5}$ over the {Amazon trace}. The cache size is $h= 1000$ and $k = 10$.}
\label{fig:amazon_update_cost}
\end{figure}
Figure~\ref{fig:amazon_cache_occupancy} \nnew{shows} \oold{gives} the instantaneous and time\nnew{-}averaged cache occupancy of the cache using the \Round~scheme with the relaxed capacity constraint. We observe that the time\nnew{-}averaged cache occupancy rapidly converges to the cache capacity $h$, while the instantaneous occupancy is kept within 5\% of the cache capacity.
\begin{figure}[t]
    \centering
    \includegraphics[width=.46\linewidth]{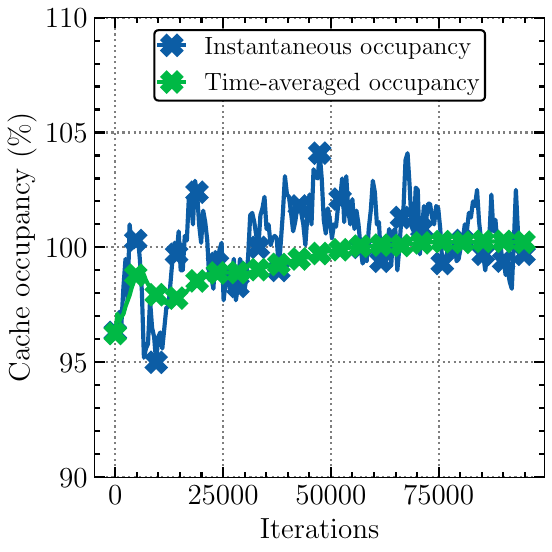}
    \caption{Time-averaged and instantaneous cache occupancy under \Round. \acai{} is run with the learning rate $\eta = 10^{-5}$ over the {Amazon trace}. The cache size is $h= 1000$ and $k = 10$. }
    \label{fig:amazon_cache_occupancy}
\end{figure}
 \section{Conclusion} 
\label{Conclusion}
Edge computing provides computing and storage resources that may enable complex applications with tight delay guarantee\nnew{s} like augmented-reality ones, but these strategically positioned resources need to be used efficiently. To this aim, we designed \acai{}, a content cache management policy that determines dynamically the best content to store on the edge server. Our solution adapts to the user requests, without any assumption on the traffic arrival pattern. \acai{} leverages two key components: (i)~new efficient content indexing methods to keep track of both local and remote content, and (ii) mirror ascending techniques to optimally select the content to store. The results show that \acai{} is able to outperform the state-of-the-art policies and does not need careful parameter tuning.

As future work, we plan to evaluate \acai{} in the context of machine learning classification tasks~\cite{khandelwal2019generalization}, in which the size of the objects in the catalog is comparable to  their $d$-dimensional representation in the index, and, as a consequence, the index size cannot be neglected in comparison to the local catalog size. \New{Another important future research direction is to consider dynamic regret, whereby the performance of a policy is compared to a dynamic optimum. Also, since the employed online algorithm OMA is greedy (i.e., does not keep track of the history of the requests), with careful selection of the mirror map, it may have an adaptive regret guarantee, e.g., such guarantee holds for OGD~\cite{Zinkevich2003}.}

\bibliographystyle{IEEEtran}
\bibliography{0.bibliography.bib}

\ifnum\extended=1



\newpage
\onecolumn
\begin{center}
{\Large  Supplementary Material for Paper:}\\
\vspace{0.4 cm}
{\huge Ascent Similarity Caching with Approximate Indexes}
\end{center}

\setcounter{section}{0}

\section{Equivalent Expression of the Cost Function}

\begin{lemma}
\label{lemma:mins_subtraction}
{Let us fix the threshold} $c \in \mathbb{N} \cup \{0\}$, $r \in \mathcal{R}$ and $i \in \mathcal{U}$. The following equality holds
\begin{align}
   & \min\left\{c, \sum^{i}_{j=1} x_{\pi^r_j}\right\} - \min\left\{c, \sum^{i-1}_{j=1} x_{\pi^r_j}\right\} =x_{\pi_i^r} \mathds{1}_{\left\{ \sum^{i-1}_{j=1} x_{\pi_j^r} < c\right\}} \label{eq:mins_subtraction}.
\end{align}
\end{lemma}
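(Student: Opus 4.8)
The plan is to prove the identity by case analysis on the relationship between $\sum_{j=1}^{i-1} x_{\pi^r_j}$ and the threshold $c$, treating the two sides of \eqref{eq:mins_subtraction} separately and checking they agree in every case. Since each $x_{\pi^r_j} \in \{0,1\}$, the partial sums $S_{i-1} \triangleq \sum_{j=1}^{i-1} x_{\pi^r_j}$ and $S_i \triangleq \sum_{j=1}^{i} x_{\pi^r_j}$ satisfy $S_i = S_{i-1} + x_{\pi^r_i}$ and $S_i \in \{S_{i-1}, S_{i-1}+1\}$. This monotone, unit-step structure is exactly what makes the telescoping-style identity hold.

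First I would fix the three cases: (a) $S_{i-1} \ge c$; (b) $S_{i-1} = c-1$ (only relevant when $c \ge 1$); and (c) $S_{i-1} \le c-1$ with the further subdivision by the value of $x_{\pi^r_i}$. In case (a), both minima on the left equal $c$ (since $S_i \ge S_{i-1} \ge c$ too), so the left side is $0$; and the indicator $\mathds{1}_{\{S_{i-1} < c\}}$ on the right is $0$, so the right side is $0$. In case (c) with $S_{i-1} < c$, we have $S_i = S_{i-1} + x_{\pi^r_i} \le S_{i-1} + 1 \le c$, so $\min\{c, S_i\} = S_i$ and $\min\{c, S_{i-1}\} = S_{i-1}$; hence the left side equals $S_i - S_{i-1} = x_{\pi^r_i}$, while the right side is $x_{\pi^r_i} \cdot 1 = x_{\pi^r_i}$. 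The only subtlety is case (b) together with $x_{\pi^r_i}=1$: then $S_{i-1} = c-1 < c$ so $\min\{c,S_{i-1}\} = c-1$, and $S_i = c$ so $\min\{c,S_i\} = c$, giving left side $= c - (c-1) = 1 = x_{\pi^r_i}$, matching the right side. So case (b) is actually subsumed by case (c); I listed it only to be sure the boundary $S_{i-1} = c-1$ is handled, which it is.

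I would write this up compactly as: if $S_{i-1} \ge c$ then $S_i \ge c$, both sides vanish; if $S_{i-1} < c$ then (using $S_i - S_{i-1} = x_{\pi^r_i} \in \{0,1\}$, hence $S_i \le c$) the left side is $\min\{c,S_i\} - \min\{c,S_{i-1}\} = S_i - S_{i-1} = x_{\pi^r_i}$, which equals the right side since the indicator is $1$. I do not anticipate a genuine obstacle here — the result is elementary — the only thing to be careful about is that the hypothesis $c \in \mathbb{N} \cup \{0\}$ (an integer threshold) together with integrality of the partial sums is what guarantees that $S_{i-1} < c$ forces $S_i \le c$; without integrality one could have $S_{i-1} < c < S_i$ and the clean identity would fail. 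I would make that integrality appeal explicit in the write-up.
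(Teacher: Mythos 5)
Your proof is correct and follows essentially the same two-case analysis as the paper's (the case $S_{i-1}\ge c$ where both sides vanish, and the case $S_{i-1}<c$ where integrality forces $S_i\le c$ so both minima reduce to the sums). Your explicit remark that the integer-valuedness of $c$ and of the partial sums is what rules out $S_{i-1}<c<S_i$ is a welcome clarification of a step the paper states somewhat tersely.
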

\begin{proof}
We distinguish two cases:
\begin{enumerate}[label=(\roman*)]
\item When $\sum^{i-1}_{j=1} x_{\pi^r_j}\geq c$  this implies that $ \sum^{i}_{j=1} x_{\pi^r_j} =  x_{\pi_j^r} + \sum^{i-1}_{j=1} x_{\pi^r_j} \geq c + x_{\pi^r_i} \geq c$ since $x_{\pi^r_i} \geq 0$. Therefore, $\sum^{i-1}_{j=1} x_{\pi^r_j}\geq c$ implies that $\min\left\{c, \sum^{i-1}_{j=1} x_{\pi^r_j}\right\} = \min\left\{c, \sum^{i}_{j=1} x_{\pi^r_j}\right\} = c$, and we have:
\begin{align}
      \min\left\{c, \sum^{i}_{j=1} x_{\pi^r_j}\right\} - \min\left\{c, \sum^{i-1}_{j=1} x_{\pi^r_j}\right\} &= 0.
       \label{eq:mins_part1}
\end{align}
\item 
When $\sum^{i-1}_{j=1} x_{\pi^r_j} < c$, we have $ \sum^{i}_{j=1} x_{\pi^r_j} =  x_{\pi_j^r} + \sum^{i-1}_{j=1} x_{\pi^r_j} < c + x_{\pi^r_i} \leq c$ since $x_{\pi^r_i} \leq 1$, this implies that $ \min\left\{c, \sum^{i}_{j=1} x_{\pi^r_j}\right\} = \sum^{i}_{j=1} x_{\pi^r_j}$ and $ \min\left\{c, \sum^{i-1}_{j=1} x_{\pi^r_j}\right\} = \sum^{i-1}_{j=1} x_{\pi^r_j}$, and we have
\begin{align}
      \min\left\{c, \sum^{i}_{j=1} x_{\pi^r_j}\right\} - \min\left\{c, \sum^{i-1}_{j=1} x_{\pi^r_j}\right\} &= x_{\pi^r_i}. \label{eq:mins_part2}
\end{align}

\end{enumerate}
Combining Eqs.~\eqref{eq:mins_part1} and \eqref{eq:mins_part2} yields Eq.~\eqref{eq:mins_subtraction}.
\end{proof}

\begin{lemma}
The cost function $C(r, \x)$ given by the expression in Eq.~\eqref{eq:Ca_cost_function}, can be equivalently expressed as:
\begin{align}
    C(r, \x) &=  -\sum^{ K^r-1}_{i=1} \alpha^r_i \min\left\{k - \sigma^r_i, \sum^i_{j=1} x_{\pi^r_j} - \sigma^r_i \right\} + \sum^{K^r}_{i=1}  c(r,\pi^r_{i}) \mathds{1}_{\left\{ \pi^r_i \in \mathcal{U} \setminus \mathcal{N}\right\}},
    \label{eq:Ca_expansion_simple}
\end{align}
where $\sigma^r_i =  \sum^i_{j=1} \mathbbm{1}_{\left\{\pi^r_j  \in \mathcal{U} \setminus \mathcal{N}\right\}}$, $\alpha^r_{i} = c(r,\pi^r_{i+1}) -  c(r,\pi^r_{i}) $ and $K^r = \min\{i \in \mathcal{U}: \sigma^r_i = k\}$ for every $(r,i) \in \mathcal{R} \times \mathcal{U}$.

\label{lemma:Ca_expansion}
\end{lemma}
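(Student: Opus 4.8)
The plan is to start from the definition of $C(r,\x)$ in Eq.~\eqref{eq:Ca_cost_function} and rewrite the indicator-weighted sum as a telescoping sum of minima, using Lemma~\ref{lemma:mins_subtraction}. Concretely, apply Lemma~\ref{lemma:mins_subtraction} with threshold $c = k$: this gives
$x_{\pi^r_i}\mathds{1}_{\{\sum_{j=1}^{i-1}x_{\pi^r_j}<k\}} = \min\{k,\sum_{j=1}^i x_{\pi^r_j}\} - \min\{k,\sum_{j=1}^{i-1}x_{\pi^r_j}\}$,
so that $C(r,\x) = \sum_{i=1}^{2N} c(r,\pi^r_i)\bigl(m^r_i - m^r_{i-1}\bigr)$ where $m^r_i \triangleq \min\{k,\sum_{j=1}^i x_{\pi^r_j}\}$ and $m^r_0 = 0$. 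Note that $m^r_i = k$ for all $i \ge K^r$, since by definition of $K^r$ we have $\sigma^r_{K^r} = k$, meaning $k$ of the first $K^r$ objects in $\pi^r$ lie in $\mathcal U\setminus\mathcal N$; but for $j$ with $\pi^r_j\in\mathcal U\setminus\mathcal N$ we have $x_{\pi^r_j} = 1-x_{\pi^r_{j-N}}\in\{0,1\}$, and in fact the redundancy forces the running count of cached objects among $\pi^r_1,\dots,\pi^r_{K^r}$ together with the count of "server copies" to eventually reach $k$—so $\sum_{j=1}^{K^r} x_{\pi^r_j}\ge k$ and $m^r_i - m^r_{i-1}=0$ for $i>K^r$. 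Hence the sum truncates at $i = K^r$.

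Next I would apply Abel summation (summation by parts) to $\sum_{i=1}^{K^r} c(r,\pi^r_i)(m^r_i - m^r_{i-1})$. Writing $c(r,\pi^r_i) = c(r,\pi^r_{K^r}) - \sum_{\ell=i}^{K^r-1}\alpha^r_\ell$ (telescoping the definition $\alpha^r_\ell = c(r,\pi^r_{\ell+1}) - c(r,\pi^r_\ell)$), and using $\sum_{i=1}^{K^r}(m^r_i - m^r_{i-1}) = m^r_{K^r} = k$, the constant term contributes $k\, c(r,\pi^r_{K^r})$, and exchanging the order of summation turns the $\alpha^r_\ell$ terms into $-\sum_{\ell=1}^{K^r-1}\alpha^r_\ell \sum_{i=1}^{\ell}(m^r_i - m^r_{i-1}) = -\sum_{\ell=1}^{K^r-1}\alpha^r_\ell\, m^r_\ell = -\sum_{\ell=1}^{K^r-1}\alpha^r_\ell \min\{k,\sum_{j=1}^\ell x_{\pi^r_j}\}$. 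So far this yields $C(r,\x) = k\,c(r,\pi^r_{K^r}) - \sum_{i=1}^{K^r-1}\alpha^r_i\min\{k,\sum_{j=1}^i x_{\pi^r_j}\}$.

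It remains to reconcile this with the target form, which features $\sigma^r_i$-shifted minima $\min\{k-\sigma^r_i,\sum_{j=1}^i x_{\pi^r_j}-\sigma^r_i\}$ and the term $\sum_{i=1}^{K^r} c(r,\pi^r_i)\mathds{1}_{\{\pi^r_i\in\mathcal U\setminus\mathcal N\}}$ instead of $k\,c(r,\pi^r_{K^r})$. The key observation—and the step I expect to be the main obstacle—is that for an index $j$ with $\pi^r_j\in\mathcal U\setminus\mathcal N$, the variable $x_{\pi^r_j}$ equals $1$ exactly when the corresponding object is \emph{not} cached, so these "phantom" copies are always available; tracking how $\sum_{j=1}^i x_{\pi^r_j}$ splits into a genuine-cache part and a $\sigma^r_i$ part (the count of server-copy slots among the first $i$) is what produces the shift by $\sigma^r_i$ inside each minimum, and simultaneously, a careful accounting of the $\alpha^r_i$ coefficients against the jumps of $\sigma^r_i$ (i.e.\ noting $\sigma^r_i - \sigma^r_{i-1} = \mathds{1}_{\{\pi^r_i\in\mathcal U\setminus\mathcal N\}}$ and performing a second summation-by-parts against the $\alpha^r_i$) converts $k\,c(r,\pi^r_{K^r})$ into $\sum_{i=1}^{K^r} c(r,\pi^r_i)\mathds{1}_{\{\pi^r_i\in\mathcal U\setminus\mathcal N\}}$ plus a $\sigma^r_i$-correction absorbed into the minima. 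I would verify this bookkeeping by checking the identity $\min\{k,\sum_{j=1}^i x_{\pi^r_j}\} = \min\{k-\sigma^r_i,\sum_{j=1}^i x_{\pi^r_j}-\sigma^r_i\} + \sigma^r_i$ (valid because the $\sigma^r_i$ server-slots each contribute $1$ to the sum and the effective cap for genuine cached objects is $k-\sigma^r_i$), substituting it back, and collecting the $\sum_i \alpha^r_i\sigma^r_i$ terms via telescoping against $c(r,\pi^r_{i})$ to recover exactly Eq.~\eqref{eq:Ca_expansion_simple}. Finally I would note that $\alpha^r_i\ge 0$ since $\pi^r$ orders objects by nondecreasing cost, which is used implicitly for the concavity remark but not for the identity itself.
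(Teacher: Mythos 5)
Your plan follows essentially the same route as the paper's proof: both apply Lemma~\ref{lemma:mins_subtraction} with threshold $k$, truncate the sum at $K^r$, perform summation by parts against the $\alpha^r_i$, shift each minimum by $\sigma^r_i$ via $\min\{k,S\}=\min\{k-\sigma^r_i,S-\sigma^r_i\}+\sigma^r_i$, and telescope $\sum_i\alpha^r_i\sigma^r_i$ (using $\sigma^r_i-\sigma^r_{i-1}=\mathds{1}_{\{\pi^r_i\in\mathcal U\setminus\mathcal N\}}$) into $\sum_{i=1}^{K^r}c(r,\pi^r_i)\mathds{1}_{\{\pi^r_i\in\mathcal U\setminus\mathcal N\}}$. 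The bookkeeping you flag as the main obstacle is exactly the paper's Eq.~\eqref{eq:sigma_sum} and works out as you anticipate, so the argument is sound.
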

\begin{proof}

Let $\tilde{c}(r,o), \forall (r,o) \in \mathcal{R} \times \mathcal{U}$ be a cost defined as $\tilde{c}(r, {\pi^r_i}) = c(r, {\pi^r_{i}}), \forall i \in [K^r]$ and 0 otherwise, and we also define $\tilde{\alpha}^r_i$  as $\tilde{\alpha}^r_i \triangleq \tilde{c}(r, {\pi^r_{i+1}}) - \tilde{c}(r, {\pi^r_i})$.

When $i = 1$ and for any $r \in \mathcal{R}$, we have
\begin{align}
      \tilde{c}(r, {\pi^r_{i}}) \min\left\{k, \sum^{i-1}_{j=1} x_{\pi^r_j}\right\} = 0.  \label{eq:null_term1}
\end{align}

Note that $\sigma^r_1 = 0$ by definition, since $\pi^r_1 \in \mathcal{N}$ for any $r \in \mathcal{R}$. We have
\begin{align}
   - \sum^{K^r}_{i=1} \tilde{\alpha}^r_i \sigma^r_i &=   \sum^{K^r}_{i=1}  \tilde{c}(r, {\pi^r_{i}}) \sigma^r_i - \sum^{K^r}_{i=1}  \tilde{c}(r, {\pi^r_{i+1}}) \sigma^r_i = \sum^{K^r}_{i=1}  \tilde{c}(r, {\pi^r_{i}}) \sigma^r_i - \sum^{K^r}_{i=2}  \tilde{c}(r, {\pi^r_{i}}) \sigma^r_{i-1}\\
   &= \sum^{K^r}_{i=1}  \tilde{c}(r, {\pi^r_{i}}) \sigma^r_i - \sum^{K^r}_{i=1}  \tilde{c}(r, {\pi^r_{i}}) \sigma^r_{i-1} =\sum^{K^r}_{i=1}  \tilde{c}(r, {\pi^r_{i}}) \mathds{1} _{\left\{ \pi^r_i \in \mathcal{U} \setminus \mathcal{N}\right\}} = \sum^{K^r}_{i=1}  c(r,\pi^r_{i}) \mathds{1}_{\left\{\pi^r_i \in \mathcal{U} \setminus \mathcal{N}\right\}}. \label{eq:sigma_sum}
\end{align}

Observe that the indicator function $\mathds{1} _{\left\{ \sum^{i-1}_{j=1} x_{\pi^r_j} < k \right\}}$ is 0 for every $i \geq K^r + 1$; therefore, the summation in Eq.~\eqref{eq:Ca_cost_function} can be limited to $K^r$ instead of $2N$. Using Lemma~\ref{lemma:mins_subtraction} we expand the expression of $C(r, \x)$ as follows: 
\begin{align}
    C(r, \x) &=  \sum^{K^r}_{i=1} c(r,\pi^r_{i}) x_{\pi^r_i} \mathds{1} _{\left\{ \sum^{i-1}_{j=1} x_{\pi^r_j} < k \right\}} =    \sum^{K^r}_{i=1} \tilde{c}(r, {\pi^r_{i}}) x_{\pi^r_i}\mathds{1}_{\left\{\sum^{i-1}_{j=1} x_{\pi^r_j} < k \right\} }\\
     &\stackrel{\eqref{eq:mins_subtraction}}{=}  \sum^{K^r}_{i=1}\tilde{c}(r, {\pi^r_{i}}) \left(\min\left\{k, \sum^i_{j=1} x_{\pi^r_j}\right\}-  \min\left\{k, \sum^{i-1}_{j=1} x_{\pi^r_j}\right\} \right) \\
      &\stackrel{\eqref{eq:null_term1}}{=} \sum^{K^r}_{i=1}  \tilde{c}(r, {\pi^r_{i}})  \min\left\{k, \sum^i_{j=1} x_{\pi^r_j}\right\} -  \sum^{K^r-1}_{i=1}  \tilde{c}(r, {\pi^r_{i+1}})  \min\left\{k, \sum^i_{j=1} x_{\pi^r_j}\right\} \\
      &=\sum^{K^r}_{i=1}  \tilde{c}(r, {\pi^r_{i}})  \min\left\{k, \sum^i_{j=1} x_{\pi^r_j}\right\} -  \sum^{K^r}_{i=1}  \tilde{c}(r, {\pi^r_{i+1}})  \min\left\{k, \sum^i_{j=1} x_{\pi^r_j}\right\} \\
      &=  - \sum^{ K^r}_{i=1} \tilde{\alpha}^r_i \min\left\{k , \sum^i_{j=1} x_{\pi^r_j}  \right\} =  - \sum^{ K^r}_{i=1} \tilde{\alpha}^r_i \min\left\{k - \sigma^r_i, \sum^i_{j=1} x_{\pi^r_j} - \sigma^r_i \right\} - \sum^{K^r}_{i=1}\tilde{\alpha}^r_i \sigma^r_i\\
    &\stackrel{\eqref{eq:sigma_sum}}
    {=} -  \sum^{ K^r}_{i=1} \tilde{\alpha}^r_i \min\left\{k - \sigma^r_i, \sum^i_{j=1} x_{\pi^r_j} - \sigma^r_i \right\} +  \sum^{K^r}_{i=1}  c(r,\pi^r_{i}) \mathds{1} _{\left\{ \pi^r_i \in \mathcal{U} \setminus \mathcal{N}\right\}}\\
    &= -  \sum^{ K^r-1}_{i=1} \alpha^r_i \min\left\{k - \sigma^r_i, \sum^i_{j=1} x_{\pi^r_j} - \sigma^r_i \right\} + \sum^{K^r}_{i=1}  c(r,\pi^r_{i}) \mathds{1} _{\left\{ \pi^r_i \in \mathcal{U} \setminus \mathcal{N}\right\}} \quad\quad  (\mathrm{since}~k- \sigma^r_{K^r} = 0).
\end{align}
This gives the cost function expression in Eq.~\eqref{eq:Ca_expansion_simple}.
\end{proof}

\begin{lemma}
\label{lemma:gain_expression}
For any request $r \in \mathcal{R}$ the caching gain function $G (r, \x)$ in Eq.~\eqref{e:gain}   has the following expression 
\begin{align}
 G(r, \x) =\sum^{K^r - 1
    }_{i=1} \alpha^r_i \min\left\{k - \sigma^r_i, \sum^i_{j=1} x_{\pi^r_j} - \sigma^r_i \right\}, 
\end{align}
where $\sigma^r_i =  \sum^i_{j=1} \mathbbm{1}_{\left\{\pi^r_j  \in \mathcal{U} \setminus \mathcal{N}\right\}}$, $\alpha^r_{i} = c(r,\pi^r_{i+1}) -  c(r,\pi^r_{i}) $ and $K^r = \min\{i \in \mathcal{U}: \sigma^r_i = k\}$ for every $(r,i) \in \mathcal{R} \times \mathcal{U}$.
\label{eq:appendix_gain_expression}
\end{lemma}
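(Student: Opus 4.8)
The plan is to derive the gain expression directly from the cost expansion in Lemma~\ref{lemma:Ca_expansion}. Recall that $G(r,\x) = C(r,\vec{e}) - C(r,\x)$ where $\vec{e} = (\underbrace{0,\dots,0}_{N}, \underbrace{1,\dots,1}_{N})$ is the empty-cache state, so it suffices to evaluate $C(r,\x)$ via Eq.~\eqref{eq:Ca_expansion_simple} at the two points and subtract.

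First I would observe that the second summand in Eq.~\eqref{eq:Ca_expansion_simple}, namely $\sum_{i=1}^{K^r} c(r,\pi^r_i)\mathds{1}_{\{\pi^r_i \in \mathcal{U}\setminus\mathcal{N}\}}$, does not depend on $\x$ at all; it is a constant determined only by the request $r$ and the permutation $\pi^r$. Hence when forming the difference $C(r,\vec{e}) - C(r,\x)$, this term cancels identically. What remains is
\begin{align}
G(r,\x) = \sum_{i=1}^{K^r-1} \alpha^r_i \min\left\{k-\sigma^r_i, \sum_{j=1}^{i} e_{\pi^r_j} - \sigma^r_i\right\} \;-\; \sum_{i=1}^{K^r-1} \alpha^r_i \min\left\{k-\sigma^r_i, \sum_{j=1}^{i} x_{\pi^r_j} - \sigma^r_i\right\}.
\end{align}
So the real content of the lemma reduces to showing that the empty-cache term vanishes, i.e. that for $\x = \vec{e}$ each summand $\alpha^r_i \min\{k-\sigma^r_i, \sum_{j=1}^{i} e_{\pi^r_j} - \sigma^r_i\}$ equals zero for every $i \in \{1,\dots,K^r-1\}$.

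The key step is therefore the following counting identity: when the cache is empty, $e_\ell = 1$ exactly for $\ell \in \mathcal{U}\setminus\mathcal{N}$ (the ``served from server'' copies), so $\sum_{j=1}^{i} e_{\pi^r_j} = \sum_{j=1}^{i}\mathds{1}_{\{\pi^r_j \in \mathcal{U}\setminus\mathcal{N}\}} = \sigma^r_i$ by the definition of $\sigma^r_i$ in Eq.~\eqref{eq:definition_of_sigma}. Consequently $\sum_{j=1}^{i} e_{\pi^r_j} - \sigma^r_i = 0$, and since $k - \sigma^r_i \geq 0$ for $i \leq K^r$ (because $\sigma^r_i$ is nondecreasing and hits $k$ only at $i = K^r$), we get $\min\{k-\sigma^r_i, 0\} = 0$. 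This kills the entire first sum, leaving exactly the claimed expression with the minus sign absorbed into $C(r,\vec{e}) - C(r,\x)$, i.e. $G(r,\x) = \sum_{i=1}^{K^r-1}\alpha^r_i \min\{k-\sigma^r_i, \sum_{j=1}^{i} x_{\pi^r_j} - \sigma^r_i\}$.

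I do not anticipate a serious obstacle here — the argument is essentially bookkeeping once Lemma~\ref{lemma:Ca_expansion} is in hand. The one point requiring a line of care is verifying $k - \sigma^r_i \geq 0$ throughout the range $i \in \{1,\dots,K^r-1\}$, which follows since $\sigma^r_i$ is a sum of indicators hence nondecreasing in $i$, starts at $\sigma^r_1 = 0$ (as $\pi^r_1 \in \mathcal{N}$ always), and first reaches the value $k$ precisely at $i = K^r$ by definition of $K^r$; so $\sigma^r_i \leq k-1 < k$ for all $i < K^r$. One should also note that $\vec{e} \notin \mathcal{X}$ in general (it stores $0$ objects, not $h$), but Eq.~\eqref{eq:Ca_expansion_simple} was established for general $\x \in \{0,1\}^{2N}$ with the redundancy $x_{j+N} = 1 - x_j$, which $\vec{e}$ does satisfy, so the expansion still applies.
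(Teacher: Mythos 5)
Your proposal is correct and follows essentially the same route as the paper: both proofs start from the cost expansion of Lemma~\ref{lemma:Ca_expansion} and obtain the gain by subtraction, with the $\x$-independent term $\sum_{i=1}^{K^r} c(r,\pi^r_i)\mathds{1}_{\{\pi^r_i\in\mathcal{U}\setminus\mathcal{N}\}}$ cancelling. The only minor difference is bookkeeping: the paper identifies the empty-cache cost directly as $k c_f+\sum_{o\in\knn(r,\mathcal N)}c_d(r,o)$ and matches it to that constant term, whereas you evaluate the expansion at $\vec e$ and note each min-term vanishes there since $\sum_{j=1}^{i}e_{\pi^r_j}=\sigma^r_i$; both are sound, and your side checks ($\sigma^r_i<k$ for $i<K^r$, and the applicability of the expansion to $\vec e\notin\mathcal X$) are correct.
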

\begin{proof}
From Lemma~\ref{lemma:Ca_expansion}, the cost function $C(r, \x)$ can be equivalently expressed as:
\begin{align}
C(r,\x) &= -   \sum^{K^r -1
    }_{i=1} \alpha^r_i \min\left\{k - \sigma^r_i, \sum^i_{j= 1} x_{\pi^r_j} - \sigma^r_i \right\} +  \sum^{K^r}_{i=1}  c(r,\pi^r_{i}) \mathds{1}_{\left\{\pi^r_i \in \mathcal{U} \setminus \mathcal{N}\right\}}. 
\end{align} 
Without caching the system incurs the cost $k \Cret + \sum^k_{o \in \knn(r)} c(r, o )$ when $r \in \mathcal{R}$ is requested. This is the retrieval cost of fetching $k$ objects and the sum of the approximation costs of the $k$ closest objects in $\mathcal{N}$. From the definition of $\pi^r$, this cost can equivalently be expressed as $\sum^{K^r}_{i=1}  c(r,\pi^r_{i}) \mathds{1}_{\left\{\pi^r_i \in \mathcal{U} \setminus \mathcal{N}\right\}}$.  Therefore, we recover the gain expression in Eq.~\eqref{eq:appendix_gain_expression} as the cost reduction due to having a similarity cache. Thus, we have
\begin{align}
     G(r, \x) &=k \Cret + \sum^k_{o \in \knn(r)} c(r, o ) - C(r, \x)\\
     &= \sum^{K^r}_{i=1}  c(r,\pi^r_{i}) \mathds{1}_{\left\{ \pi^r_i \in \mathcal{U} \setminus \mathcal{N}\right\}} -  \sum^{K^r}_{i=1}  c(r,\pi^r_{i}) \mathds{1}_{\left\{ \pi^r_i \in \mathcal{U} \setminus \mathcal{N}\right\}} + \sum^{K^r -1
    }_{i=1} \alpha^r_i \min\left\{k - \sigma^r_i, \sum^i_{j= 1} x_{\pi^r_j} - \sigma^r_i \right\} \\
    &=  \sum^{K^r -1
    }_{i=1} \alpha^r_i \min\left\{k - \sigma^r_i, \sum^i_{j= 1} x_{\pi^r_j} - \sigma^r_i \right\}.
\end{align}
This concludes the proof.
\end{proof}
\section{Supporting Lemmas for Proof of Proposition~\ref{proposition:gain_upper_lower_bounds}}
\begin{lemma}
For every request $r \in \mathcal{R}$, index  $i \in \mathcal{U}$, and fractional cache state $\y \in \convX$ the index set defined as
\begin{align}
     I^r_i \triangleq \left\{ j \in [i]: \left(\pi^r_j \in \mathcal{N} \right)\land \left(\pi^r_j + N \notin \{\pi^r_l: l \in [i]\}\right)\right\}
     \label{eq:decisionsubset} 
\end{align}
satisfies the following
\begin{align}
    \sum_{j \in [i]} y_{\pi^r_j} -\sigma^r_i =  \sum_{j \in I^r_i} y_{\pi^r_j},
    \label{eq:decisionsubset_prop}
\end{align}
where $\sigma^r_i = \sum^i_{j=1} \mathds{1}_{\left\{\pi^r_j  \in \mathcal{U} \setminus \mathcal{N} \right\}}$ (defined in Eq.~\eqref{eq:definition_of_sigma}).
\end{lemma}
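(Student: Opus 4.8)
The plan is to unpack the definitions of $\sigma^r_i$ and $I^r_i$ and to show directly that both sides of \eqref{eq:decisionsubset_prop} count the same contributions. The key observation is that the redundancy constraint $x_{j+N} = 1 - x_j$ extends to the fractional domain as $y_{j+N} = 1 - y_j$ for all $j \in \mathcal N$ and all $\y \in \convX$ (this holds on the vertices $\mathcal X$ and is preserved by convex combinations). So for each index $j \in [i]$, the term $y_{\pi^r_j}$ falls into one of three types: (a) $\pi^r_j \in \mathcal U \setminus \mathcal N$, i.e. $\pi^r_j = m + N$ for some $m \in \mathcal N$; (b) $\pi^r_j \in \mathcal N$ and its ``partner'' $\pi^r_j + N$ also appears among $\{\pi^r_l : l \in [i]\}$; (c) $\pi^r_j \in \mathcal N$ and its partner does \emph{not} appear among $\{\pi^r_l : l \in [i]\}$ --- this last case is exactly $j \in I^r_i$.

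First I would rewrite the left-hand side by splitting the sum $\sum_{j \in [i]} y_{\pi^r_j}$ according to these three types, and subtract $\sigma^r_i = \sum_{j \in [i]} \mathbbm{1}_{\{\pi^r_j \in \mathcal U \setminus \mathcal N\}}$, which removes precisely the type-(a) contributions but leaves a residual: each type-(a) term $y_{m+N} = 1 - y_m$ contributes $1$ (cancelled by $\sigma^r_i$) and $-y_m$ (left over). Then I would pair up: whenever $\pi^r_j = m+N$ is a type-(a) index in $[i]$, because $\pi^r$ is a permutation the object $m \in \mathcal N$ appears at some other index $l \in [2N]$, and if $l \le i$ then index $l$ is a type-(b) index, contributing $+y_m$, which exactly cancels the leftover $-y_m$ from type (a); if $l > i$ then the residual $-y_m$ would survive --- but this cannot happen, since by the structure of the cost ordering (an object stored locally is always at least as cheap as the same object fetched from the server, as $c(r,m) = c_d(r,m) \le c_d(r,m) + c_f = c(r,m+N)$), the local copy $m$ precedes its remote copy $m+N$ in $\pi^r$, so $l < j \le i$. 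Hence every type-(a) leftover is cancelled by a type-(b) term. After all cancellations, what remains on the left is exactly $\sum_{j \in I^r_i} y_{\pi^r_j}$, the sum over type-(c) indices, which is the right-hand side.

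The main obstacle, and the one point requiring care, is justifying that the remote copy never precedes the local copy in the permutation $\pi^r$ --- i.e. that if $\pi^r_j = m + N$ with $j \le i$ then $m \in \{\pi^r_l : l \in [i]\}$. This follows from $c(r,m) \le c(r,m+N)$ together with the tie-breaking convention implicit in the choice of $\pi^r$, but it should be stated explicitly (or handled by observing that even in the tie case $c_f = 0$, the leftover $-y_m$ is multiplied by zero in the downstream use, or one simply fixes the permutation to order $m$ before $m+N$). Apart from that, the argument is a bookkeeping exercise: expand, cancel in pairs, and collect the survivors. I would also note in passing that this lemma is what lets one later replace the affine function $\sum_{j=1}^i y_{\pi^r_j} - \sigma^r_i$ appearing inside the $\min$ in \eqref{eq:gain_compact} by a sum of genuine ``decision variables'' $y_{\pi^r_j}$ for $j \in I^r_i$, which is presumably the form needed to establish the $(1 - 1/e)$ rounding bound in Proposition~\ref{proposition:gain_upper_lower_bounds}.
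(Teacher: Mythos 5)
Your proposal is correct and follows essentially the same route as the paper's proof: split the prefix sum by membership in $\mathcal{N}$ versus $\mathcal{U}\setminus\mathcal{N}$, use $y_{m+N}=1-y_m$ to turn each remote-copy term minus its indicator into $-y_m$, cancel it against the local copy's $+y_m$ (which must appear earlier in $\pi^r$ since $c(r,m)<c(r,m+N)$), and observe that the survivors are exactly the indices in $I^r_i$. The tie-breaking concern you raise is resolved in the paper by the standing assumption $c_f\in\mathbb{R}_{>0}$, which makes the local copy strictly cheaper.
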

\begin{proof}
\begin{align}
    \sum_{j \in [i]} y_{\pi^r_j} -\sigma^r_i &\stackrel{\eqref{eq:definition_of_sigma}}{=}   \sum_{j \in [i]} y_{\pi^r_j} - \sum_{j\in[i]} \mathds{1}_{\{\pi^r_j \in \mathcal{U} \setminus \mathcal{N}\}} = \sum_{\substack{j \in [i] \\ \pi^r_j \in \mathcal{N}}} y_{\pi^r_j} + \sum_{\substack{j \in [i] \\ \pi^r_j \in \mathcal{U} \setminus \mathcal{N}}} \left(y_{\pi^r_j} - 1\right) = \sum_{\substack{j \in [i] \\ \pi^r_j \in \mathcal{N}}} y_{\pi^r_j}-  \sum_{\substack{l \in [i] \\ \pi^r_l \in \mathcal{U} \setminus \mathcal{N}}} y_{\left(\pi^r_l - N\right)}
    \label{eq:filtering_index1}
\end{align}
Remark that if  $l \in [i]$ and $\pi^r_l \in \mathcal{U} \setminus \mathcal{N}$, then the object $\pi^r_l - N$ has a strictly smaller cost and it appears earlier in the permutation $\pi^r$, that is there exists  $j < l$ such that $\pi^r_{j} = \pi^r_l - N$. In this case, the variable $y_{\left(\pi^r_l - N\right)}$ cancels out $y_{\pi^r_{j}}$ in the RHS of \eqref{eq:filtering_index1}. Then, we have   
\begin{align}
  \sum_{\substack{j \in [i] \\ \pi^r_j \in \mathcal{N}}} y_{\pi^r_j}-  \sum_{\substack{l \in [i] \\ \pi^r_l \in \mathcal{U} \setminus \mathcal{N}}} y_{\left(\pi^r_l - N\right)} = \sum_{\substack{j \in [i] \\ \pi^r_j \in \mathcal{N} \\ \pi^r_j + N \notin \left\{\pi^r_l: \left(l \in [i]\right)\land \left(\pi^r_l \in \mathcal{U} \setminus \mathcal{N} \right)\right\}}} y_{\pi^r_j}.
\end{align}
Note that if $\pi_l^r \notin \mathcal U \setminus \mathcal N$ (i.e., $\pi_l^r \in  \mathcal N$), then $\pi_j^r + N \neq \pi_l^r$. Then the sets $\left\{\pi^r_l: \left(l \in [i]\right)\land \left(\pi^r_l \in \mathcal{U} \setminus \mathcal{N} \right)\right\}$ and $\left\{\pi^r_l: l \in [i] \right\}$ coincide.
Therefore, the above equation can be simplified as follows 
\begin{align}
 \sum_{\substack{j \in [i] \\ \pi^r_j \in \mathcal{N}}} y_{\pi^r_j}-  \sum_{\substack{l \in [i] \\ \pi^r_l \in \mathcal{U} \setminus \mathcal{N}}} y_{\left(\pi^r_l - N\right)} 
 = \sum_{\substack{j \in [i] \\ \pi^r_j \in \mathcal{N} \\ \pi^r_j + N \notin \left\{\pi^r_l: l \in [i]\right\}}} y_{\pi^r_j}
 = \sum_{j \in I^r_i} y_{\pi^r_j}.\label{eq:filtering_index2}
\end{align}
Eq.~\eqref{eq:filtering_index1} and Eq.~\eqref{eq:filtering_index2} are combined to get 
\begin{align}
   \sum_{j \in [i]} y_{\pi^r_j} -\sigma^r_i=  \sum_{j \in I^r_i} y_{\pi^r_j},
\end{align}
and this concludes the proof.

\end{proof}

\section{Bounds on the Auxiliary Function}
We define $\El: \mathcal{R} \times \convX \to \mathbb{R}_+$, an auxiliary function,  that will be utilized in bounding the value of the gain function
\begin{align}
     \El(r,\y) &\triangleq   \sum^{K^r -1}_{i = 1 } \alpha^r_i   (k - \sigma^r_i)\left(1   -  \prod_{j \in I^r_i } \left(1 -\frac{y_{\pi^r_j}}{k - \sigma^r_i}\right)\right), \forall r \in \mathcal{R}, \y \in \convX \label{eq:useful_function_knn}.
\end{align}

The \DepRound{} \cite{byrka2014improved} subroutine outputs a rounded variable $\x \in \mathcal{X}$ from a fractional input $\y \in \convX$, by iteratively modifying the fractional input $\y$. At each iteration the subroutine \Simplify{} that is part of \DepRound{} is executed on two yet unrounded variables $y_i, y_j \in (0,1)$ with $i,j \in \mathcal{N}$, until all the variables are rounded in $\mathcal{O}(N)$ steps. Note that only $y_i, \forall i \in \mathcal{N}$ is rounded, since $x_i \in \mathcal{U} \setminus \mathcal{N}$ is determined directly from $x_{i-N}$. The random output  of \DepRound{} \cite{byrka2014improved} subroutine $\x \in \mathcal{X}$ given  the input $\y \in \convX$ has the following properties:
\begin{enumerate}[label=P\arabic*]
    \item $\mathbb{E}[x_{i}] = y_{i}, \forall i \in \mathcal{N}. $ \label{eq:condition1}
    \item $\sum_{i \in \mathcal{N}} x_{i}  = k. $\label{eq:condition2} 
    \item $\forall S \subset \mathcal{N}, \mathbb{E}\left[\prod_{i \in S} \left(1 - x_{i}\right)\right] \leq \prod_{i \in S} \left(1 - y_{i}\right).$ \label{eq:condition3}
\end{enumerate}

\begin{lemma}
The random output $\x \in \mathcal{X}$ of \DepRound{} given the fractional cache state input $\vec y \in \convX$, or the random output $\x \in \{0,1\}^\mathcal{U}$ of \Round{} given an integral cache state $\x'$, and fractional cache states $\y, \y' \in \convX$ with $\mathbb{E}[\x'] = \y'$, satisfy the following for any request $r \in \mathcal{R}$
\begin{align}
    \mathbb{E}\left[\El(r, \x)\right] \geq \El(r, \y).
    \label{eq:El_lowerbound}
\end{align}
\end{lemma}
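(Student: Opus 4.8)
The plan is to prove the inequality term-by-term in the sum defining $\El$. Since $\El(r,\y)$ is a positive linear combination (with coefficients $\alpha^r_i(k-\sigma^r_i)\ge 0$) of terms of the form $1 - \prod_{j \in I^r_i}\bigl(1 - \frac{y_{\pi^r_j}}{k-\sigma^r_i}\bigr)$, it suffices to show that for each fixed index $i \in [K^r-1]$,
\begin{align}
  \mathbb{E}\left[\prod_{j \in I^r_i}\left(1 - \frac{x_{\pi^r_j}}{k-\sigma^r_i}\right)\right] \le \prod_{j \in I^r_i}\left(1 - \frac{y_{\pi^r_j}}{k-\sigma^r_i}\right),
  \label{eq:per_term}
\end{align}
because multiplying \eqref{eq:per_term} by $-\alpha^r_i(k-\sigma^r_i)$, adding $\alpha^r_i(k-\sigma^r_i)$, and summing over $i$ (then using linearity of expectation) yields \eqref{eq:El_lowerbound}. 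So the whole problem reduces to establishing a ``scaled'' version of the negative-correlation property P3.

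**Reducing to P3.** The factor $\frac{1}{k-\sigma^r_i}$ is a constant in $(0,1]$ that does not depend on the randomness, so I would like to absorb it. One clean way: note that for a $0$-$1$ random variable $x$ and constant $a=\frac{1}{k-\sigma^r_i}\in(0,1]$, we can write $1 - a x = (1-x) + (1-a)x$, and since $x(1-x)=0$, the product $\prod_j(1-a x_{\pi^r_j})$ expands over subsets $S \subseteq I^r_i$ as $\sum_{S}(1-a)^{|S|}\prod_{j\in S}x_{\pi^r_j}\prod_{j\notin S}(1-x_{\pi^r_j})$ — but the cross terms vanish because $x_j(1-x_j)=0$, leaving only the two extreme contributions per coordinate. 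Actually the cleanest route is the classical one: apply P3 not to $I^r_i$ directly but exploit that P3 holds for \emph{every} subset, and use the identity
\begin{align}
  \prod_{j\in I^r_i}\left(1-a\,x_{\pi^r_j}\right) = \sum_{S\subseteq I^r_i}(-a)^{|S|}\prod_{j\in S}x_{\pi^r_j},
\end{align}
then take expectations. I would argue that $\mathbb{E}[\prod_{j\in S}x_{\pi^r_j}] \le \prod_{j\in S}y_{\pi^r_j}$ for all $S$ (the positive-correlation direction, which follows from P3 via inclusion–exclusion, or is stated directly in \cite{byrka2014improved}), and since the signs $(-a)^{|S|}$ alternate, care is needed — so the honest approach is the subset-expansion-with-vanishing-cross-terms argument, which gives $\prod_j(1-a x_j) = \prod_j\bigl((1-x_j)+(1-a)x_j\bigr)$; expanding and dropping the zero cross-terms reduces every monomial to a product of distinct $(1-x_j)$'s times a scalar, at which point P3 applies coordinatewise-compatibly.

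**The common treatment of both rounding schemes.** The statement asserts \eqref{eq:El_lowerbound} both for $\x$ output by \DepRound{} from $\y$ and for $\x$ output by \Round{} from $(\x',\y',\y)$. For \DepRound{} the three properties P1–P3 are quoted directly, so the argument above applies verbatim with $\y$ in place of $\y'$. For \Round{} (Algorithm~\ref{algo:coupling_scheme}) I need the analogues: Theorem~\ref{theorem:coupling_scheme} already gives $\mathbb{E}[\x] = \y$ (the analogue of P1), and I would need to verify that \Round{} also satisfies the negative-association property P3 — i.e., $\mathbb{E}[\prod_{j\in S}(1-x_j)]\le\prod_{j\in S}(1-y_j)$ for all $S$. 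Since \Round{} updates each coordinate $i$ independently (inspecting the pseudocode, each $x_{t+1,i}$ is resampled using only $x_{t,i}$ and its own $\delta_i$), the output coordinates are mutually \emph{independent}, which is even stronger than negative association; hence P3 holds with equality-or-inequality as needed. Then the term-by-term argument goes through identically.

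**Main obstacle.** The genuinely delicate point is handling the rescaling by $\frac{1}{k-\sigma^r_i}$: P3 is stated for the raw variables $x_i$, not for $a x_i$, and naively multiplying an inequality like $1-x_j \ge 1-y_j$ (false in general, since $x_j$ is random) is not allowed. The resolution — rewriting $1-a x_j = (1-x_j)+(1-a)x_j$ and using $x_j(1-x_j)=0$ to kill the cross-terms in the product expansion, so that everything collapses to a nonneg-coefficient combination of $\mathbb{E}[\prod_{j\in S}(1-x_j)]$ terms to which P3 applies — is the crux, and I expect it to be the one step the authors' proof spends real effort on. A secondary, lighter obstacle is confirming that \Round{}'s coordinates are independent (hence satisfy the required correlation inequality); this is immediate from the structure of Algorithm~\ref{algo:coupling_scheme} but must be stated.
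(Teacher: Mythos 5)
Your overall strategy matches the paper's: reduce, via linearity of expectation and the nonnegativity of the coefficients $\alpha^r_i(k-\sigma^r_i)$, to the per-term inequality $\mathbb{E}\bigl[\prod_{j \in I^r_i}(1 - a\,x_{\pi^r_j})\bigr] \le \prod_{j \in I^r_i}(1 - a\,y_{\pi^r_j})$ with $a = \tfrac{1}{k-\sigma^r_i}\in(0,1]$, and dispose of \Round{} by noting its output coordinates are independent so the bound holds with equality (the paper asserts exactly this; strictly it also needs the input coordinates to be independent, which holds because \Round{}'s initial state is drawn by independent coin tosses). The paper outsources the scaled inequality to an external lemma from a companion paper, whereas you try to derive it from P3, and that derivation contains a genuine error. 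Writing $1-a x_j = (1-x_j) + (1-a)x_j$ and expanding the product gives $\sum_{S\subseteq I^r_i}(1-a)^{|S|}\prod_{j\in S}x_{\pi^r_j}\prod_{j\in I^r_i\setminus S}(1-x_{\pi^r_j})$; there are no cross terms of the form $x_j(1-x_j)$ for the \emph{same} coordinate to kill, since each coordinate contributes exactly one of its two summands to each monomial. The surviving monomials mix $x$'s and $(1-x)$'s over distinct coordinates, and P3 does not apply to them: it bounds $\mathbb{E}[\prod_{j\in S}(1-x_j)]$ only for pure products of complements. Purifying the monomials by substituting $x_j = 1-(1-x_j)$ reintroduces the alternating signs that you correctly identified as fatal in your first attempted route.

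The repair is a different one-line decomposition: $1 - a x_j = (1-a)\cdot 1 + a\,(1-x_j)$. Expanding then gives $\prod_{j\in I^r_i}(1-a x_{\pi^r_j}) = \sum_{S\subseteq I^r_i} a^{|S|}(1-a)^{|I^r_i|-|S|}\prod_{j\in S}(1-x_{\pi^r_j})$, a combination of pure complement-products with nonnegative coefficients (here $a\le 1$ because $\sigma^r_i<k$ for $i<K^r$). Applying P3 to each subset $S$ and re-summing yields $\prod_{j\in I^r_i}\bigl((1-a)+a(1-y_{\pi^r_j})\bigr)=\prod_{j\in I^r_i}(1-a\,y_{\pi^r_j})$, which is the required bound. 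With that substitution your argument closes and coincides in substance with the paper's proof, which hides this computation inside the cited lemma.
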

\begin{proof}
 
\begin{align}
     \mathbb{E}\left[\El(r, \x)\right] & \stackrel{\eqref{eq:useful_function_knn}}{=} \mathbb{E} \left[  \sum^{K^r -1}_{i = 1 } \alpha^r_i   (k - \sigma^r_i)\left(1   -  \prod_{j \in I^r_i } \left(1 -\frac{x_{\pi^r_j}}{k - \sigma^r_i}\right)\right)\right] =  \sum^{K^r -1}_{i = 1 } \alpha^r_i   (k - \sigma^r_i)\left(1   -  \mathbb{E} \left[ \prod_{j \in I^r_i } \left(1 -\frac{x_{\pi^r_j}}{k - \sigma^r_i}\right)\right]\right)\\&
     \leq \sum^{K^r -1}_{i = 1 } \alpha^r_i   (k - \sigma^r_i)\left(1   -   \prod_{j \in I^r_i } \left(1 -\frac{y_{\pi^r_j}}{k - \sigma^r_i}\right)\right) = \El(r, \y).
\end{align}
The second equality is obtained using the linearity of the expectation operator. The inequality is obtained using~\cite[Lemma~E.10]{sisalem2021inference} with $S = I^r_i$, $c_m = \frac{1}{k - \sigma^r_i}$, as $\sigma^r_i < k$ for $i < K^r$ in the case when  $\x$ is the output of \DepRound, and in the case when  $\x$ is an output of \Round, the inequality holds with equality since every $x_i$ for $i \in \mathcal{N}$ is an independent random variable.   

\end{proof}

\section{Bounds on the Gain function}

\begin{proposition}
\label{proposition:gain_upper_lower_bounds}
The caching gain function $G (r, \x)$  defined in  Eq.~\eqref{e:gain} has the following lower and upper bound for any request $r \in \mathcal{R}$ and fractional cache state $\y \in \convX$:
\begin{align}
 \El(r,\y) &\leq  G(r,\y) \leq \left(1-\frac{1}{e}\right)^{-1} \El(r,\y). \label{eq:knn_gain_upper_bound}
\end{align}
\label{proposition:bounds_knn}
\end{proposition}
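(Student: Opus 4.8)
The plan is to prove both inequalities summand by summand, after bringing $G(r,\y)$ into the same shape as $\El(r,\y)$. First I would invoke Lemma~\ref{lemma:gain_expression} together with Eq.~\eqref{eq:decisionsubset_prop}, which lets me replace $\sum_{j=1}^i y_{\pi^r_j}-\sigma^r_i$ by $\sum_{j\in I^r_i} y_{\pi^r_j}$, to obtain
\begin{align*}
G(r,\y)=\sum_{i=1}^{K^r-1}\alpha^r_i\,\min\Bigl\{\,k-\sigma^r_i,\ \sum_{j\in I^r_i} y_{\pi^r_j}\Bigr\}.
\end{align*}
The $i$-th summand here is to be compared with the $i$-th summand of $\El(r,\y)$, namely $\alpha^r_i(k-\sigma^r_i)\bigl(1-\prod_{j\in I^r_i}(1-y_{\pi^r_j}/(k-\sigma^r_i))\bigr)$. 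Since $\alpha^r_i=c(r,\pi^r_{i+1})-c(r,\pi^r_i)\geq 0$, it suffices to establish the desired chain of inequalities between these two $i$-th summands and then sum over $i\in[K^r-1]$.

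Next I would record the two structural facts that turn the summand comparison into a purely scalar statement. For $i\in[K^r-1]$ the quantity $c:=k-\sigma^r_i$ is a positive integer (indeed $\sigma^r_i\leq k-1$, since $K^r$ is the first index at which $\sigma^r$ reaches $k$), so $c\geq 1$ and the $\El$ term is well defined; and for $\y\in\convX$ every coordinate satisfies $y_{\pi^r_j}\in[0,1]$, hence $z_j:=y_{\pi^r_j}/c\in[0,1]$. Writing $s:=\sum_{j\in I^r_i} z_j=\tfrac1c\sum_{j\in I^r_i} y_{\pi^r_j}$, the summand comparison becomes (after dividing by $\alpha^r_i c\geq 0$) the elementary claim
\begin{align*}
\Bigl(1-\tfrac1e\Bigr)\min\{1,s\}\ \leq\ 1-\prod_{j\in I^r_i}(1-z_j)\ \leq\ \min\{1,s\}.
\end{align*}

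For the right-hand inequality I would combine the Weierstrass product bound $\prod_j(1-z_j)\geq 1-\sum_j z_j$ (immediate by induction on $|I^r_i|$ using $z_j\geq 0$ and $1-z_j\geq 0$) with the trivial bound $\prod_j(1-z_j)\geq 0$; together these give $1-\prod_j(1-z_j)\leq\min\{1,s\}$. For the left-hand inequality I would use $1-z_j\leq e^{-z_j}$ to get $\prod_j(1-z_j)\leq e^{-s}$, so that $1-\prod_j(1-z_j)\geq 1-e^{-s}$, and then apply the scalar inequality $1-e^{-s}\geq(1-1/e)\min\{1,s\}$: for $s\geq 1$ it holds because $1-e^{-s}\geq 1-e^{-1}$, and for $s\in[0,1]$ it holds because $s\mapsto 1-e^{-s}$ is concave with values $0$ at $s=0$ and $1-1/e$ at $s=1$, hence lies above the chord $(1-1/e)s$.

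Summing the term-wise inequalities against the nonnegative weights $\alpha^r_i$ over $i\in[K^r-1]$ yields $\El(r,\y)\leq G(r,\y)\leq(1-1/e)^{-1}\El(r,\y)$ for every $r\in\mathcal{R}$ and $\y\in\convX$, as claimed. I do not anticipate a genuine obstacle: the only steps requiring any care are the reduction to a term-wise comparison (which relies on Eq.~\eqref{eq:decisionsubset_prop} and on the sign of the $\alpha^r_i$) and the one-variable concavity inequality $1-e^{-s}\geq(1-1/e)\min\{1,s\}$, both of which are routine.
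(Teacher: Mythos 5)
Your proposal is correct and follows essentially the same route as the paper: rewrite $G(r,\y)$ via Lemma~\ref{lemma:gain_expression} and Eq.~\eqref{eq:decisionsubset_prop} as $\sum_{i=1}^{K^r-1}\alpha^r_i\min\{k-\sigma^r_i,\sum_{j\in I^r_i}y_{\pi^r_j}\}$ and compare it term by term with $\El(r,\y)$, using $\alpha^r_i\ge 0$ to sum the resulting inequalities. The only difference is that the paper delegates the two scalar inequalities to lemmas in a cited reference, whereas you prove them directly (the Weierstrass product bound for the upper bound on each summand, and $1-e^{-s}\ge(1-1/e)\min\{1,s\}$ via $1-z\le e^{-z}$ and concavity for the lower bound); both of your elementary arguments are sound, as is your observation that $k-\sigma^r_i\ge 1$ for $i<K^r$, which makes the normalization $z_j=y_{\pi^r_j}/(k-\sigma^r_i)\in[0,1]$ legitimate.
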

\begin{proof}
We have the following
\begin{align}
 G(r, \y) &=\sum^{K^r - 1
    }_{i=1} \alpha^r_i \min\left\{k - \sigma^r_i, \sum_{j \in [i]} y_{\pi^r_j} - \sigma^r_i \right\} \stackrel{\eqref{eq:decisionsubset_prop}}{=} \sum^{K^r - 1
    }_{i=1} \alpha^r_i \min\left\{k - \sigma^r_i, \sum^i_{j\in I^r_i} y_{\pi^r_j} \right\}\\
&\geq\sum^{K^r - 1
    }_{i=1}\alpha^r_i   (k - \sigma^r_i)\left(1   -  \prod_{j \in I^r_i}  \left(1 -\frac{y_{\pi^r_j}}{k - \sigma^r_i}\right)\right)\label{eq:ineq_lo}\\&= \El(r,\y),
\end{align}
and 
\begin{align}
G(r,\y) &= \sum^{K^r - 1
    }_{i=1} \alpha^r_i \min\left\{k - \sigma^r_i, \sum_{j \in I^r_i} y_{\pi^r_j} \right\} \\
    &\leq  \left(1 - \frac{1}{e}\right)^{-1} \sum^{K^r - 1
    }_{i=1} \alpha^r_i  (k - \sigma^r_i) \left(1   -  \prod_{j \in I^r_i}  \left(1 -\frac{y_{\pi^r_j}}{k - \sigma^r_i}\right)\right) \label{eq:ineq_up}\\&= \El(r, \y).
\end{align}
The inequalities in Eq.~\eqref{eq:ineq_lo} and Eq.~\eqref{eq:ineq_up} are obtained through~\cite[Lemma~E.7]{sisalem2021inference}, and~\cite[Lemma~E.8]{sisalem2021inference}, respectively, by setting $c = k - \sigma^r_i$, and $q_i = 1$ for every $i \in [K^r-1]$.
\end{proof}


\section{Subgradients Computation}
\label{sec:subgradientcomputation}
\begin{theorem}
\label{theorem:subgradient_expression}
For any time slot $t \in [T]$, the vectors $\vec{g}_t$ given by Eq.~\eqref{eq:subgradient_expression} are subgradients of the caching gain function $G(r,\y)$ for request $r_t \in \mathcal{R}$ at fractional cache state $\y_t \in \convX$. 
\begin{align}
    \vec{g}_t = \left[\left(c\left(r_t, \pi_{i^{r_t}_*+1}^{r_t}\right) - c\left(r_t, l\right)\right) \mathds{1}_{\left\{l^{r_t}_*\leq i^{r_t}_*\right\}}\right]_{l \in \mathcal{N}},
    \label{eq:subgradient_expression}
\end{align}
where $i^{r_t}_*\triangleq \max\left\{i \in [K^{r_t} - 1]:\left( \sum^i_{j=1} y_{t, \pi^{r}_j} \leq k\right) \land \left(l + N \notin \{\pi^{r_t}_v: v \in [i]\} \right)\right\}$, $l^{r_t}_* \triangleq \left(\pi^{r_t}\right)^{-1} (l), \forall l \in \mathcal{N}$, and $\left(\pi^{r_t}\right)^{-1}$ is the inverse permutation of $\pi^{r_t}$.
\end{theorem}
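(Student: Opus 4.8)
The plan is to exploit the explicit concave form of the caching gain from Lemma~\ref{lemma:gain_expression}, namely $G(r,\y)=\sum_{i=1}^{K^r-1}\alpha^r_i\min\{k-\sigma^r_i,\ \sum_{j=1}^i y_{\pi^r_j}-\sigma^r_i\}$, and the rewriting $\sum_{j\in[i]}y_{\pi^r_j}-\sigma^r_i=\sum_{j\in I^r_i}y_{\pi^r_j}$ from Eq.~\eqref{eq:decisionsubset_prop}. Since each term $\alpha^r_i\min\{c,\ \sum_{j\in I^r_i}y_{\pi^r_j}\}$ with $c=k-\sigma^r_i\ge 0$ and $\alpha^r_i\ge 0$ is concave in $\y$, the whole gain $G(r_t,\cdot)$ is concave, and a subgradient of a sum is the sum of subgradients of the terms. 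So the first step is to compute a subgradient of each inner term at the point $\y_t$, and then sum.

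For a fixed term indexed by $i$, the function $y\mapsto\min\{c,\ \sum_{j\in I^r_i}y_{\pi^r_j}\}$ has a subgradient that is either the zero vector (when $\sum_{j\in I^r_i}y_{t,\pi^r_j}>c$, i.e.\ the constant branch is active) or the indicator vector $\sum_{j\in I^r_i}\vec e_{\pi^r_j}$ (when $\sum_{j\in I^r_i}y_{t,\pi^r_j}\le c$, i.e.\ the linear branch is active); at the kink any convex combination works, so the indicator vector is a valid choice. Thus a subgradient of the $i$-th term, evaluated coordinatewise at object $l\in\mathcal N$, contributes $\alpha^r_i$ precisely when $l\in\{\pi^r_j: j\in I^r_i\}$ and the linear branch is active for that $i$. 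I would next translate the two conditions "$l\in\{\pi^r_j:j\in I^r_i\}$" and "linear branch active" into conditions on $i$. Writing $l^{r_t}_*=(\pi^{r_t})^{-1}(l)$, membership $l\in\{\pi^r_j:j\in I^r_i\}$ means $l^{r_t}_*\le i$ together with the "uncovered" condition $l+N\notin\{\pi^r_v:v\in[i]\}$; the linear branch being active means $\sum_{j=1}^i y_{t,\pi^r_j}\le k$ (equivalently $\sum_{j\in I^r_i}y_{t,\pi^r_j}\le k-\sigma^r_i$). The key monotonicity observation is that both $\sum_{j=1}^i y_{t,\pi^r_j}\le k$ and the uncovered condition are \emph{downward closed} in $i$: if they hold at some index they hold at all smaller indices. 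Hence the set of indices $i$ contributing $\alpha^r_i$ to coordinate $l$ is exactly $\{i: l^{r_t}_*\le i\le i^{r_t}_*\}$ when $l^{r_t}_*\le i^{r_t}_*$, and is empty otherwise, where $i^{r_t}_*$ is the largest index satisfying both the sum-bound and the uncovered condition — precisely the definition in the statement.

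Summing the telescoping contribution gives, for $l$ with $l^{r_t}_*\le i^{r_t}_*$, the value $\sum_{i=l^{r_t}_*}^{i^{r_t}_*}\alpha^r_i=\sum_{i=l^{r_t}_*}^{i^{r_t}_*}\bigl(c(r_t,\pi^{r_t}_{i+1})-c(r_t,\pi^{r_t}_i)\bigr)=c(r_t,\pi^{r_t}_{i^{r_t}_*+1})-c(r_t,\pi^{r_t}_{l^{r_t}_*})=c(r_t,\pi^{r_t}_{i^{r_t}_*+1})-c(r_t,l)$, which is exactly the coordinate formula in Eq.~\eqref{eq:subgradient_expression}; for $l$ with $l^{r_t}_*>i^{r_t}_*$ the coordinate is $0$, matching the indicator $\mathds 1_{\{l^{r_t}_*\le i^{r_t}_*\}}$. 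Finally I would verify that summing valid subgradients of the concave summands indeed yields a valid subgradient of $G(r_t,\cdot)$ at $\y_t$ (this is just the supergradient inequality added over $i$), completing the proof. The main obstacle I anticipate is bookkeeping: carefully establishing the downward-closedness of the two conditions in $i$ so that the contributing index set is a contiguous block $[l^{r_t}_*, i^{r_t}_*]$, and handling the edge cases where the kink is hit exactly or where $I^r_i$ changes as $i$ grows (an object can become "covered" once its server-copy $l+N$ enters the prefix). Once that combinatorial structure is pinned down, the telescoping and the subgradient-of-a-sum argument are routine.
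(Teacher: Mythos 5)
Your proposal is correct and follows essentially the same route as the paper's proof: decompose $G(r_t,\cdot)$ into the concave terms $\alpha^r_i\min\{k-\sigma^r_i,\sum_{j\le i}y_{\pi^r_j}-\sigma^r_i\}$, pick the zero or indicator subgradient of each term according to which branch of the min is active, sum over $i$, and telescope the $\alpha^r_i$ over the contiguous index block $[l^{r_t}_*, i^{r_t}_*]$. The only cosmetic difference is that you phrase the coordinate condition via membership in $I^r_i$ while the paper computes the partial derivative directly with the constraint $y_{l+N}=1-y_l$; these are the same bookkeeping, and your explicit remark that the indicator choice remains valid at the kink is consistent with the $\le k$ in the definition of $i^{r_t}_*$.
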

\begin{proof}
For any request $r \in \mathcal{R}$, the function $f^{(r,i)} (\y) \triangleq \min\left\{k - \sigma^r_i, \sum^i_{j=1} y_{\pi^r_j} - \sigma^r_i \right\}$ is a concave function, i.e., a minimum of two concave functions (a constant and an affine function). The subdifferential of the function at point $\y$, using Theorem~\cite[Theorem 8.2 ]{mordukhovich2017geometric} is given as
\begin{align}
    \partial f^{(r, i)} (\y)  = \begin{cases}
    \vec{0} & \mathrm{if}  \sum^i_{j=1} y_{\pi^r_j} > k, \\
    \mathrm{conv} \left(\left\{ \vec{0}, \nabla \left(\sum^i_{j=1} y_{\pi^r_j} \right) \right\}\right) & \mathrm{if}  \sum^i_{j=1} y_{\pi^r_j} = k, \\
    \nabla \left(\sum^i_{j=1} y_{\pi^r_j} \right) & \mathrm{otherwise}, 
    \end{cases} 
\end{align}
where $\mathrm{conv} \left(\,\cdot\,\right)$ is the convex hull of a set. Thus, a valid subgradient $\vec{g}^{(r,i)}(\y)$ of $f^{(r,i)}$ at point $\y$ can be picked as
\begin{align}
    \vec{g}^{(r,i)}(\y) = \begin{cases}
    \vec{0} & \mathrm{if}  \sum^i_{j=1} y_{\pi^r_j} \geq k, \\
    \nabla \left(\sum^i_{j=1} y_{\pi^r_j} \right)  & \mathrm{otherwise}.
    \end{cases}
\end{align}
Note that 
\begin{align}
    \frac{\partial}{\partial y_l} \sum^i_{j=1} y_{\pi^r_j} &= \mathds{1}_{\left\{y_l~\text{appears in the sum and}~y_{l+N}=1-y_l~\text{does not}\right\}}\\
    &= \mathds{1}_{\left\{\left(l \in \{\pi^r_v: v \in [i] \}\right) \,\,\land\,\, \left(l+N \not\in  \{\pi^r_v: v \in [i] \}\right)\right\}}\\
    &=  \mathds{1}_{\left\{\left(l^{r}_* \leq i\right) \,\,\land\,\, \left(l+N \not\in  \{\pi^r_v: v \in [i] \}\right)\right\}}.
\end{align}

The $l$-th component of the subgradient $\vec{g}^{(r,i)}(\y)$ is given by
\begin{align}
    {g}^{(r,i)}_l(\y) &= \begin{cases}
    0, & \mathrm{if}  \sum^i_{j=1} y_{\pi^r_j} \geq k, \\
      \frac{\partial}{\partial y_l} \sum^i_{j=1} y_{\pi^r_j} & \mathrm{otherwise}.\\
    \end{cases} \\ 
    &= \mathds{1}_{\left\{\left(\sum^i_{j=1} y_{\pi^r_j} < k \,\,\land\,\, l^{r}_* \leq i\right) \,\,\land\,\, \left(l+N \not\in  \{\pi^r_v: v \in [n] \}\right)\right\}}.
\end{align}

For any non-negative factor $\alpha^r_i$, we have $\partial \left(\alpha^r_i f^{(r, i)} (\y)\right) = \alpha^r_i  \partial \left( f^{(r, i)} (\y)\right) $ (multiply both sides of the subgradient inequality by a non-negative constant \cite[Definition 1.2]{bubeck2015convexbook}), and using \cite[Theorem 23.6]{rockafellar2015convex} we get
\begin{align}
    \partial G(r, \y) &= \partial \left(\sum^{K^r - 1
    }_{i=1} \alpha^{r_t}_i f^{(r, i)} (\y)\right)= \sum^{K^r - 1
    }_{i=1} \alpha^{r_t}_i \partial  f^{(r, i)} (\y).
\end{align}
Let $i^{r}_*\triangleq \max\{i \in [K^{r} - 1]: \left(\sum^i_{j=1} y_{\pi^{r}_j} \leq k \right) \land \left( l + N \notin \{\pi^{r}_v: v \in [i] \right)\} \}$. Now we can define a subgradient $\vec{g}_t$ of the function $G(r, \y)$ at point $\y_t \in \convX$ and request $r_t \in \mathcal{R}$ for any $t \in [T]$, where every component $l \in \mathcal{N}$ of $\vec{g}_t$ is given by
\begin{align}
    {g}_{t,l} &=\sum^{K^{r_t} - 1
    }_{i=1} \alpha^{r_t}_i {g}^{(r_t,i)}_l(\y_t) =\sum^{K^{r_t} - 1
    }_{i=1} \alpha^{r_t}_i  \mathds{1}_{\left\{\left(\sum^i_{j=1} y_{\pi^{r_t}_j} < k \right)\,\,\land\,\, \left( l^{r_t}_* \leq i\right) \,\,\land\,\, \left(l+N \not\in  \{\pi^{r_t}_v: v \in [n] \}\right)\right\}}\\
    &=\sum^{K^{r_t} - 1
    }_{i=l^{r_t}_*} \alpha^{r_t}_i  \mathds{1}_{\left\{\left(\sum^i_{j=1} y_{\pi^{r_t}_j} < k\right) \,\,\land\,\,\left( l+N \not\in  \{\pi^{r_t}_v: v \in [n] \}\right)\right\}} =\sum^{i^{r_t}_*
    }_{i=l^{r_t}_*} \alpha^{r_t}_i = \sum^{i^{r_t}_*
    }_{i=l^{r_t}_*} \left(c(r_t, \pi_{i+1}^{r_t}) - c(r_t, \pi_{i}^{r_t})\right)  \\
    &= \left(c(r_t, \pi_{i^{r_t}_*+1}^{r_t}) - c(r_t, l)\right) \mathds{1}_{\left\{l^{r_t}_*\leq i^{r_t}_*\right\}}, \forall{ l \in \mathcal{N}}.
\end{align}
Note that in the last equality we used the definition of $l^{r_t}_{*}$ to obtain $c(r_t, l) = c(r_t, \pi^r_{l^{r_t}_{*}})$. This concludes the proof.

\end{proof}
\section{Mirror Maps}
\label{sec:mirror_maps}
Let $\mathcal{D} \subset{\mathbb R^N}$ be a convex open set, and $\mathcal{Y}$ be a convex set such that $\mathcal{Y} \subset \mathrm{cl} (\mathcal{D})$ where $\mathrm{cl} (\,\cdot\,)$ is the closure; moreover, $\mathcal{D} \cap \mathcal{Y} \neq \emptyset$. The map $\Phi: \mathcal{D} \to \mathbb{R}$ is called a mirror map if the following is satisfied~\cite{bubeck2015convexbook}:
\begin{enumerate}
    \item The map $\Phi$ is strictly convex and differentiable. 
    \item The gradient of $\Phi$ takes all the possible values of $\mathbb R^N$, i.e., $\left\{\nabla \Phi(\x): \x \in \mathcal{D}\right\} = \mathbb R^N$.
    \item The gradient of $\Phi$ diverges at the boundary of $\mathcal{D}$, i.e., $\lim_{\x \to \partial \mathcal{D}} \norm{\nabla \Phi(\x)}  = +\infty$.
\end{enumerate}
It is easy to check the above properties are satisfied for the Euclidean mirror map $\Phi(\x) = \frac{1}{2} \norm{\x}^2$ over the domain $\mathcal{D} = \mathbb{R}^N$, and the negative entropy mirror map $\Phi(\x) = \sum_{i \in \mathcal{N}} x_i \log(x_i)$ over the domain $ \mathcal{D} = \mathbb{R}_{> 0}^N$.

\section{Supporting Lemmas for Proof of Theorem~\ref{theorem:knn_total_cost}}
\subsection{Subgradient Bound}
\label{sec:subgradient_constant}
\begin{lemma}
For any time slot $t \in [T]$, fractional cache state $\y_t \in \convX$, and request  $r_t \in \mathcal{R}$ the subgradients $\vec g_t$ of the gain function in Eq.~\eqref{eq:gain_compact} are bounded w.r.t the norm $\norm{\,\cdot\,}_\infty$ by the constant 
\label{lemma:subgradient_upper_bound}
\begin{align}
    L \triangleq c_d^k + c_f. 
    \label{eq:subgradient_upper_bound}
\end{align}
The constant $c_d^k$ is an upper bound on the dissimilarity cost of the k-th closest object for any request in $\mathcal{R}$, and $c_f$ is the retrieval cost.
\end{lemma}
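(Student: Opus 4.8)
The plan is to avoid re-deriving anything from the definition of $G$ and instead exploit the explicit closed form of the subgradient established in Theorem~\ref{theorem:subgradient_expression}, Eq.~\eqref{eq:subgradient_expression}: for each $l \in \mathcal{N}$ the component $g_{t,l}$ equals either $0$ (when $l^{r_t}_* > i^{r_t}_*$) or $c(r_t, \pi^{r_t}_{i^{r_t}_*+1}) - c(r_t, l)$. Hence the whole argument reduces to bounding this single quantity uniformly in $l$, $t$, $\y_t$, and $r_t$. First I would show $0 \le g_{t,l} \le c(r_t, \pi^{r_t}_{i^{r_t}_*+1})$: non-negativity holds because, when the indicator is active, $l^{r_t}_* \le i^{r_t}_* < i^{r_t}_*+1$, and since $\pi^{r_t}$ lists the elements of $\mathcal{U}$ in non-decreasing order of $c(r_t,\cdot)$ while $l = \pi^{r_t}_{l^{r_t}_*}$, we get $c(r_t,l) \le c(r_t, \pi^{r_t}_{i^{r_t}_*+1})$; the upper bound then follows from $c(r_t,l) \ge 0$. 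Thus $\norm{\vec g_t}_\infty \le c(r_t, \pi^{r_t}_{i^{r_t}_*+1})$.

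Next I would bound $c(r_t, \pi^{r_t}_{i^{r_t}_*+1})$ by $c_d^k + c_f$. Since $i^{r_t}_* \le K^{r_t} - 1$ we have $i^{r_t}_*+1 \le K^{r_t}$, and because $\pi^{r_t}$ is cost-sorted it suffices to bound $c(r_t, \pi^{r_t}_{K^{r_t}})$. This is the crux of the proof, and it is a counting argument on the augmented catalog: take the $k$ true nearest neighbours of $r_t$ in $\mathcal{N}$; each has dissimilarity cost at most $c_d^k$ by the definition of that constant, hence their ``server indices'' $i + N \in \mathcal{U} \setminus \mathcal{N}$ each have cost at most $c_d^k + c_f$. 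Therefore at least $k$ elements of $\mathcal{U} \setminus \mathcal{N}$ have cost $\le c_d^k + c_f$. Since $K^{r_t} = \min\{i \in \mathcal{U} : \sigma^{r_t}_i = k\}$, the entry $\pi^{r_t}_{K^{r_t}}$ is exactly the $k$-th element of $\mathcal{U} \setminus \mathcal{N}$ encountered in cost order, so it — and all earlier entries — have cost at most $c_d^k + c_f$. Chaining the inequalities gives $\norm{\vec g_t}_\infty \le c(r_t, \pi^{r_t}_{i^{r_t}_*+1}) \le c(r_t, \pi^{r_t}_{K^{r_t}}) \le c_d^k + c_f = L$, as claimed.

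The main obstacle is the middle step: making rigorous that the cost ``waterline'' at index $K^{r_t}$ cannot exceed $c_d^k + c_f$. The subtlety is that the augmented catalog contains each object twice (a cache copy at cost $c_d$ and a server copy at cost $c_d + c_f$), so one cannot simply invoke the $k$-th closest object of $\mathcal{N}$ directly; the argument must go through the $k$ server copies of the genuine $k$NN to guarantee that at least $k$ elements of $\mathcal{U}\setminus\mathcal{N}$ lie below $c_d^k + c_f$. Everything else — non-negativity of the subgradient components, monotonicity of $c(r_t,\cdot)$ along $\pi^{r_t}$, and the final chaining — is routine.
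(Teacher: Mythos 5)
Your proposal is correct and follows essentially the same route as the paper: start from the closed-form subgradient of Theorem~\ref{theorem:subgradient_expression}, bound each component by $c(r_t,\pi^{r_t}_{K^{r_t}})$ using the cost-sortedness of $\pi^{r_t}$, and then identify that quantity as $c_f$ plus a dissimilarity cost at most $c_d^k$. Your counting argument for why $c(r_t,\pi^{r_t}_{K^{r_t}})\le c_d^k+c_f$ is just a more carefully spelled-out version of the paper's one-line step $c(r_t,\pi^{r_t}_{K^{r_t}})=c_f+c_d(r_t,\pi^{r_t}_{K^{r_t}}-N)\le c_f+c_d^k$, which implicitly uses the same fact that the $k$-th server copy in cost order is the server copy of the $k$-th nearest neighbour in $\mathcal{N}$.
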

\begin{proof}
For any time slot $t \in [T]$ we have 
\begin{align}
    \norm{\vec g_t}_\infty &= \max\left\{ g_{t,l}: l \in \mathcal{N}\right\}\stackrel{\eqref{eq:subgradient_expression}}{=} \max\left\{ c(r_t, \pi^{r_t}_{i^{r_t}_* + 1}) - c(r_t, l) : l \in \mathcal{N}\right\}\\
    &\leq  c(r_t, \pi^{r_t}_{K^{r_t}}) - c(r_t, \pi^{r_t}_1) \leq c(r_t, \pi^{r_t}_{K^{r_t}}) \\
    &= c_f + c(r_t, \pi^{r_t}_{K^{r_t}-N})  \leq  c_f + c(r_t, \pi^{r_t}_{K^{r_t}-N})  \leq c_f + c_d^k.
\end{align}

\end{proof}
Note that $L_2 \triangleq \norm{{\partial_{\y}} G(r, \y)}_2$ can be as high as $\sqrt{N} L$ and $N$ can be very large; moreover, the regret upper bound  is proportional to \New{$L_2$ instead of $L$} when the Euclidean map is used as a mirror map (see \cite[Theorem 4.2]{bubeck2015convexbook}). This justifies why it is preferable to work with the negative entropy instantiation of $\OMA$ rather than the classical Euclidean setting.

\subsection{Bregman Divergence Bound}
\begin{lemma}
Let  $\y^* = \underset{{\y \in \convX}}{{\arg\max}} \sum^T_{t=1} G(r, \y)$ and $\y_1 = \underset{\y  \in \convX \cap \mathcal{D}}{\arg\min} \,\Phi(\y)$, the value of the Bregman divergence $D_\Phi(\y_*, \y_1)$ associated with the negative entropy mirror map $\Phi$ is upper bounded by the constant \begin{align}
    D \triangleq  h \log\left(\frac{N}{h}\right).
    \label{eq:bregman_divergence_bound}
\end{align}
\label{lemma:bregman_divergence_bound}
\end{lemma}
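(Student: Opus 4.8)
The plan is to use the fact that the negative entropy Bregman divergence is the generalized Kullback--Leibler divergence and that its minimizer $\y_1$ over $\convX\cap\mathcal{D}$ is the uniform allocation. First I would record that for $\Phi(\y)=\sum_{i\in\mathcal N} y_i\log y_i$ one has $\nabla\Phi(\y)_i=\log y_i+1$, so that for any two points with strictly positive first-$N$ components,
\[
D_\Phi(\y,\z)=\Phi(\y)-\Phi(\z)-\langle\nabla\Phi(\z),\y-\z\rangle=\sum_{i\in\mathcal N} y_i\log\frac{y_i}{z_i}-\sum_{i\in\mathcal N}y_i+\sum_{i\in\mathcal N}z_i .
\]

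Next I would identify $\y_1$. Minimizing $\Phi$ over $\convX\cap\mathcal{D}$ reduces to minimizing $\sum_{i\in\mathcal N}y_i\log y_i$ subject to $\sum_{i\in\mathcal N}y_i=h$ and $0<y_i\le 1$, since the augmented coordinates are pinned by $y_{i+N}=1-y_i$ and do not appear in $\Phi$. Because $t\mapsto t\log t$ is convex, Jensen's inequality gives $\sum_{i\in\mathcal N} y_i\log y_i\ge N\,\frac{h}{N}\log\frac{h}{N}=h\log\frac{h}{N}$, with equality exactly at $y_i=h/N$ for all $i\in\mathcal N$; as $h\le N$ this point is feasible and lies in the open domain $\mathcal{D}$, so $y_{1,i}=h/N$.

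Then I would substitute $\z=\y_1$ into the divergence identity. Both $\y^*$ and $\y_1$ belong to $\convX$, hence $\sum_{i\in\mathcal N}y^*_i=\sum_{i\in\mathcal N}y_{1,i}=h$, so the last two sums cancel and
\[
D_\Phi(\y^*,\y_1)=\sum_{i\in\mathcal N}y^*_i\log\frac{y^*_i}{h/N}=\sum_{i\in\mathcal N}y^*_i\log y^*_i+h\log\frac{N}{h}.
\]
Since every component satisfies $y^*_i\in[0,1]$, we have $y^*_i\log y^*_i\le 0$, whence $\sum_{i\in\mathcal N}y^*_i\log y^*_i\le 0$ and therefore $D_\Phi(\y^*,\y_1)\le h\log(N/h)=D$, which is the claimed bound. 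Note that the argument uses only $\y^*\in\convX$, not its optimality, so the same bound holds for the divergence from any feasible point to $\y_1$.

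All the computations are routine; the only step needing a little care is the identification of $\y_1$ — specifically checking that the uniform point $h/N$ is strictly positive (so it lies in the open domain $\mathcal{D}$ where $\Phi$ is defined) and satisfies $h/N\le 1$, and that convexity of $t\log t$ genuinely forces the minimum of $\Phi$ over the box intersected with the hyperplane $\sum_i y_i=h$ to be attained at the interior uniform point rather than on a face of $[0,1]^N$.
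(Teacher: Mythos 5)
Your proof is correct and follows essentially the same route as the paper's: identify $\y_1$ as the uniform allocation $h/N$, dispose of the linear (gradient) term in the Bregman divergence, and bound the remaining $\Phi(\y_*)\le 0$ to land on $-\Phi(\y_1)=h\log(N/h)$. The only cosmetic difference is that you compute the gradient term exactly as zero via the capacity constraint $\sum_{i\in\mathcal N}y_i=h$ (since $\nabla\Phi(\y_1)$ is a constant vector on those coordinates), whereas the paper bounds it by zero using the first-order optimality condition at $\y_1$ — both steps are valid and yield the same bound.
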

\begin{proof}
It is easy to check that  $y_{1,i} =\frac{h}{N}, \forall i \in \mathcal{N}$ ($\vec y_1$ has maximum entropy); moreover, we have $\Phi(\y) \leq 0, \forall \y \in \convX$. The first order optimality condition~\cite[Proposition 1.3]{bubeck2015convexbook} gives $-{\nabla \Phi(\y_1)}^T ( \y - \y_1) \leq 0, \forall \y \in \convX$. We have
\begin{equation}
    D_\Phi(\y_*, \y_1) = \Phi(\y_*) - \Phi(\y_1) - \nabla \Phi(\y_1)^T ( \y_* - \y_1) \leq \Phi(\y_*) - \Phi(\y_1) \leq  -\Phi(\y_1) = h \log\left(\frac{N}{h}\right).
\end{equation}
\end{proof}
\section{Update Costs}
\subsection{Proof of Theorem~\ref{theorem:coupling_scheme}}
\label{proof:coupling_scheme}
\begin{proof}$ $\newline
\noindent\textbf{First part.} We show that $\mathbb E \left[\x_{t+1}\right] = \y_{t+1}$. Take $\vec \delta \triangleq \y_{t+1} - \y_t$.

If $\delta_i >0$ for $i \in \mathcal{N}$, then:
\begin{align}
    \mathbb E \left[ x_{t+1,i}\right] &=  \mathbb E \left[ x_{t+1,i} |  x_{t,i} = 1 \right] \mathbb{P} (x_{t,i} = 1) + \mathbb E \left[ x_{t+1,i} |  x_{t,i} = 0 \right]\mathbb{P} (x_{t,i} = 0)\\
    &= y_{t,i} + \left(\frac{{\delta_i}}{1 -y_{t,i}} + 0\right) (1 -y_{t,i}) = y_{t,i} + \delta_i.
\end{align}

If $\delta_i < 0$ for $i \in \mathcal{N}$, then:
\begin{align}
    \mathbb E \left[ x_{t+1,i}\right] &=  \mathbb E \left[ x_{t+1,i} |  x_{t,i} = 1 \right] \mathbb{P} (x_{t,i} = 1) + \mathbb E \left[ x_{t+1,i} |  x_{t,i} = 0 \right]  \mathbb{P} (x_{t,i} = 0)\\
    &=  \left(\frac{{y_{t,i} + \delta_i}}{{y_{t,i}}}\right)y_{t,i} + 0= y_{t,i} + \delta_i.
\end{align}

Otherwise, when $\delta_i = 0$ for $i \in \mathcal{N}$ we have $\mathbb E[x_{t+1,i}] = \mathbb E[x_{t,i}]=y_{t,i} = y_{t,i} + \delta_i $. Therefore we have for any $i \in \mathcal{N}$
\begin{align}
    \mathbb E[\x_{t+1}] =\y_t + \vec \delta =   \y_{t+1}.
\end{align}
\noindent\textbf{Second part.}
For any $i \in \mathcal{N}$, we can have two types of movements: If $\delta_i<0$, then given that $x_{t,i} = 1$, we evict with probability $\frac{-\delta_i}{y_{t,i}}$. If $\delta_i>0$, then given that $x_{t,i} = 0$, we retrieve a file  with probability $\frac{\delta_i}{1-y_{t,i}}$. Thus the expected movement is given by:
\begin{align}
\mathbb E \left[\norm{\x_{t+1} -\x_{t}}_{1}\right]&= \sum_{i \in \mathcal{N}} \mathbb E [|x_{t+1,i} - x_{t+1,i}|]\\
&=\sum_{i \in \mathcal{N}} \mathbb E \left[|x_{t+1,i} - x_{t,i}| \big|x_{t,i} = 0\right] \mathbb{P} (x_{t,i} = 0)  + \mathbb E \left[|x_{t+1,i} - x_{t,i}| \big|x_{t,i} = 1\right] \mathbb{P} (x_{t,i} = 1)\\
&= \sum_{i \in \mathcal{N}} \left(\frac{\delta_i}{1-y_{t,i}}\mathds{1}_{\left\{\delta_i >0\right\}}\cdot (1-y_{t,i}) + \frac{-\delta_i}{y_{t,i}}\mathds{1}_{\left\{\delta_i <0\right\}} \cdot y_{t,i}  \right) \\
& =\sum_{i \in \mathcal{N}} |\delta_i| =
\sum_{i \in \mathcal{N}} |y_{t+1,i} - y_{t,i}|=  \norm{\y_{t+1} - \y_t}_1.
\end{align}
\end{proof}
\subsection{Proof of Theorem~\ref{theorem:omd_movement_cost}}
\label{proof:omd_movement_cost}
\begin{proof}

The negative entropy mirror map $\Phi$ is $\rho=\frac{1}{h}$ strongly convex w.r.t the norm $\norm{\,\cdot\,}_1$ over $\mathcal{D} \cap \convX$ (see \cite[Ex.~2.5]{shalev2011online}), and the subgradients are bounded under the dual norm $\norm{\,\cdot\,}_\infty$ by $L$, i.e., for any $r_t \in \mathcal{R}$, $\y_t \in \convX$, and $t \in [T]$ we have $\norm{\vec{g}_t}_\infty \leq L$ (Lemma~\ref{lemma:subgradient_upper_bound}).  For any time slot $t \in [T-1]$, it holds:
\begin{align}
D_\Phi(\y_t, \z_{t+1}) &= \Phi(\y_t) - \Phi(\z_{t+1}) - {\nabla \Phi(\z_{t+1})}^T (\y_t - \z_{t+1})  \nonumber\\
&=   \Phi(\y_t) - \Phi(\z_{t+1}) + {\nabla \Phi(\y_{t})}^T (\z_{t+1} - \y_t) \nonumber + {\left(\nabla \Phi(\y_{t}) - \nabla \Phi(\z_{t+1})\right)}^T (\y_t - \z_{t+1} )  \nonumber\\
&\leq - \frac{\rho}{2} \norm{\y_t - \z_{t+1}}_1^2 + \eta \vec g^T_t (\y_t - \z_{t+1})  \label{eq:strong_convx_update_rule}\\
&\leq   - \frac{\rho}{2} \norm{\y_t - \z_{t+1}}^2 + \eta L \norm{\y_t - \z_{t+1}}_1^2 \label{eq:cauchy_step}\\
&\leq \frac{\eta^2 L^2}{2 \rho} \label{eq:last_step}.
\end{align}
Eqs.~\eqref{eq:strong_convx_update_rule}--\eqref{eq:last_step} are obtained using the strong convexity of $\Phi$ and the update rule, Cauchy-Schwarz inequality, and the inequality $a x - b x^2 \leq \max_{x} a x - b x^2 = a^2/4b $ as in the last step in the proof of \cite[Theorem 4.2]{bubeck2015convexbook}, respectively. \Old{We obtain:}
\Old{\begin{align}
    \sum^{T-1}_{t=1} \norm{\y_{t+1} - \y_{t}}_1 &\leq \sum^{T-1}_{t=1}  \sqrt{\frac{2}{\rho} D_\Phi(\y_t, \y_{t+1}) }  \label{eq:strong_conv}\\
&\leq   \sum^{T-1}_{t=1} \sqrt{\frac{2}{\rho} D_\Phi(\y_t, \z_{t+1}) - \frac{2}{\rho} D_\Phi(\y_{t+1}, \z_{t+1})} \label{eq:pyth} \\
& \leq   \sum^{T-1}_{t=1} \sqrt{\frac{2}{\rho} D_\Phi(\y_t, \z_{t+1}) } \label{eq:conv_remove} \\
& \leq  \sum^{T-1}_{t=1} \sqrt{2 \eta^2 \frac{L^2}{2\rho^2}} \label{eq:const_reduc} \leq \frac{L\eta}{\rho}  T.
\end{align}}
\New{Moreover, for any $t \in [T-1]$ it holds}
\New{\begin{align}
     \norm{\y_{t+1} - \y_{t}}_1 &\leq   \sqrt{\frac{2}{\rho} D_\Phi(\y_t, \y_{t+1}) }
\leq    \sqrt{\frac{2}{\rho} D_\Phi(\y_t, \z_{t+1}) - \frac{2}{\rho} D_\Phi(\y_{t+1}, \z_{t+1})}\leq    \sqrt{\frac{2}{\rho} D_\Phi(\y_t, \z_{t+1}) }  \leq   \sqrt{2 \eta^2 \frac{L^2}{2\rho^2}}\leq \frac{L\eta}{\rho}. \label{eq:mc_chain}
\end{align}}
\Old{Eqs.~\eqref{eq:strong_conv}--\eqref{eq:const_reduc} are obtained using}\New{The above chain of inequalities is obtained through}: the strong convexity of $\Phi$, the generalized Pythagorean inequality \cite[ Lemma 4.1]{bubeck2015convexbook}, non-negativity of the Bregman divergence of a convex function, and  Eq.~\eqref{eq:last_step}, in respective order. The learning rate is  $\eta = \mathcal{O}\left(\frac{1}{\sqrt{T}}\right)$; therefore, we finally get
\begin{align}
    \sum^{T-1}_{t=1} \norm{\y_{t+1} - \y_{t}}_1\New{ \stackrel{\eqref{eq:mc_chain}}{\leq} \frac{L\eta T}{\rho}} = \mathcal{O}(\sqrt{T}).
\end{align}
\end{proof}

\section{Proof of Theorem~\ref{theorem:knn_total_cost}}
\label{proof:knn_total_cost}
\begin{proof} To prove the $\psi$-regret guarantee: (i) we first  establish an upper bound on the regret of the \acai{} policy over its fractional cache states domain $\convX$ against a fractional optimum, then (ii) the guarantee is transformed in a $\psi$-regret guarantee over the integral cache states domain $\mathcal{X}$ in expectation.

\noindent\textbf{Fractional domain guarantee.} We establish first the regret of running Algorithm~\ref{algo:online_mirror_ascent} with decisions taken over the fractional domain $\convX$. The following properties are satisfied:
\begin{enumerate}[label=(\roman*)]
    \item The caching gain function $G(r, \y)$ is concave over its fractional domain $\convX$ for any $r \in \mathcal{R}$ (see Sec.~\ref{sec:cache_state_service_gain}).
    \item The negative entropy mirror map $\Phi: \mathcal{D} \to \mathcal{R}$ is $\frac{1}{h}$ strongly convex w.r.t the norm $\norm{\,\cdot\,}_1$ over $\mathcal{D} \cap \convX$ (see \cite[Ex. 2.5]{shalev2011online}).
    \item The subgradients are bounded under the dual norm $\norm{\,\cdot\,}_\infty$ by $L$, i.e., for any $r_t \in \mathcal{R}$, $\y_t \in \convX$, and $t \in [T]$ we have $\norm{\vec{g}_t}_\infty \leq L$ (Lemma~\ref{lemma:subgradient_upper_bound}).
    \item The Bregman divergence $D_\Phi(\y^*, \y_1)$ is bounded by a constant $D$ where $\y^* = {\arg\max}_{\y \in \convX} \sum^T_{t=1} G(r, \y)$ and $\y_1 = \underset{\y  \in \convX \cap \mathcal{D}}{\arg\min} \,\Phi(\y)$ is the initial fractional cache state (Lemma~\ref{lemma:bregman_divergence_bound}).
    \end{enumerate}
With the above properties satisfied, the regret of Algorithm~\ref{algo:online_mirror_ascent} with the gains evaluated over the fractional cache states $\{\y_t\}^T_{t=1} \in \convX^T$ is \cite[Theorem 4.2]{bubeck2015convexbook} 
\begin{align}
   \Regret\convX{\OMA_{\Phi}} &=  \underset{\{\vec r_t\}^T_{t=1} \in \mathcal{R}^T}{\sup}\left\{ \sum^T_{t=1} G(r_t,\y_*)  - \sum^T_{t=1} G(r_t,\y_t) \right\} \\
   &\leq \frac{D_\Phi(\y_*, \y_1)}{\eta} + \frac{\eta h}{2} \sum^T_{t=1} \norm{\vec{g}_t}^2_{\infty} \leq \frac{D}{\eta} + \frac{\eta L^2 h T}{2}.
    \label{eq:regret_knn}
\end{align}
\Old{By selecting the learning rate $\eta=\frac{1}{L}\sqrt{\frac{2 D}{hT}}$ giving the tightest upper bound we obtain 
\begin{align}
    \mathrm{Regret}_{T, \convX} \leq L \sqrt{2 h D T} \stackrel{\eqref{eq:subgradient_upper_bound},\eqref{eq:bregman_divergence_bound}}{=} (c_d^k + c_f) h \sqrt{2 \log{\left(\frac{N}{h}\right)} T}.
\end{align}}
\noindent \textbf{Integral domain guarantee.} Let $\x_* =  \underset{\x \in \mathcal{X}}{\arg\max} \sum^T_{t=1} G(r_t,\x)$ and $\y_* =   \underset{\y \in \convX}{\arg\max} \sum^T_{t=1} G(r_t,\y)$. The fractional cache state $\y_* \in \convX$ is obtained by maximizing  $\sum^T_{t=1} G(r_t,\y)$ over the domain $\y \in \convX$. We can only obtain a lower gain by restricting the maximization to a subset of the domain $\mathcal{X} \subset \convX$. Therefore, we obtain:
\begin{align}
     \sum^T_{t=1} G(r_t,\y_*)  \geq \sum^T_{t=1} G(r_t,\x_*). \label{eq:optimality_in_convx}
\end{align}
 Note that the components for $\x$ and $\y$ in  $\mathcal{U} \setminus \mathcal{N}$ are completely determined by the components in $\mathcal{N}$. Take $\psi = 1 - 1/e$. \Old{Take $\psi = 1 - 1/e$, we get:
\begin{align}
    \mathbb{E} \left[\sum^T_{t=1} G (r_t, \x_t)\right] &\stackrel{\eqref{eq:knn_gain_upper_bound}}{\geq} \mathbb{E} \left[\sum^T_{t=1} \El(r_t, \x_t)\right] \stackrel{\eqref{eq:El_lowerbound}}{\geq} \sum^T_{t=1} \El(r_t, \y_t) \stackrel{\eqref{eq:knn_gain_upper_bound}}{\geq}  \psi \sum^T_{t=1} G (r_t, \y_t)  \nonumber \\&\stackrel{\eqref{eq:regret_knn}}{\geq}\psi \sum^T_{t=1} G (r_t, \y_*) -{A\sqrt{T}} 
     \stackrel{\eqref{eq:optimality_in_convx}}{\geq} \psi \sum^T_{t=1} G (r_t, \x_*) -{A\sqrt{T}}, 
\end{align}
where $A = \psi (c_d^k + c_f) h \sqrt{2 \log{\left(\frac{N}{h}\right)} }$.} \New{For every  $t \in \{1,M, 2M, \dots,  M \floor{T/M}\}$ and $r \in \mathcal{R}$ it holds
\begin{align}
    \mathbb{E} \left[G (r, \x_{t})\right] &\stackrel{\eqref{eq:knn_gain_upper_bound}}{\geq} \mathbb{E} \left[ \El(r, \x_t)\right] \stackrel{\eqref{eq:El_lowerbound}}{\geq}  \El(r, \y_t) \stackrel{\eqref{eq:knn_gain_upper_bound}}{\geq}  \psi  G (r, \y_t).\label{eq:sandwitch}
\end{align}
Moreover, consider the following decomposition of the time slots $\{1,2,\dots, T\} = \mathcal{T}_1\cup \mathcal{T}_2\cup\dots\cup\mathcal{T}_{\floor{T/M}+1}$, where each $\mathcal{T}_i$ represents the $i$-th freezing period, i.e., $\vec x_t = \x_{\min(\mathcal T_i)}$ for $t \in \mathcal{T}_i$.  Note that $\card{\mathcal{T}_i} \leq M$ for $i \in \{1,2, \dots, \floor{T/M} +1 \}$. Now we can decompose the total expected gain of the policy as 
\begin{align}
  \psi \sum^T_{t=1} G (r_t, \x_{*})- \sum^T_{t=1} \mathbb{E} \left[G (r_t, \x_{t})\right] &\stackrel{\eqref{eq:optimality_in_convx}}{\leq}  \psi  \sum^T_{t=1} G (r_t, \y_{*})- \sum^T_{t=1} \mathbb{E} \left[G (r_t, \x_{t})\right] \nonumber=  \psi  \sum^T_{t=1} G (r_t, \y_{*}) -  \sum^{\floor{T/M}+1}_{i=1}\sum_{t \in \mathcal{T}_i} \mathbb{E} \left[G (r_t, \x_{t})\right] \nonumber\\&\stackrel{\eqref{eq:sandwitch}}{\leq} \psi  \sum^T_{t=1} G (r_t, \y_{*})  -  \psi  \sum^{\floor{T/M}+1}_{i=1}\sum_{t \in \mathcal{T}_i} G (r_t, \y_{\min(\mathcal{T}_i)}) \nonumber\\&=  \psi  \sum^T_{t=1} G (r_t, \y_{*}) -\psi  \sum^T_{t=1} G (r_t, \y_{t}) + \psi  \sum^T_{t=1} G (r_t, \y_{t})  -  \psi  \sum^{\floor{T/M}+1}_{i=1}\sum_{t \in \mathcal{T}_i} G (r_t, \y_{\min(\mathcal{T}_i)})
  \nonumber\\ &\stackrel{\eqref{eq:regret_knn}}{\leq}\psi \cdot  \Regret\convX{\OMA_{\Phi}}  +  \psi \left(\sum^T_{t=1} G (r_t, \y_{t}) -  \sum^{\floor{T/M}+1}_{i=1}\sum_{t \in \mathcal{T}_i} G (r_t, \y_{\min(\mathcal{T}_i)})\right).\label{eq:rhs_freeze}
\end{align}
The first equality is obtained through a decomposition of the time slots to $\mathcal{T}_i$ for $ i \in \{1, 2, \dots, \floor{T/M}+1\}$.  It remains to bound the r.h.s of Eq.~\eqref{eq:rhs_freeze}, i.e., 
\begin{align}
&\sum^T_{t=1} G (r_t, \y_{t}) -  \sum^{\floor{T/M}+1}_{i=1}\sum_{t \in \mathcal{T}_i} G (r_t, \y_{\min(\mathcal{T}_i)}) =\sum^{\floor{T/M}+1}_{i=1} \sum_{t \in \mathcal{T}_i} G (r_t, \y_{t}) - G (r_t, \y_{\min(\mathcal{T}_i)}) \\
&\leq \sum^{\floor{T/M}+1}_{i=1} \sum_{t \in \mathcal{T}_i} \partial_{\vec y} G (r_t, \y_{\min(\mathcal{T}_i)}) \cdot \left(\y_{t} -\y_{\min(\mathcal{T}_i)} \right) \qquad\qquad \text{concavity of $G(r_t, \cdot)$} \nonumber\\
&\leq\sum^{\floor{T/M}+1}_{i=1} \sum_{t \in \mathcal{T}_i} \norm{ \partial_{\vec y} G (r_t, \y_{\min(\mathcal{T}_i)}) }_\infty \norm{\y_{t} -\y_{\min(\mathcal{T}_i)} }_1 \qquad\qquad\text{H\"older's inequality} \nonumber\\
&\leq  L \sum^{\floor{T/M}+1}_{i=1} \sum_{t \in \mathcal{T}_i} \sum^{t-1}_{t'=\min(\mathcal{T}_i)} \norm{\y_{t+1} -\y_{t} }_1 \quad\quad\quad\quad\quad\quad\quad\text{triangle inequality and definition of $L$} \nonumber\\ 
&\leq  \frac{L^2}{\rho} \sum^{\floor{T/M}+1}_{i=1} \sum_{t \in \mathcal{T}_i} \sum^{t-1}_{t'=\min(\mathcal{T}_i)} \eta \leq  \frac{L^2 \eta}{\rho} \cdot \left(\frac{T}{M}+1\right) \cdot \frac{M (M-1)}{2}\quad\quad\quad\quad\quad\text{update cost upper bound in Eq.~\eqref{eq:mc_chain}} \nonumber\\
&= \frac{L^2 \eta}{2\rho} (M-1)\left(T+M\right)= \frac{h L^2 \eta}{2} (M-1)\left(T+M\right) \quad\quad\quad\quad\text{we have $\rho = 1/h$}.\label{eq:proof_integral_p2} &
\end{align}}
\New{Thus, by bounding r.h.s of Eq.~\eqref{eq:rhs_freeze} in Eq.~\eqref{eq:proof_integral_p2} we get
\begin{align}
    \psi \sum^T_{t=1} G (r_t, \x_{*})- \sum^T_{t=1} \mathbb{E} \left[G (r_t, \x_{t})\right] &\leq  \psi \left(\frac{D}{\eta} + \frac{\eta L^2 h}{2} T + \frac{\eta L^2 h}{2} (M-1)\left(T+M\right)\right)
\end{align}
By selecting the learning rate $\eta = \frac{1}{L}\sqrt{\frac{2 D}{h (T + (M-1) (M+T))}} = \frac{1}{(c_d^k + c_f)} \sqrt{\frac{2 \log\left(\frac{N}{h}\right)}{(T + (M-1) (M+T))}}$  giving the tightest upper bound we obtain 
\begin{align}
    \psi \sum^T_{t=1} G (r_t, \x_{*})- \sum^T_{t=1} \mathbb{E} \left[G (r_t, \x_{t})\right] &\leq \psi L \sqrt{ 2 D h (T + (M-1) (T+M))} \\
    &\stackrel{\eqref{eq:subgradient_upper_bound},\eqref{eq:bregman_divergence_bound}}{=} \left(1 - \frac{1}{e}\right)(c_d^k + c_f) h \sqrt{2 \log{\left(\frac{N}{h}\right)}  (T + (M-1) (T+M))}.
\end{align}
This concludes the proof.
}

\end{proof}
\section{Proof of Corollary~\ref{corollary:offline}}
\label{proof:offline}
We have the following 
\begin{align}
    \mathbb{E}\left[G_T(\bar\x)\right]  &\stackrel{\eqref{eq:knn_gain_upper_bound}}{\geq} \mathbb{E} \left[\frac{1}{T} \sum^T_{t=1} \El(r_t, \x_t)\right] \stackrel{\eqref{eq:El_lowerbound}}{\geq}\frac{1}{T} \sum^T_{t=1} \El(r_t, \x_t) \stackrel{\eqref{eq:knn_gain_upper_bound}}{\geq} \psi G_T(\bar y).
    \label{e:static_head}
\end{align}
We apply Jensen's inequality to obtain
\begin{align}
    G_T(\bar y) \geq \frac{1}{\tilde T} \sum^{\tilde T}_{i=1} G_T(\y_i).\label{e:static_jensen}
\end{align}
It is easy to verify that $G_T$~\eqref{eq:static_gain} is concave, and has bounded subgradients under the $l_\infty$ norm over the fractional caching domain $\convX$; moreover, the remaining properties are satisfied for the same mirror map and decision set. The regret of Algorithm~\ref{algo:online_mirror_ascent} with the gains evaluated over the fractional cache states $\{\y_i\}^{\tilde T}_{i=1} \in \convX^T$ is \cite[Theorem 4.2]{bubeck2015convexbook} is given by
\begin{align}
    \sum^{\tilde T}_{i=1} G_T(\y_*) - \sum^{\tilde T}_{i=1} G_T(\y_i) = {\tilde T} G_T(\y_*) - \sum^{\tilde T}_{i=1} G_T(\y_i)  \stackrel{\eqref{eq:regret_knn}}{\leq} (c_d^k + c_f) h \sqrt{2 \log{\left(\frac{N}{h}\right)} {\tilde T}}.
\end{align}
Divide both sides of the equality by ${\tilde T}$, and move the gain attained by \acai{} to the l.h.s to get 
\begin{align}
  \frac{1}{{\tilde T}} \sum^{\tilde T}_{i=1} G_T(\y_i)  \geq G_T(\y_*)  - (c_d^k + c_f) h \sqrt{2 \log{\left(\frac{N}{h}\right)} {\tilde T}}\label{e:static_regret},
\end{align}
where $\y_* = \underset{\y \in \convX}{\argmax} G_T(\y)$.

We combine Eq.~\eqref{e:static_head}, Eq.~\eqref{e:static_jensen}, and Eq.~\eqref{e:static_regret} to obtain 
\begin{align}
     \mathbb{E}\left[G_T(\bar\x)\right] \geq G_T(\y_*)  - (c_d^k + c_f) h \sqrt{2 \log{\left(\frac{N}{h}\right)} {\tilde T}},
\end{align}
and $G_T(\y_*)$ can only be larger than $G_T(\x_*)$; thus, we also obtain
\begin{align}
     \mathbb{E}\left[G_T(\bar\x)\right] \geq G_T(\x_*)  - (c_d^k + c_f) h \sqrt{\frac{2 \log{\left(\frac{N}{h}\right)}}{{\tilde T}}}.
\end{align}
We conclude $\forall \epsilon >0$ for a sufficiently large number of iterations ${\tilde T}$, $\bar \x$ satisfies
\begin{align}
\mathbb{E} \left[ G_T(\bar{\x})\right]\geq \left(1-\frac{1}{e} - \epsilon\right) G_T({\x_*}).
\end{align}

\section{Additional Experiments}

\subsection{Redundancy}
We quantify the redundancy present in the caches in Figure~\ref{fig:amazon_storage_redundacy} (a), as the percentage of added objects to fill the physical cache. We also show the contribution of the dangling objects to the gain in Figure~\ref{fig:amazon_storage_redundacy} (b), that does not exceed $2.0\%$ under both traces. 
\begin{figure}[h]
\centering
    \subcaptionbox{Storage redundancy of the \lru-based policies under \emph{SIFT1M trace}.}{
    \includegraphics[width=.3\linewidth]{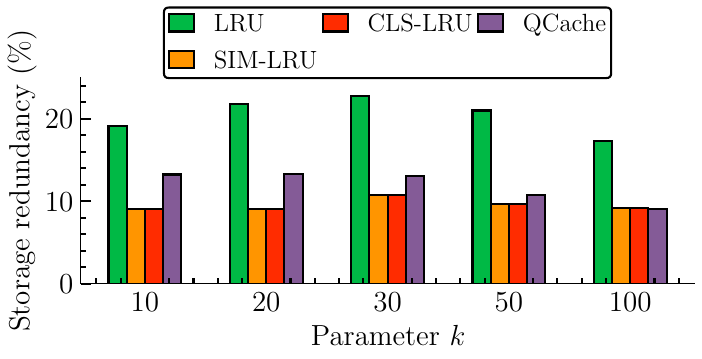}
  } 
  \subcaptionbox{Storage redundancy of the \lru-based policies under \emph{Amazon trace}.}{
    \includegraphics[width=.3\linewidth]{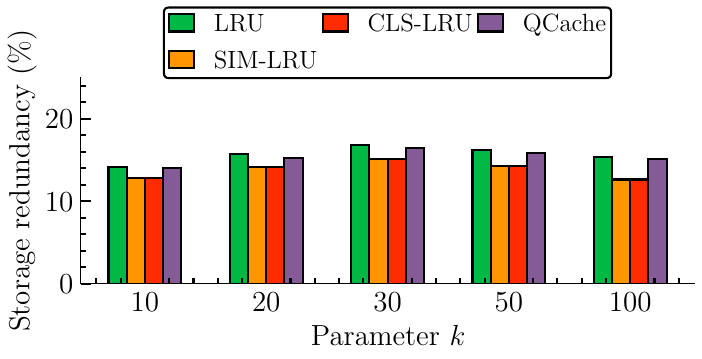}
  } \\
    \subcaptionbox{The average gain contribution of the dangling objects under {SIFT1M trace}.}{
    \includegraphics[width=.3\linewidth]{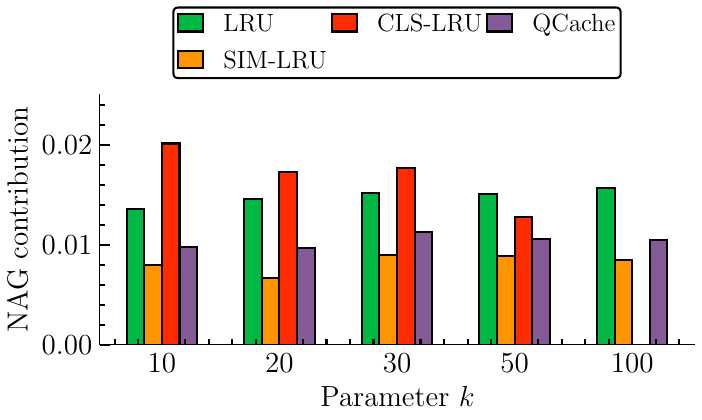}}
    \subcaptionbox{The average gain contribution of the dangling objects under {Amazon trace}.}{
    \includegraphics[width=.3\linewidth]{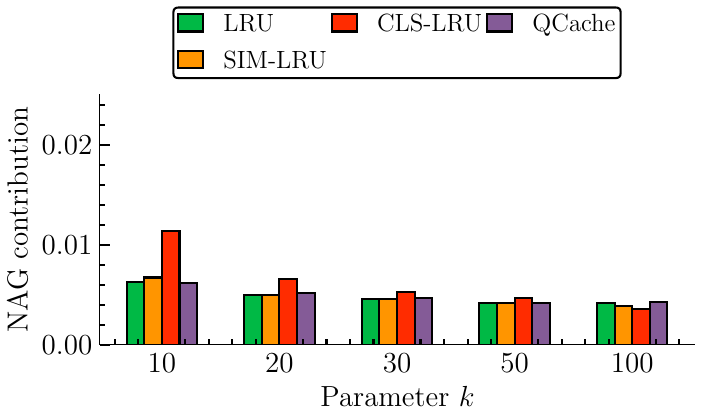}
  
  } 
  \caption{Storage redundancy percentage and gain contribution for the different policies. The cache size $h = 1000$, and $k \in \{10,20,30,50,100\}$. }
\label{fig:amazon_storage_redundacy}
\end{figure}
\subsection{Approximate Index Augmentation} We repeat the sensitivity analysis and show the caching gain when the different policies are augmented with an approximate index, and are allowed to mix the best object that can be served locally and from the server. Figure~\ref{fig:additional_amazon_gain_vs_cache_size} shows the caching gain for the different caching policies and different values of the cache size $h \in \{50, 100, 200, 500, 1000\}$ and $k= 10$. Figure~\ref{fig:additional_amazon_gain_vs_retrieval_cost} shows the caching gain for the different caching policies and different values of the retrieval cost $c_f$, that is taken as the average distance to the $i$-th neighbor, $i \in \{2, 50, 100, 500, 1000\}$. The cache size is $h= 1000$ and $k = 10$. Figure~\ref{fig:additional_amazon_gains_complete_knowledge} shows the caching gain for the different caching policies and different values of $k \in \{10,20,30, 50, 100\}$ and $h=1000$.

\begin{figure}[h]
    \centering
    \subcaptionbox{{SIFT1M trace}}{\includegraphics[width=.3\linewidth]{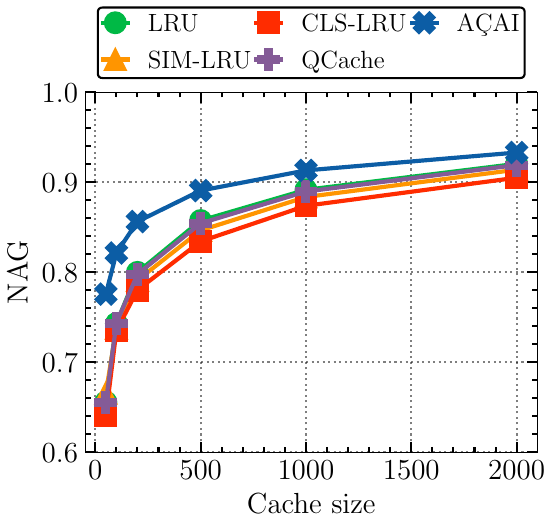}}
    \subcaptionbox{{Amazon trace}}{\includegraphics[width=.3\linewidth]{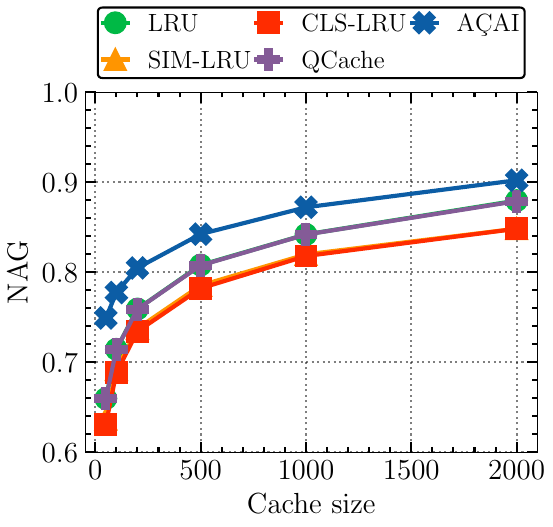}}

    \caption{Caching gain of the different policies when augmented with the approximate index, for  different cache sizes $h \in \{50,100,200,500,1000\}$ and $k= 10$. }
    \label{fig:additional_amazon_gain_vs_cache_size}
\end{figure}
\begin{figure}[h]
    \centering
        \subcaptionbox{SIFT1M trace}{
    \includegraphics[width=.3\linewidth]{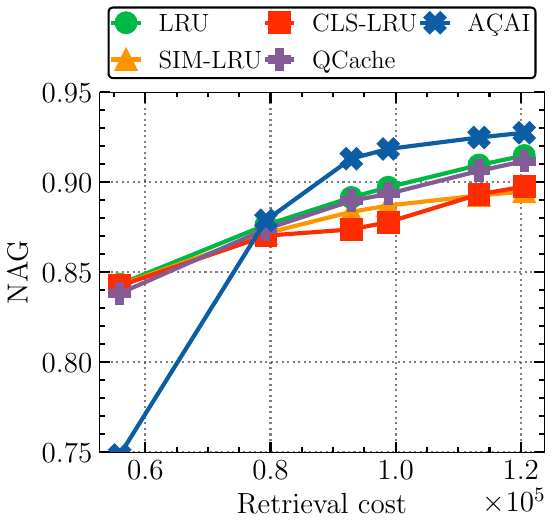}
    }
    \subcaptionbox{Amazon trace}{
    \includegraphics[width=.3\linewidth]{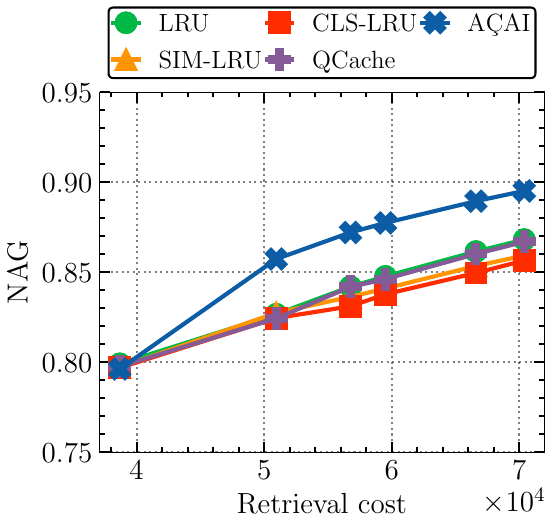}
    }
    \caption{Caching gain for the different policies and  different retrieval  cost when augmented with the approximate index. The retrieval cost $c_f$ is taken as the average distance to the $i$-th neighbor, $i \in \{2, 10,  50, 100, 500, 1000\}$. The cache size is $h= 1000$ and $k = 10$. }
    \label{fig:additional_amazon_gain_vs_retrieval_cost}
\end{figure}
\begin{figure}[h]
\centering
  \subcaptionbox{SIFT1M trace}{
    \includegraphics[width=.3\linewidth]{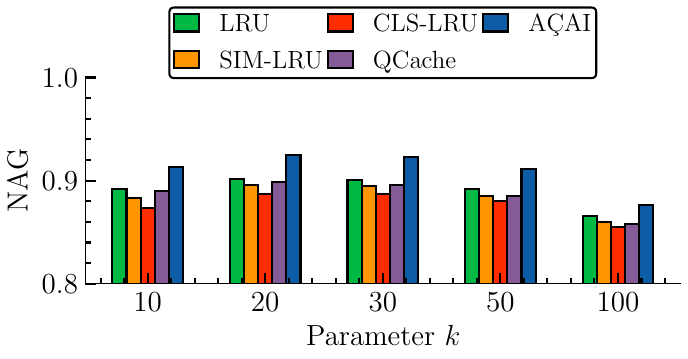}
  }
    \subcaptionbox{{Amazon trace}}{
    \includegraphics[width=.3\linewidth]{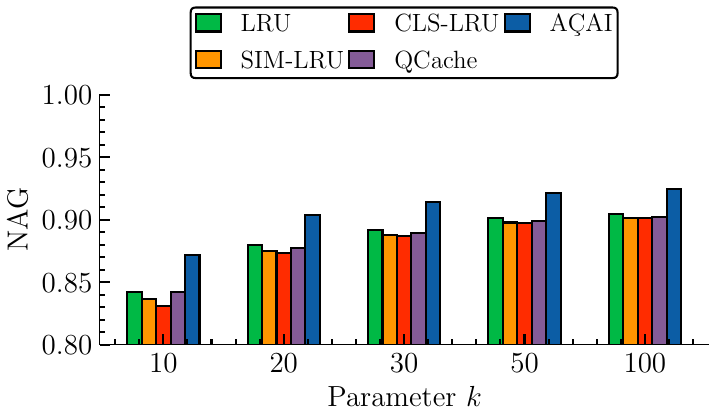}
  }
  \caption{Caching gain for the different policies  when augmented with the approximate index. The cache size is $h = 1000$, and $k \in \{10,20,30,50,100\}$.}
    \label{fig:additional_amazon_gains_complete_knowledge}
\end{figure}

\subsection{Computation Time}

\New{We provide a comparison of the computation time of the different algorithms in Fig.~\ref{fig:tc}.  When the different LRU-like policies are not augmented with a global catalog index in Fig.~\ref{fig:tc}~(a), \acai{} experiences a higher computation time per iteration. When the different LRU-like policies are augmented with a global catalog index in Fig.~\ref{fig:tc}~(b), \acai{} has a similar computation time to the different policies except for the simple vanilla lru policy. Nonetheless, in both settings \acai{} computation time remains comparable and approximately within factor $4$ w.r.t the computation time of the different policies. Due to space limitation we cannot add this figure to the main text.}

\begin{figure}[t]
    \centering
   \subcaptionbox{}{\includegraphics[width=.3\linewidth]{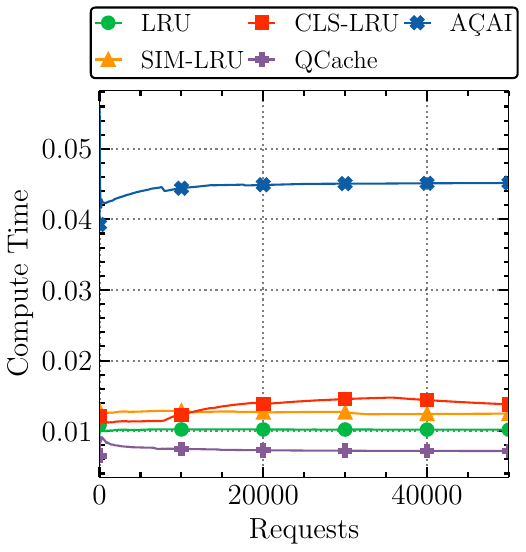}}
         \subcaptionbox{}{
    \includegraphics[width=.3\linewidth]{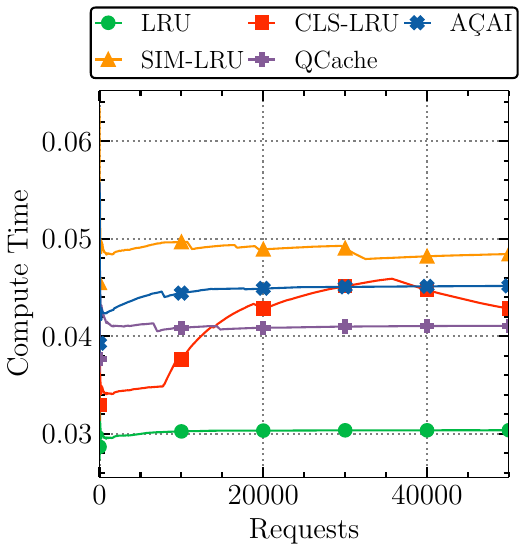}}
    \caption{\New{Time-averaged computation time of \acai{} and the different policies \lru, \simlru, \clslru, and \qcache (a)  w/o an approximate global catalog index, and (b) w/ an approximate global catalog index. Experiment run over the Amazon trace, cache size $h = 1000$, parameter $k = 10$, $\eta = 10^{-4}$, retrieval cost $c_f$ is set to be the distance to the $50$-th closest neighbor in $\mathcal{N}$.}}
    \label{fig:tc}
\end{figure}

\fi
\newpage

\end{document}